\documentclass[%
 reprint,
 superscriptaddress,
 amsmath,amssymb,
prb,
]{revtex4-2}
\usepackage{comment}
\usepackage{amsfonts}
\usepackage{graphicx}
\usepackage{physics}
\usepackage{color}
\usepackage{amsthm}
\usepackage{tikz}
\usepackage{tikz-cd}
\usepackage{relsize}
\usepackage{hyperref}
\usepackage{caption}
\usepackage{subcaption}
\usepackage{dutchcal}
\usepackage{algpseudocode,algorithm}

\newcommand{\beq}{\begin{equation}\begin{aligned}}
\newcommand{\eeq}{\end{aligned}\end{equation}}

\newtheorem{claim}{Claim}[section]
\newtheorem{definition}{Definition}
\newtheorem{proposition}{Proposition}[section]

\preprint{}

\begin{document}

\title{Symmetry-enriched topological order from partially gauging symmetry-protected topologically ordered states assisted by measurements}

\author{Yabo Li}
\affiliation{C. N. Yang Institute for Theoretical Physics, State University of New York at Stony Brook, New York 11794-3840, USA}
\affiliation{Department of Physics and Astronomy, State University of New York at Stony Brook, New York 11794-3840, USA}
\author{Hiroki Sukeno}
\affiliation{C. N. Yang Institute for Theoretical Physics, State University of New York at Stony Brook, New York 11794-3840, USA}
\affiliation{Department of Physics and Astronomy, State University of New York at Stony Brook, New York 11794-3840, USA}
\author{Aswin Parayil Mana}
\affiliation{C. N. Yang Institute for Theoretical Physics, State University of New York at Stony Brook, New York 11794-3840, USA}
\affiliation{Department of Physics and Astronomy, State University of New York at Stony Brook, New York 11794-3840, USA}

\author{Hendrik Poulsen Nautrup}
\affiliation{Institute for Theoretical Physics, University of Innsbruck, Technikerstr, 21a, A-6020 Innsbruck, Austria}
\author{Tzu-Chieh Wei}

\affiliation{C. N. Yang Institute for Theoretical Physics, State University of New York at Stony Brook, New York 11794-3840, USA}
\affiliation{Department of Physics and Astronomy, State University of New York at Stony Brook, New York 11794-3840, USA}

\date{\today}

\begin{abstract}
Symmetry protected topological (SPT) phases exhibit nontrivial short-ranged entanglement protected by symmetry and cannot be adiabatically connected to  trivial product states while preserving the symmetry. In contrast, intrinsic topological phases do not need ordinary symmetry to stabilize them and their ground states exhibit long-range entanglement. It is known that for a given symmetry group $G$, the 2D SPT phase protected by $G$ is dual to the 2D topological phase exemplified by the twisted quantum double model $D^{\omega}(G)$ via gauging the global symmetry $G$. Recently it was realized that such a general gauging map can be implemented by some local unitaries and local measurements when $G$ is a finite, solvable group. Here, we review the general approach to gauging a $G$-SPT starting from a fixed-point ground-state wave function and applying a $N$-step gauging procedure. We provide an in-depth analysis of the intermediate states emerging during the N-step gauging and provide tools to measure and identify the emerging symmetry-enriched topological order (SET) of these states. We construct the generic lattice parent Hamiltonians for these intermediate states, and show that they form an entangled superposition of a twisted quantum double (TQD) with an SPT ordered state. Notably, we show that they can be connected to the TQD through a finite-depth, local quantum circuit which does not respect the global symmetry of the SET order. We introduce the so-called symmetry branch line operators and show that they can be used to extract the symmetry fractionalization classes (SFC) and symmetry defectification classes (SDC) of the SET phases with the input data $G$ and $[\omega]\in H^3(G,U(1))$ of the pre-gauged SPT ordered state. We illustrate the procedure of preparing and characterizing the emerging SET ordered states for some Abelian and non-Abelian examples such as dihedral groups $D_n$ and the quaternion group $Q_8$.
 
\end{abstract}

\maketitle
\tableofcontents

\section{Introduction}

Topological order first originated from the study of the fractional quantum Hall effect~\cite{tsui1982two,Laughlin1983}. It cannot be described by local order parameters and is beyond Landau's classification of matter. It exhibits ground-state degeneracy dependent on the topology of the underlying manifold and the excitations, displaying anyonic statistics~\cite{Wen1990,wen1990ground}.  More recently, it was recognized as possessing some kind of long-range quantum entanglement~\cite{ChenGuLiuWen2013}  and having nonzero topological entanglement~\cite{hamma2005bipartite,kitaev2006topological,levin2006detecting}.
In addition to fractional quantum Hall systems and certain spin liquids~\cite{balents2010spin}, there are models that manifest topological order, such as Kitaev's toric code and quantum double (QD) models~\cite{Kitaev1997}, their twisted versions~\cite{PhysRevB.87.125114}, Levin-Wen string-nets~\cite{LevinWen2005} and more recently fractons~\cite{nankishore2019fractons,pretko2020fracton}. Topological features that characterize such a phase of matter are robust to local perturbations, which is a property highly desirable in quantum memories~\cite{Terhal2015}. Some of the topological models also offer the capability of topological quantum computation (TQC) by exploiting the braiding of anyons~\cite{Kitaev1997,NayakSimonSternFreedmanDasSarma2008, CuiHongWang2014, Mochon2003}, which has emerged as one of the schemes for fault-tolerant quantum computation due to its inherent robustness.

Interestingly, from the perspective of adiabatic connection and quantum circuits, ground states of different phases at zero temperature cannot be connected by either adiabatic evolution or a finite-depth quantum circuits~\cite{ChenGuWen2010}. Intrinsic topologically  ordered states therefore cannot be created from a trivial ground state, such as product states, with a quantum circuit of finite depth. When the circuits are required to respect a certain global symmetry, trivial gapped phases can be further fine-grained into distinct classes: those that can be created from product states with symmetric, finite-depth circuits and those that cannot. The latter classes are referred to as nontrivial symmetry-protected topological (SPT) phases~\cite{gu2009tensor,pollman2010entanglement,chen2012symmetry}, and most of them can be classified by cohomology~\cite{ChenGuLiuWen2013,else2014classifying}.

Quantum technology has been constantly improving and evolving.  Several medium-scale quantum computers are available.
Recently, certain topologically ordered states, such as those of the toric model, were created by quantum circuits~\cite{satzinger2021realizing}, and furthermore, some braiding statistics has also been observed in experiments~\cite{2018Natur.559..205B, 2018Natur.559..227K, https://doi.org/10.48550/arxiv.2210.10255,xu2022digital, iqbal2023creation}.
Yet, preparation of high-fidelity ground states and precise manipulation of excitations in topological systems still remain  challenging in the current era of noisy intermediate-scale quantum (NISQ) devices~\cite{Preskill_2018}.

For the family of  the QD models and their twisted versions, i.e., twisted quantum double (TQD) models~\cite{PhysRevB.87.125114}, there is a well-known correspondence to models of SPT phases~\cite{ChenGuLiuWen2013, PerezGarciaWolfSanzVerstraeteCirac2008, JiangRan2017, Williamson-Verstraete2016} via a procedure called gauging~\cite{LevinGu2012,HaegemanVanAcoleyenSchuchCiracVerstraete2015,shirley2019foliated,  Kogut1997}.
When two quantum states are topologically distinct with respect to a symmetry $G$, then they cannot be transformed to each other with a finite-depth, piecewise local unitary transformation that preserves the symmetry. 
The classification and characterization of SPT phases with global symmetry $G$ in two dimensions is facilitated by a function of $g_i\in G$, namely a 3-cocycle $\omega_3(g_1,g_2,g_3)$, which is a representative element of the 3rd cohomology group of $G$, denoted by $H^3(G,U(1))$~\cite{ChenGuLiuWen2013}.
At the same time, the TQD model is also characterized by the 3-cocycle: inequivalent choices of the 3-cocycle give rise to distinct intrinsic topological phases~\cite{PhysRevB.87.125114}.
Indeed, the wave function of the TQD model with the gauge group $G$ is obtained by gauging the global symmetry $G$ in the corresponding SPT wave function. 
We note that gauging by itself is not a unitary operation; an interesting question is, therefore, whether such a map is physically possible.

It has been known that the ground state of the toric code can be efficiently prepared by measuring half of qubits in a 2D cluster state~\cite{raussendorf2005long-range}. 
The use of measurements thus can provide a route to creating long-range entangled states with finite-depth operations~\cite{tantivasadakarn2021long, PhysRevLett.127.220503, ashkenazi2022duality}.
Ref.~\cite{verresen2021schorodinger} demonstrated that ground states of QD models with $S_3$ and $D_4$ groups can be prepared through finite-depth local unitary operations supplemented with on-site measurements. 
It was also argued in Ref.~\cite{tantivasadakarn2021long} that a similar procedure should work for QD models with any solvable group $G$, which was later elaborated in Ref.~\cite{Bravyi2022a} through repeated rounds of finite-depth operations, where each round incorporates unitaries, measurements, feedforward, and  corrections. This scheme was further generalized to the general TQD models with solvable groups in~\cite{Hierarchy}, where the number of measurement rounds was classified for various topological orders, leading to a conjecture of a new hierarchy of topological orders when one includes measurements as an ingredient.
It is worth mentioning that further improvement is possible for the QD models with $D_4$ and $Q_8$ groups, which can be prepared with a single round of measurements,  feedforward, and corrections~\cite{tantivasadakarn2022shortest}.
Experimentally, measurement-based gauging is a promising method for realizing nontrivial topological orders in small-scale systems requiring only local unitary operations, mid-circuit measurements, and feedforward corrections~\cite{iqbal2023topological,foss-feig2023experimental, iqbal2023creation}.

The present work re-examines the measurement-based gauging from the perspective of group representation theory and provides a characterization of the transformation and emergence of SPT, SET, and intrinsic topological order during gauging.  In general, for a solvable group $G$, the corresponding TQD model can be prepared from a $G$-SPT through a multi-step gauging procedure. In this work, we provide two approaches that realize such an $N$-step gauging which reduces to a one-step gauging when $G$ is abelian or to a two-step gauging when $G$ is dihedral. For non-solvable groups, it is argued that the measurement-assisted gauging procedure cannot be implemented by a finite-depth circuit~\cite{Hierarchy}. 

Interestingly, we find that the intermediate states, that emerge during the multi-step gauging, can be naturally described as symmetry-enriched topological (SET) orders~\cite{2013PhRvB..87j4406E, MesarosRan2013, 2013arXiv1302.2634L, Barkeshli_2019}. We also show that, without respecting global symmetry, there is a finite-depth quantum circuit that takes the SET ground state to a ground state of a corresponding twisted quantum double model (TQD).

The essential data of an SET order, besides the intrinsic anyon theory $\mathcal{C}$, include the symmetry action as an automorphism on $\mathcal{C}$, the symmetry fractionalization class, and the defectification class~\cite{Barkeshli_2019}. A key result of our work is to characterize the resulting SET order given the 3-cocycle that describes the initial SPT wave function. 
If the emergent SET order has a global symmetry that \emph{does not} change the anyon type, we develop a general formalism based on symmetry branch line operators for the braiding phases between any abelian anyon in the theory and the anyons obtained from fusing point defects, exactly characterizing the symmetry fractionalization patterns. If the SET order we enter has a global symmetry that \emph{does} change anyon types, we conjecture the form and algebra of non-abelian symmetry branch line operators that can create the corresponding symmetry  defects. Then, by calculating the tensor product of such operators, one can derive the fusion rules of these symmetry defects, which we believe is sufficient to characterize the symmetry fractionalization patterns. We consider the dihedral SPT states as an example to illustrate this case.

The remainder of this paper is organized as follows. In Sec.~\ref{sec:GaugingMap}, we review the duality between SPT states with global symmetry group $G$ and ground states of a twisted quantum double model with a gauge group $G$ in two dimensions. This duality is given by a formal gauging map, which turns the global symmetry $G$ into a gauge symmetry. In Sec.~\ref{sec:N-stepGauging}, we describe the general procedure of $N$-step gauging $G$-SPT ordered states when $G$ is a solvable group in terms of an algorithm (see Algorithm~\ref{alg:gauging} below). In Sec.~\ref{sec:GaugingAbelian} and Sec.~\ref{sec:GaugingDihedral}, we discuss 1-step and 2-step gauging respectively, and consider Abelian and dihedral groups as illustrative examples. For the latter, we find that after the first gauging step, the system remains in a SET state where the remaining quotient group describes the global symmetry. Sec.~\ref{sec:SymmetryDefect} contains the discussion on symmetry properties of the emergent SET phases from the perspective of symmetry defects. Using the framework of symmetry branch lines, we relate the transformation of symmetry defects under gauging to properties of the SET phase. We give several examples to illustrate our formalism. In Sec.~\ref{sec:conclusion}.  we make some concluding remarks.
The Appendix provides materials that support the results in the main text. For example,
we provide a constant-depth unitary circuit to map an SET state to a TQD state in Appendix~\ref{sec:local unitary}.
 In Appendix~\ref{N-step}, We also give an alternative gauging prescription based on a different presentation of solvable groups which is alternative but equivalent to the standard one, as proven in Appendix~\ref{proof:solvale=seqnormal}.

\section{Fixed-point SPTs, twisted quantum doubles and gauging}\label{sec:GaugingMap}

On an oriented triangulated lattice $\Lambda$, given a finite group $G$, we assign a Hilbert space $\mathcal{H}_v=\{\sum_{g\in G}c_g\ket{g}_v|c_g\in \mathbb{C}\}$ to each vertex $v$. Then we can write a fixed-point $G$-SPT wave function. To do this, we first assign a group cocycle to each simplex, where $\omega$ is a representative in $H^3(G,U(1))$ respecting the cocycle condition,
\beq
    \frac{\omega(h,k,l)\omega(g,hk,l)\omega(g,h,k)}{\omega(gh,k,l)\omega(g,h,kl)}=1,
\eeq
for any $g,h,k,l\in G$.

\begin{figure}[h]
    \centering
    \includegraphics[width=0.8\linewidth]{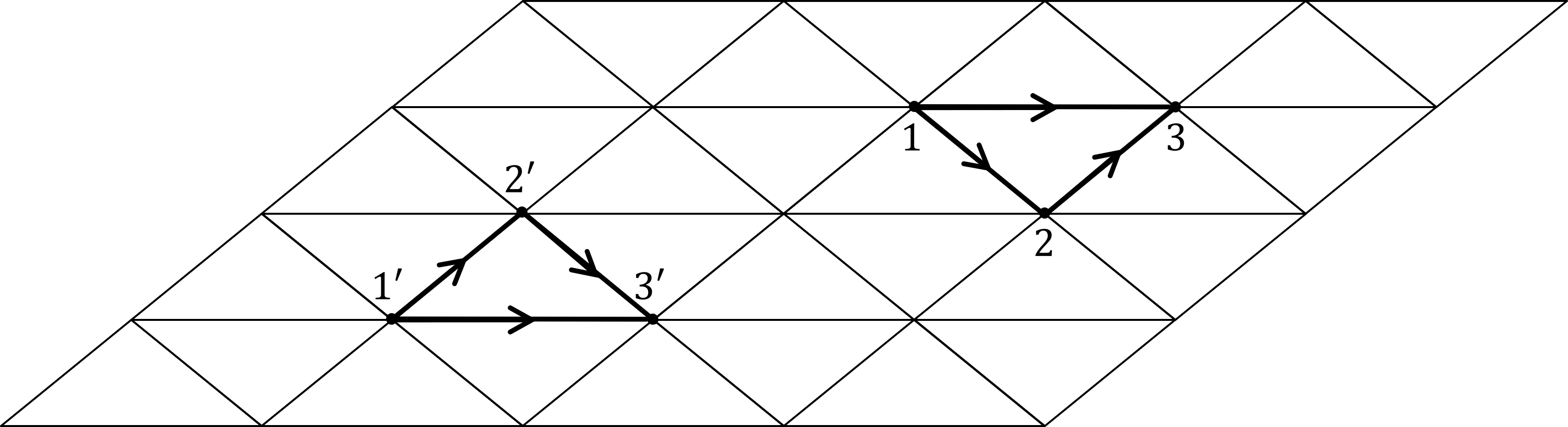}
    \caption{On an oriented triangulated plane, two typical simplexes with opposite orientations are shown. Their corresponding cocycles are $\omega(g_3g_2^{-1},g_2g_1^{-1},g_1)$ and $\omega(g_{3'}g_{2'}^{-1},g_{2'}g_{1'}^{-1},g_{1'})^{-1}$.  }
    \label{fig:simplex}
\end{figure}

The fixed-point SPT wave function is given by taking a product over all such cocycles,  
\beq
\label{eq:PsiSPT}
    \ket{\Psi_{\text{SPT}}}=\sum_{\{g_v\}}\prod_{\text{simplex}\,\Delta_{123}}\omega(g_3g_2^{-1},g_2g_1^{-1},g_1)^{s(\Delta_{123})}\bigotimes_{v} \ket{g_v}_v,
\eeq
where $s(\Delta)=\pm1$ indicates the orientation of a simplex $\Delta_{123}$ (with a given branching structure, and $\{1,2,3\}$ labels the vertices on the simplex), the tensor product runs over all vertices $v$ on the lattice, and all the configurations $\{g_v\}$ are summed over. Note that we use a convention from Ref.~\cite{williamson2016matrix}, which is slightly different from Ref.~\cite{ChenGuLiuWen2013}, for the sake of convenience in later discussions. This state can be obtained by the action of a unitary operator $U_{\omega}$  on the product state $\bigotimes_v\sum_{g}\ket{g}_v$,
\beq
    U_{\omega}=\sum_{\{g_v\}}\prod_{\Delta_{123}}\omega(g_3g_2^{-1},g_2g_1^{-1},g_1)^{s(\Delta_{123})}\bigotimes_{v} \ket{g_v}_v\bra{g_v}.
    \label{eq:cocycleunitary}
\eeq

We define the left/right action of $x$ on $\mathcal{H}_v$ as
\beq
    L^x_{+v}\ket{g}_v=\ket{xg}_v,~L^x_{-v}\ket{g}_v=\ket{gx^{-1}}_v.
\eeq
Then the global symmetry action (in our convention) $U^x\equiv\prod_v L^x_{-v}$ on SPT state yields
\beq
    U^x \ket{\Psi_{\text{SPT}}}&=\sum_{\{g_v\}}\prod_{\Delta} \omega^{s(\Delta)}(\{g_v\})\bigotimes_v\ket{g_v x^{-1}}_v
    \\
    &=\sum_{\{g_v\}}\prod_{\Delta} \omega^{s(\Delta)}(\{g_v x\})\bigotimes_v\ket{g_v}_v
    \\
    &\equiv\sum_{\{g_v\}}\prod_{\Delta}Amp(\{g_v\},x)\omega^{s(\Delta)}(\{g_v \})\bigotimes_v\ket{g_v}_v,
\eeq
where in the second line we used a change of variables and we have defined a phase factor $Amp$ in the fourth line. 

Suppose $M$ is the two-dimensional spatial manifold on which the Hilbert space is defined, and $I=\{x_3 | 0\leq x_3 \leq 1\}$ is an interval in the (Euclidean) time direction.
The manifold $M\times I$ is now three-dimensional.
We triangulate the $M\times I$ by 3-simplexes (tetrahedrons) with the constraint that each time slice at $x_3=0$ and $x_3=1$ matches the original two-dimensional lattice; 
see Fig.~\ref{fig:amplitude}.
The amplitude $\text{Amp}$ is computed once the triangulation of $M\times I$ is specified.
We now give more details.

\begin{figure}[h]
    \centering
    \includegraphics[width=0.8\linewidth]{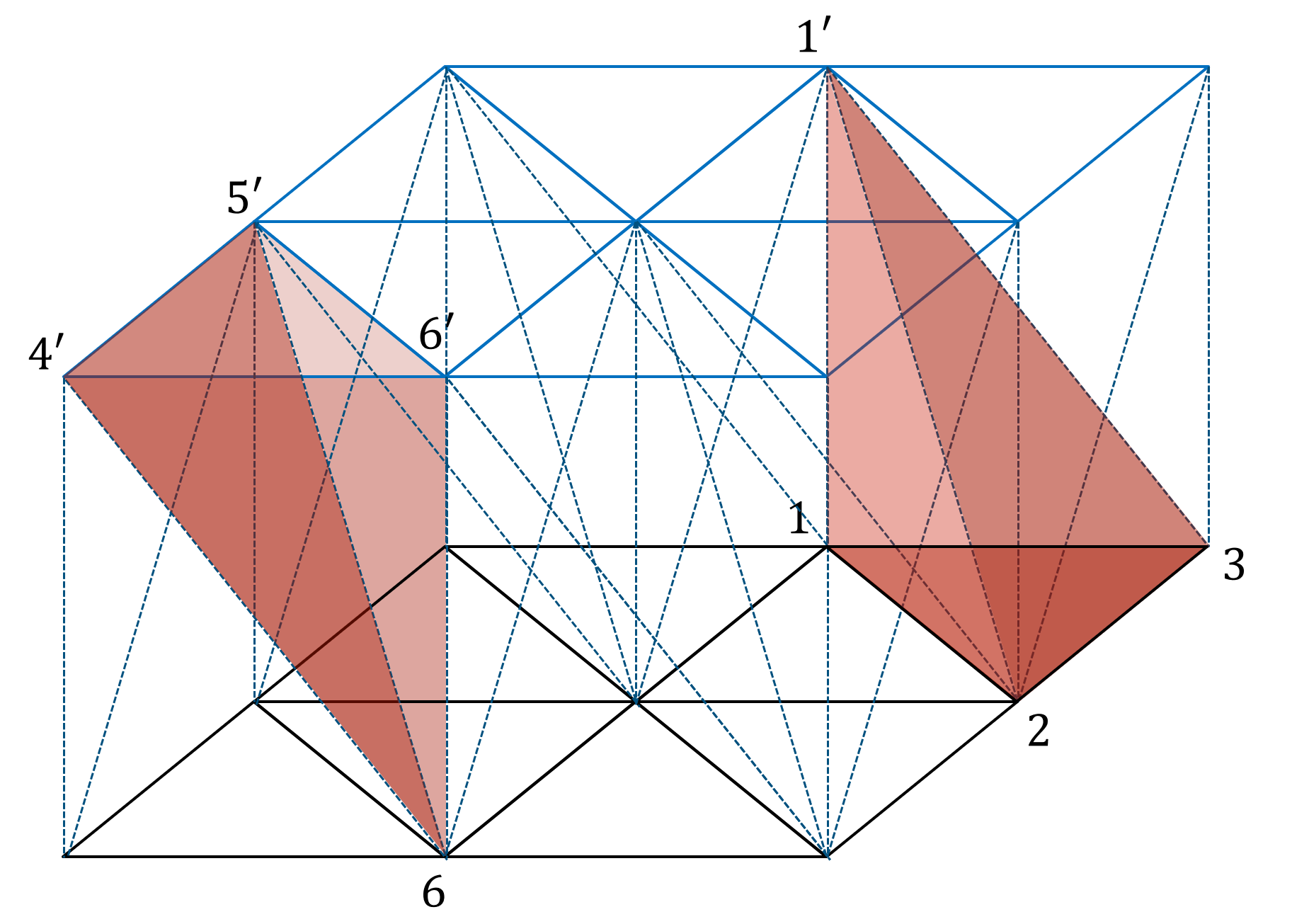}
    \caption{The phase factor $Amp(\{g_v\},x)$ can be given by the triangulation of such prisms, where $g_{v'}=g_v x$.   }
    \label{fig:amplitude}
\end{figure}

We assign a 3-cocycle to each tetrahedron as in Fig.~\ref{fig:tetrahedron}. The phase factor can be seen to be~\cite{ChenGuLiuWen2013}
\beq
    Amp(\{g_v\},x)=\prod_{\text{tetrahedron}}\omega(\text{tetra})^{s(\text{tetra})}.
\eeq
\begin{figure}[h]
    \centering
    \includegraphics[width=0.8\linewidth]{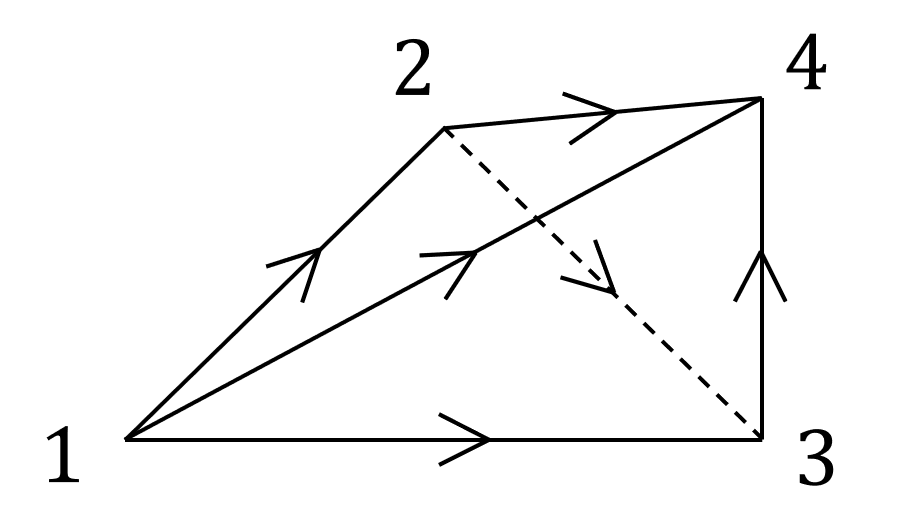}
    \caption{A positively oriented tetrahedron. The 3-cocycle assigned to it is $\omega(g_4g_3^{-1},g_3g_2^{-1},g_2g_1^{-1})$.}
    \label{fig:tetrahedron}
\end{figure}
When the spatial manifold is closed, using cocycle conditions, one can show that $Amp\equiv 1$. Therefore, the SPT state is invariant under the global symmetry transformation,
\beq
    U^x\ket{\Psi_{\text{SPT}}}=\ket{\Psi_{\text{SPT}}}.
\eeq
Due to this symmetry, this state can be written schematically as  (where we have suppressed the indices in $\Delta$ for simplicity),
\begin{align}
    |\Psi_{\text{SPT}}\rangle =&\sum_{\{g_v\}}\prod_{\text{simplex}\Delta}\omega(g_3g_2^{-1},g_2g_1^{-1},g_1)^{s(\Delta)}\bigotimes_{v} \ket{g_v}_v\nonumber\\
    =&\sum_{\{g_v\}} \Omega(\{g_vg_{v'}^{-1}\}) 
    \bigotimes_{v} |g_v\rangle_v, 
    \label{eq:SPT-wave-function}
\end{align}
where $\Omega$
denotes the product of cocycles, and $v$ and $v'$ are vertices connected by an edge, $\langle v, v' \rangle \in E$.

\subsection{Gauging global symmetry $G$}

\begin{figure}[h]
    \centering
    \includegraphics[width=0.8\linewidth]{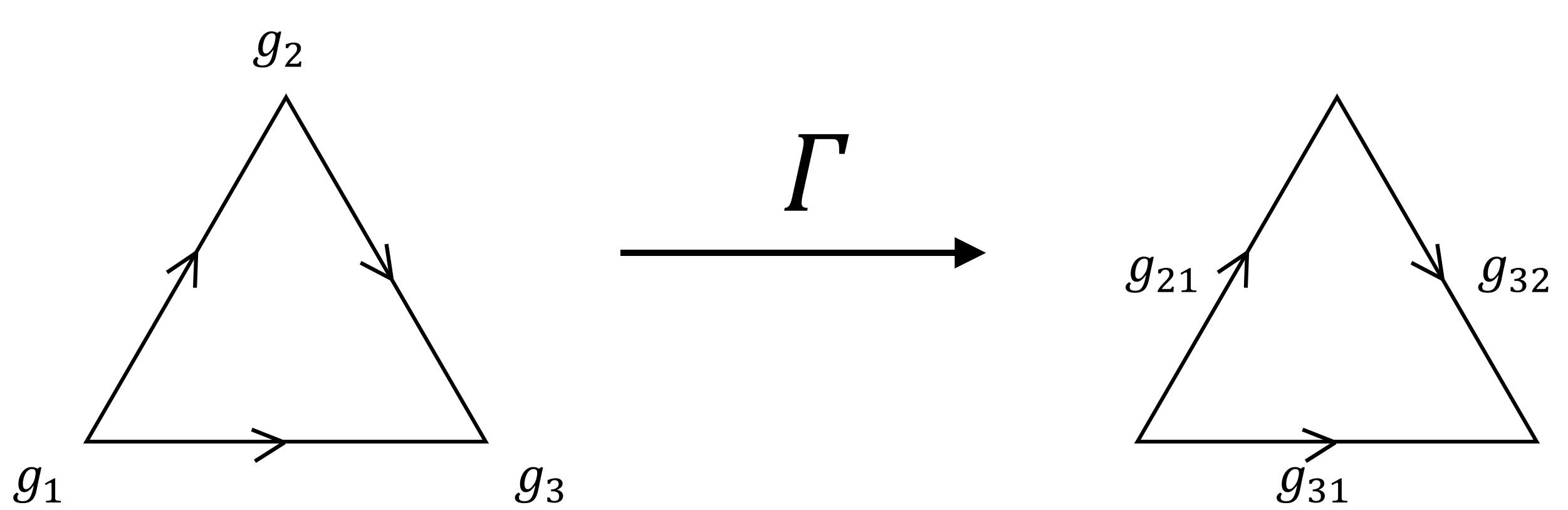}
    \caption{The gauging map $\Gamma$ maps from vertex DOFs to edge DOFs}
    \label{fig:gaugingmap}
\end{figure}

Under a gauging map as shown in Fig.~\ref{fig:gaugingmap}, the vertex degrees of freedom (DOFs) are mapped to the edge DOFs, 
\beq
    \Gamma:\ \  \ket{\{g_i\}}_v\rightarrow \ket{\{g_ig_j^{-1}\}}_e.
\eeq
This in turn maps the SPT state to an intrinsic topologically ordered state~\cite{YOSHIDA2017387}
 \beq
    \ket{\Psi_{\text{TQD}}}=\sum_{\{g_e\}}\Omega(\{g_e\})\bigotimes_e\ket{g_e},
    \label{TQDwfn}
 \eeq
 which is a ground state of the twisted quantum double (i.e., described by the Dijkgraaf-Witten theory) $D^{\omega}(G)$~\cite{dijkgraaf1990topological}.
The twisted quantum double can be formulated on a triangulated lattice with a Kitaev's Quantum Double-like Hamiltonian~\cite{PhysRevB.87.125114},
\beq
H=-\sum_{v}A_v-\sum_{p}B_p,
\eeq
where $v$ and $p$ stand for the vertices and plaquettes, respectively, on the lattice. The vertex operator  
\beq
A_v=\frac{1}{|G|}\sum_{g\in G}\big(\prod_{e\supset v}L^g_{\pm e}\big) \Tilde{W}_v^g,
\label{eq:vertexoperator}
\eeq
is hermitian and is a projector (see appendix~\ref{TQDprop}),
where $L^g_{+e}$ and $L^g_{-e}$ are left and right action of the group element $g$ on the edge $e$. 
When $e$ emanates from the vertex $v$ to another vertex, we apply $L^g_{-e}$ in Eq.~(\ref{eq:vertexoperator}), 
When $e$ flows to the vertex $v$, we apply $L^g_{+e}$.

\begin{figure}[h]
    \centering
    \includegraphics[width=0.8\linewidth]{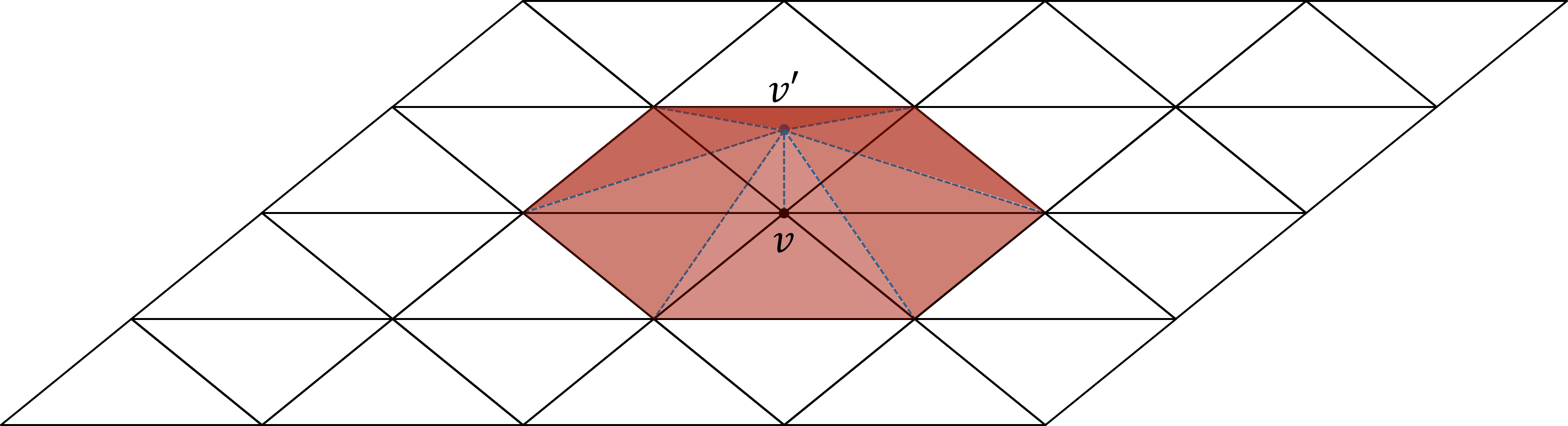}
    \caption{The phase $\Tilde{W}^g_v$ is defined as the multiplication of the phases corresponding to the tetrahedrons. Here, $g_{v'v}=g$.}
    \label{fig:W-phase}
\end{figure}

\begin{figure}[h]
    \centering
    \includegraphics[width=0.8\linewidth]{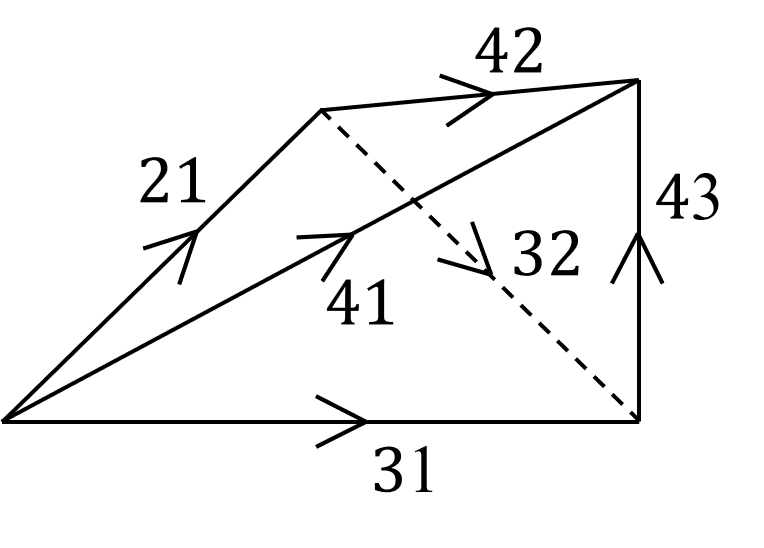}
    \caption{A negatively oriented tetrahedron. The 3-cocycle assigned to it is $\omega^{-1}(g_{43},g_{32},g_{21})$.}
    \label{fig:tetrahedron2}
\end{figure}

The phase $\Tilde{W}^g_v$ is a product of the cocycles corresponding to the tetrahedrons with appropriate orientations in the prism in Fig.~\ref{fig:W-phase}, where the correspondence between a tetrahedron and a 3-cocycle is established in Fig.~\ref{fig:tetrahedron2}. 
Furthermore, this phase factor is the commutator between the right action of $g$ on vertex $v$ and the unitary operator introduced in Eq.~(\ref{eq:cocycleunitary}),
\beq
    \Tilde{W}^g_v=(\prod_{e\supset v}L^g_{\pm e})^{\dagger}U_{\omega}(\prod_{e\supset v}L^g_{\pm e})U_{\omega}^{\dagger},
\eeq
where we use $L^g_{+e}$ ($L^g_{-e}$) when the edge $e$ ends at (emanates from) vertex $v$.

The plaquette operator is
\beq
B_p=\delta\Big(\prod_{e\in p} g_e,1\Big),
\label{eq:BpTQD}
\eeq
where $\delta(x,y)$ is the Kronecker delta function. The resultant state from the gauging map is the ground state of this Hamiltonian,
\beq
    H \,\Gamma(\ket{\Psi_{\text{SPT}}})= E_0 \,\Gamma(\ket{\Psi_{\text{SPT}}}).
\eeq

The local excitations of TQD model are fractional charges called anyons, which can be classified by a unitary modular tensor category (UMTC); see, e.g.,~\cite{lin2021generalized}. 
One thing to remark is that the convention here is slightly different from the one used in \cite{YOSHIDA2017387} and \cite{PhysRevB.87.125114} for the sake of convenience in later discussions.

\subsection{Gauging a subgroup of $G$}
One can introduce a gauging map $\Gamma_N$ that corresponds to gauging only a normal subgroup $N$ of $G$. 
We have the quotient group $Q=G/N$ with an embedding 
\beq
\label{eq:embedding}
    s: Q\rightarrow G.
\eeq
Any element $g\in G$ has a unique decomposition $g=qn$, where $q\in s(Q)$ and $n\in N$.  Under the map, the normal part of vertex DOFs are mapped to edge DOFs, as illustrated in Fig.~\ref{fig:gaugingmap_n},
\beq
    \Gamma_N:\ \ \ket{\{g_i\}}_v\rightarrow\ket{\{q_i\}}_v\otimes\ket{\{n_i n_j^{-1}\}}_e.
    \label{eq:Gamma_N}
\eeq
\begin{figure}[h]
    \centering
    \includegraphics[width=0.9\linewidth]{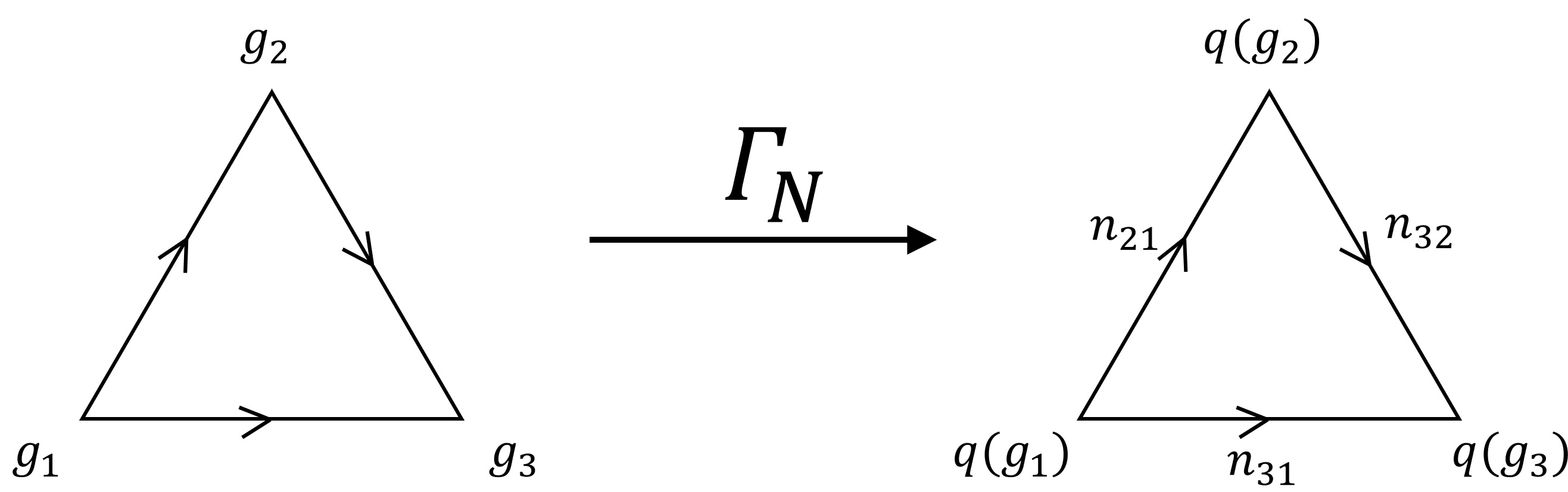}
    \caption{The gauging map $\Gamma_N$ maps the normal part from vertex DOFs to edge DOFs.}
    \label{fig:gaugingmap_n}
\end{figure}
This maps the SPT state to
 \beq
    \ket{\Psi_{\text{SET}}}=\sum_{\{q_v\}}\sum_{\{n_e\},\text{fluxless}}\Omega(\{q_i n_e q_j^{-1}\})\bigotimes_v\ket{q_v}\bigotimes_e\ket{n_e}.
 \eeq

One point to notice is that the above state has a global symmetry $Q$ under action $U^x_Q=U^x \sigma^x$, where $U^x\equiv \prod_v L^x_{-v}$ is the right action of $x\in s(Q)$ on all vertices  (e.g., $g\rightarrow g x^{-1}$) and $\sigma^x\equiv \prod_e\sigma^x_e$ is the conjugation by $x$ on all edges defined as
\beq
    \sigma^x_e\ket{n}_e=\ket{xnx^{-1}}_e.
\eeq
The second point is that the state $\ket{\Psi_{\text{SET}}}$ is a ground state of a Kitaev's Quantum Double-like Hamiltonian,
\beq
H=-\sum_{v}A_v-\sum_{p}B_p-\sum_{v}K_v,
\label{eq:QDSET}
\eeq
where $v$ and $p$ stand for the vertices and plaquettes on the lattice. 

The vertex operator is 
\beq
A_v=\frac{1}{|N|}\sum_{n\in N}\sum_{q\in s(Q)}\big(\prod_{e\supset v}L^n_{\pm e}\big) W_v^{q nq^{-1}}\ket{q}_v\bra{q}.
\label{eq:AvSET}
\eeq
The phase $W^g_v$ is the product of the cocycles corresponding to the tetrahedrons with appropriate orientations of the prism in Fig.~\ref{fig:W-phase1}, where the correspondence between tetrahedron and 3-cocycle is established in Fig.~\ref{fig:tetrahedron4}.
\begin{figure}[h]
    \centering
    \includegraphics[width=0.8\linewidth]{W-phase.png}
    \caption{The phase $W^g_v$ is defined as the multiplication of the phases corresponding to the tetrahedrons. $h_{v'v}=x$, $q_{v'}=q_v$.}
    \label{fig:W-phase1}
\end{figure}
\begin{figure}[h]
    \centering
    \includegraphics[width=0.8\linewidth]{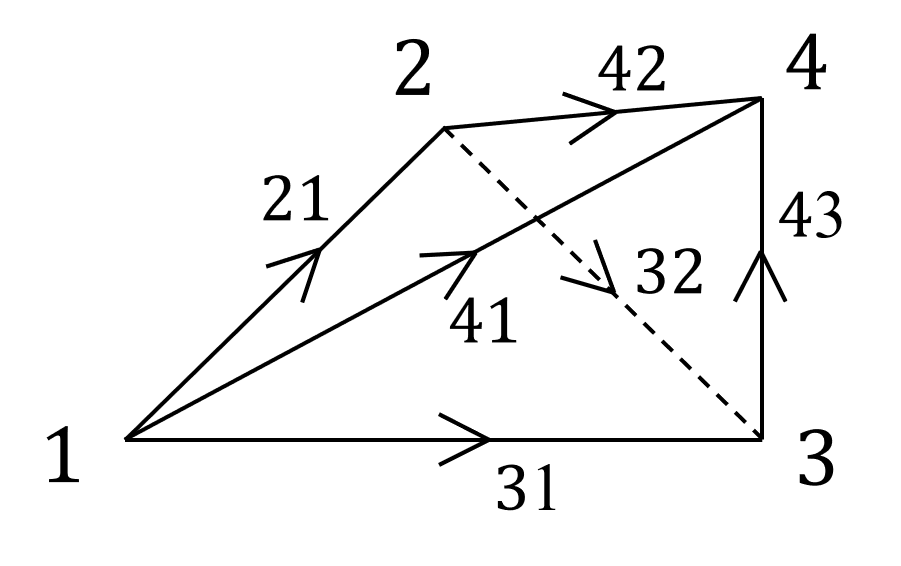}
    \caption{A negatively oriented tetrahedron. The 3-cocycle assigned to it is $\omega^{-1}(g_{4}h_{43}g_3^{-1},g_{3}h_{32}g_2^{-1},g_2 h_{21}g_1^{-1})$.}
    \label{fig:tetrahedron4}
\end{figure}
The plaquette operator is simply the following,
\beq
B_p=\delta\Big(\prod_{e\in p} n_e,1\Big).
\label{eq:BpSET}
\eeq
Suppose $|Q|=m$, we label the embedding of $Q$ in $G$ as $s(Q)=\{q_0,\cdots,q_{m-1}\}$, where  $q_0=1$. The additional vertex operator $K_v$ is
\beq
    K_v=\frac{1}{m}\sum_{k,l=0}^{m-1} W_v^{q_k q_l^{-1}}\ket{q_k}_v\bra{q_l}.
    \label{eq:KvSET}
\eeq

We can always apply a finite-depth local unitary to bring all the vertex DOFs to the identity element (see appendix~\ref{sec:local unitary}) such that the state becomes
\beq
    \ket{\Psi_{\text{TQD}}}=\sum_{\{n_e\}}\Omega(\{n_e\})\ket{\{n_e\}}_e\otimes_v\ket{1}_v.
\eeq
This is a TQD state with the 3-cocycle $\nu(n_1, n_2,n_3)$ being the restriction of $\omega(g_1,g_2,g_3)$ on subgroup $N$. 
Thus we obtain the anyons and their braiding, which is the same as in $D^{\nu}(N)$. 
Furthermore, the state $\ket{\Psi_{\text{SET}}}$ is essentially a ground state in an SET phase with the global $Q$-symmetry. 

\subsection{Classification of SETs}
\label{sec:SET classification}

Here we briefly review some terminology relevant to SET phases for the convenience of later discussions. This section will be based on Refs.~\cite{Barkeshli_2019, chen2017symmetry}. In general, assuming the symmetry preserves locality, an SET phase is determined by its anyon set (UMTC) $\mathcal{C}$, its $Q$-symmetry action as an automorphism on $\mathcal{C}$, symmetry fractionalization class (SFC) and symmetry defectification class (SDC)~\cite{Barkeshli_2019}. We first assume the symmetry actions are unitary and always give the trivial automorphism on $\mathcal{C}$, i.e., the symmetry does not change anyon types. 
Consider a state $\ket{\Psi_{a,b,c,...}}$ with anyons $\{a,b,c,...\}$ present sufficiently far away from each other on a sphere. 
Therefore, the anyons $\{a,b,c,...\}$ should be able to fuse into the vacuum charge. The symmetry operators respect the multiplication rules $U(g)U(h)=U(gh)$. Under our assumptions, the symmetry operator can be decomposed as some local unitaries $U_a(g), U_b(g), ...$, near the anyons,
\beq
    U(g)\ket{\Psi_{a,b,c,...}}=U_a(g)U_b(g)\cdots\ket{\Psi_{a,b,c,...}}.
\eeq

Each local symmetry action can be projective,
\beq
    U_a(g)U_a(h)=\eta_a(g,h)U_a(gh),
\eeq
where the phase $\eta_a(g,h)$ only depends on anyon type $a$ and satisfy
\beq
    \eta_a(g,h)\eta_b(g,h)=\eta_c(g,h),
\eeq
whenever the multiplicity is $N^c_{ab}\neq 0$. 
It is proved in Ref.~\cite{Barkeshli_2019} that a phase $\eta_a(g,h)$ with the above properties is related to the braiding phase between anyon $a$ and some other abelian anyon $\mathcal{w}(g,h)$ in the theory,
\beq
    \eta_a(g,h)=B(\mathcal{w}(g,h),a).
\eeq

We note that one can redefine the local unitary $U_a(g)$ by an arbitrary phase factor $v_a(g)$,
\beq
    U'_a(g)=v_a(g)U_a(g),
\eeq
where $v_a(g)$ satisfies
\beq
    v_a(g)v_b(g)=v_c(g),
\eeq
whenever the multiplicity is $N^c_{ab}\neq 0$. Again, the phase factor $v_a(g)$ can be written as a braiding phase between anyon $a$ and an abelian anyon $\mathcal{v}(g)$, i.e., $v_a(g)=B(\mathcal{v}(g),a)$.
The abelian anyon after the redefinition will be
\beq
    \mathcal{w}'(g,h)=\frac{\mathcal{v}(g)\mathcal{v}(h)}{\mathcal{v}(gh)}\mathcal{w}(g,h).
\eeq
Further, according to the associativity condition $(U_a(g)U_a(h))U_a(k)=U_a(g)(U_a(h)U_a(k))$, we have
\beq
    \frac{\mathcal{w}(h,k)\mathcal{w}(g,hk)}{\mathcal{w}(gh,k)\mathcal{w}(g,h)}=1.
\eeq

To conclude, the distinctive patterns of symmetry fractionalization are characterized by the class $[\mathcal{w}(g,h)]$ in cohomology group $H^2(Q,\mathcal{A})$, where $\mathcal{A}$ is the group formed by abelian anyons via fusion algebra~\cite{chen2017symmetry}.

Another way to see the symmetry fractionalization classes is to construct a $Q$-graded category $\mathcal{C}^{\times}_Q$ from the anyon theory including the point defects
\beq
    \mathcal{C}_Q=\bigoplus_{q\in Q}\mathcal{C}_q,
\eeq
where $\mathcal{C}_{\openone}=\mathcal{C}$ and 
$\openone$ denotes the identity element in $Q$. A distinctive $Q$-graded category $\mathcal{C}^{\times}_Q$ is a candidate for an SET order. According to Ref.~\cite{Barkeshli_2019}, when the symmetry \emph{does not} change anyon types, one can always choose an abelian defect from each sector $\mathcal{C}_q$ and label it as $0_q$. 
The fusion of defects respects the group multiplication structure, 
\beq
    0_g \times 0_h = \mathcal{w}(g,h)_{\openone} \times 0_{gh},
\eeq
for some abelian anyon $\mathcal{w}$. Furthermore,
\beq
    a_g \times b_h =\sum_{c\in \mathcal{C}} N^{c}_{a b} \,c \times \mathcal{w}(g,h)_{\openone} \times 0_{gh}.
    \label{eq:anyonfusion}
\eeq
This abelian anyon $\mathcal{w}(g,h)$ is exactly what we have defined above for projective phase $\eta_a(g,h)$. The class $[\mathcal{w}(g,h)]$ is in the cohomology group $H^2(Q,\mathcal{A})$, which classifies the SFC.

In generic cases, when the symmetry \emph{does} change anyon types as an automorphism of $\mathcal{C}$,
\beq
    \rho: Q \rightarrow Aut(\mathcal{C}),
\eeq
it turns out that not every sector $\mathcal{C}_q$ has an abelian object. Therefore, we cannot write the fusion rule as in Eq.~\eqref{eq:anyonfusion}. 

The fusion rule of a $Q$-graded category $\mathcal{C}^{\times}_Q$ can be written as
\beq
    a_g \times b_h =\sum_{c_{gh}}N^{c_{gh}}_{a_g b_h}c_{gh}.
\eeq
Consequently, each element $[\mathcal{t}]\in H^2_{\rho}(Q,\mathcal{A})$ specifies a potential way of modifying $\mathcal{C}^{\times}_Q$ (the SET order) via
\beq
    a_g \times b_h =\mathcal{t}(g,h)\times\sum_{c_{gh}}N^{c_{gh}}_{a_g b_h}c_{gh}.
    \label{eq:SFCgeneral}
\eeq

Therefore, in generic cases, the potential symmetry fractionalization classes are elements of an $H^2_{\rho}(Q,\mathcal{A})$ torsor. In this work, we will not analyze the SDC in detail, and we simply note that one can enter a different SDC by stacking a $Q$-SPT state onto the SET state. In our framework, after gauging the normal subgroup, a global $x$-transformation will locally serve as an automorphism $\rho_x$ of $\mathcal{C}$, mapping an anyon with flux $n$ to an anyon with flux $xn x^{-1}$. Later in this work, we will analyze the phases of some SETs in which the automorphism $\rho_x$ on $\mathcal{C}$ is either trivial or nontrivial.

\section{$N$-step gauging of 2D SPT via measurement}\label{N-stepGauging}
\label{sec:N-stepGauging}

\begin{figure*}
\begin{minipage}{\linewidth}
\begin{algorithm}[H]
\begin{algorithmic}
\Require (a) Solvable $G$ with $1=G_0<G_1<\cdots <G_N=G$ such that $Q_k\equiv G_k/G_{k-1}$ is abelian $\forall k=1,...,N$.\newline 
(b) $G$-SPT fixed point state $\ket{\Psi_\textrm{SPT}}$ (defined in Eq.~\ref{eq:SPT-wave-function}) on a lattice $(V,E)$ with vertices $i,j\in V$ and edges $\langle i,j\rangle\in E$.
\newline
(c) $U^{(i,j)}_{Q_k}$ as defined in Eq.~\ref{eq:gauge_U}\newline
(d) Generalized Pauli operators as defined in Eqs.~\ref{eq:gauge_X} and~\ref{eq:gauge_Z} acting on the abelian subspace defined by an embedding $s_k$ of $Q_k$ in $G_k$.

\vspace{0.2cm}
\State $k \gets 1$
\State $\textrm{(1) Add ancillas: } \ket{\Psi_\textrm{gauge}} \gets \ket{\Psi_\textrm{SPT}}\otimes \prod_{\langle i,j\rangle\in E}\ket{e}_{\langle i,j\rangle} \textrm{ where } e\in G \textrm{ is the identity element.}$\vspace{0.2cm}
\While{$k\leq N$}
    \State $\textrm{(2) Entangle vertex and edge DOFs: } \ket{\Psi_\textrm{gauge}}\gets \prod_{\langle i,j\rangle \in E}U^{(i,j)}_{Q_k}\ket{\Psi_\textrm{gauge}}$\vspace{0.2cm}
    \State $\textrm{(3) Measure vertex DOFs in the basis given by Eq.~\ref{eq:gauge_fourier_basis} with outcomes $\{\tilde{i}^{v}_1,...,\tilde{i}^{v}_{l_k}\}_{v\in V}$ (neglecting normalization): }\newline \ket{\Psi_\textrm{gauge}}\gets\prod_{v\in V} \ket{\tilde{i}^{v}_1,...,\tilde{i}^{v}_{l_k}}\bra{\tilde{i}^{v}_1,...,\tilde{i}^{v}_{l_k}}\ket{\Psi_\textrm{gauge}} $\vspace{0.2cm}
    \State $\textrm{(4) Correct for random measurement outcomes by applying $Z$-operators on a set of edges $E_{\rm Cor}$: }\newline \ket{\Psi_\textrm{gauge}}\gets\prod_{e\in E_{\rm Cor}}\mathbb{Z}_e \ket{\Psi_\textrm{gauge}}$, where $\mathbb{Z}_e=\prod_{j=1}^{l_k} Z_{j}^{-{p}_{j;e}}$ (specifically, $E_{\rm Cor}$ and ${p}_{j;e}$) can be deduced from the measurement outcomes, given the symmetries (e.g., Eq.~\ref{eq:zn-sym-const}) and so-called \emph{transmutation rules} (e.g., Figs.~\ref{fig:Zn-transmutation},~\ref{fig:transmutation_n} or~\ref{fig:transmutation_q}). \vspace{0.2cm}
    \State $k=k+1$ \Comment{$\ket{\Psi_\textrm{gauge}} \textrm{ is an SET state as analyzed in Sec.~\ref{sec:SymmetryDefect} } \forall k<N$}
\EndWhile

\end{algorithmic}
\caption{$N$-step gauging via measurements}
\label{alg:gauging}
\end{algorithm}
\end{minipage}
\end{figure*}


In this section, we present the procedure to gauge a $G$-SPT state of a group $G$ that can be factorized into $N$ abelian groups with $N$ steps. (We note in this section, $N$ refers to the number of steps rather than a normal subgroup. But it should be clear from the context.) 
A similar method was proposed by~\cite{Hierarchy} and~\cite{Bravyi2022a}. 
In~\cite{Hierarchy}, the authors considered the solvable group $G$ and its derived series which consists of normal subgroups which are commutator subgroups of the previous group in the series. 
They proposed a gauging procedure for a particular sequence of normal subgroups. 
In~\cite{Bravyi2022a}, on the other hand, they proposed to implement the gauging procedure for a solvable group inductively, i.e., implement the gauging of a cyclic group, assuming the remaining quotient group is already gauged.  
In our procedure, we do not restrict ourselves to a particular derived series for the solvable group. 
This in turn helps us to prepare different types of SETs.
We give the steps for gauging a $G$-SPT state explicitly. 

Before presenting the gauging procedure in Algorithm~\ref{alg:gauging}, let us go through the most relevant definitions first. A group $G$ is a solvable group if there are subgroups $1=G_0<G_{1}<\cdots<G_N=G$ such that $G_{k-1}$ is normal in $G_{k}$, and $G_{k}/G_{k-1}\equiv Q_k$ is abelian for $k=1,\cdots,N$. 
Given the embedding map $s_k$ from each $Q_k$ into $G_k\subset G$, every element $g\in G$ can be written as 
\beq \label{eq:embedding-N}
    g=q_{N} q_{N-1} \cdots q_2 q_1,
\eeq
where $q_k\in s_k(Q_k)$. 
Similarly, for another group element $h$, we have the decomposition $h=\tilde{q}_N\cdots \tilde{q}_1$. Under this convention, we write down the  multiplication between $g$ and $h^{-1}$ as
\beq
    gh^{-1}&=q_N\cdots q_3 q_2 \Big(q_1 \tilde{q}^{-1}_1\Big) \tilde{q}^{-1}_2\tilde{q}^{-1}_3 \cdots \tilde{q}^{-1}_N.
    \label{ghinverse0}
\eeq

Now, let us define the relevant unitaries and measured observables that will be used in the gauging procedure. We typically consider a state defined on a lattice $(V,E)$ with vertices $i,j\in V$ and edges $\langle i,j\rangle\in E$ where the local Hilbert space $\ket{g}$ depends on the group $G$ and is labeled by its group elements $g\in G$. Given an embedding $s_k$ of $Q_k$ into $G_k$ as described above, we can define for all $Q_k$ the following unitary controlled on vertices and targeting the shared edge:
\beq
    \,\,\,\,\,\,\,\,U^{(i,j)}_{Q_k}:=\sum_{g_1,g_2,g_3\in G} &\ket{g_1,g_2}_{i,j}\bra{g_1,g_2}\otimes \\ \quad &\ket{q_k(g_1)g_3q_k(g_2)^{-1}}_{\langle i,j\rangle}\bra{g_3},\label{eq:gauge_U}
\eeq
where $q_k\in s_k(Q_k)$. This unitary will be used to entangle vertex DOFs with edge DOFs

Measurements of abelian subgroups will play an important role in the gauging procedure which is why we will now introduce the generalized Pauli-observables for an abelian group $Q_k\equiv \prod_{j=1}^{l_k}Z_{d_k^j}$. 
(We note that it should be clear from context whether the symbol $Z$ represents a group or a Pauli operator.) 
Any element $q\in Q_k$ can be written as $a_1^{i_1}a_2^{i_2}\cdots a_{l_k}^{i_{l_k}}$ where $a_j^{d_k^j}=e$ $\forall j=1,...,l_k$. Given this representation, we write the local Hilbert space basis as $\ket{i_1,...,i_{l_k}}\equiv\ket{a_1^{i_1}a_2^{i_2}\cdots a_{l_k}^{i_{l_k}}}$. This allows us to define the following generalized local Pauli operators by their action on this basis:
\begin{align}
X_1^{t_1}\otimes\cdots\otimes X_{l_k}^{t_{l_k}}\ket{i_1,...,i_{l_k}}&=\ket{i_1\oplus t_1,...,i_{l_k}\oplus t_{l_k}}\label{eq:gauge_X}\\
Z_1^{t_1}\otimes\cdots\otimes Z_{l_k}^{t_{l_k}}\ket{i_1,...,i_{l_k}} &= \omega_1^{i_1t_1}\cdots \omega_{l_k}^{i_{l_k}t_{l_k}}\ket{i_1,...,i_{l_k}},\label{eq:gauge_Z}
\end{align}
where $i_j\oplus x$ indicates addition modulo $d_k^j$ and $\omega_j$ is the $d_k^j$-th root of unity $\forall j=1,...,l_k$. Importantly, this allows us to define a Fourier-transformed basis as follows,
\beq
\ket{\tilde{i}_1,...,\tilde{i}_{l_k}}=Z_1^{i_1}\otimes\cdots Z_{l_k}^{i_{l_k}} \ket{+}\label{eq:gauge_fourier_basis},
\eeq
where $\ket{+}=\sum_{a_1^{i_1}a_2^{i_2}\cdots a_{l_k}^{i_{l_k}}\in Q_k}\ket{i_1,...,i_{l_k}}$.

Note that in Algorithm~\ref{alg:gauging}, the local Hilbert space dimension is given by the non-abelian group $G$, so we understand all the above unitaries and bases as defined on an embedded subspace given by $s_k$. See the discussion above Eq.~(\ref{eq:embedding-N}).

We will now use the above equations to implement an $N$-step gauging procedure. We will gauge the $G$-symmetry of the state defined on the vertices of a lattice sequentially in $N$ steps. 
We present the procedure in Algorithm~\ref{alg:gauging} and consider the details below:
 \begin{enumerate}
	\item[(1)] 
	\emph{Include ancillas.} Add ancillas in the state $\ket{e}$, where $e\in G$ is the identity element, on the edges between the vertices. 
	\item[(2)]
	\emph{Entangle gauge and matter DOFs.} Apply the following 2-controlled-shift operators with controls $c_1, c_2$ on neighboring vertices (oriented as $c_2\rightarrow c_1$) and the target $t$ on the in-between ancilla:
	\begin{align}
	   \begin{split}
	      \qquad U_{Q_1}=\sum_{g_1,g_2,g_3\in G}&\ket{g_1,g_2}_{c_1,c_2}\bra{g_1,g_2}\otimes \\
	  &\quad\ket{q_1(g_1)g_3q_1(g_2)^{-1}}_{\langle 1,2\rangle}\bra{g_3}. 
	   \end{split} 
	\end{align}
	Here we have used $q_1(g)$ to denote the part of the decomposition $g$ which lies in $Q_1$; that is, for $g=q_N\cdots q_1$ with $q_k\in s_k(Q_k)$, $q_1(g)=q_1$.
	\item[(3)]
	\emph{Measure $\{X_1,X_2,\dots,X_{l_1}\}$ on matter DOFs.} 
  After measurement of the quotient part on each vertex (i.e., in the bases defined in $\{X_{j}\}$),  with the outcome being $\{X_{j}=\omega_j^{-p_j}\}_{j=1}^{l_1}$ on a vertex ($\omega_j$ being $d_j$-th root of unity), there is a corresponding phase factor $\prod_{j=1}^{l_1} \omega_{j}^{-p_j i_j}$ from the wave function overlap in step (3) of Algorithm \ref{alg:gauging}. These phase factors can be seen as some abelian chargeons on vertices. See an example in Eq.~\eqref{eq:post-measured-state}.

   \item[(4)] \emph{Correct phase factors.} The phase factors arising from step (3) can be corrected as they can be expressed as a product of phase operators acting on the edge DOFs (see e.g., Fig.~\ref{fig:Zn-transmutation}). Therefore, we can  
   apply counter $Z$-operators on a set of edges $E_{\rm Cor}$, i.e., $\prod_{e\in E_{\rm Cor}}\mathbb{Z}_e$, where the exact form of $\mathbb{Z}_e=\prod_{j=1}^{l_k} Z_{j}^{-p_{j;e}}$ can be deduced from the measurement outcomes. For example, we can move the phase factors on a vertex by performing $Z$ operators on its neighboring edge and by doing this repeatedly we can move all the phase factors to one single vertex (via the transmutation rule, see, e.g. Fig.~\ref{fig:Zn-transmutation}), therefore annihilating them altogether, due to the symmetry constraint (see Eq.~\ref{eq:zn-sym-const} and the discussion below it). The set of such edges is an example of $E_{\rm Cor}$,  but it is not necessarily optimized. (See also the following two sections for concrete examples.) After measurement and correction, the vertex DOF is mapped from $\ket{g}$ to $\frac{1}{|Q_1|}\sum_{q'_1}\ket{g^{(1)}q'_1}$, where $g=q_N\cdots q_2 q_1$ and $g^{(1)}=q_N\cdots q_2$. The resultant state is a $G_1$-SET ground state.

    \item[(5)]
    \emph{Repeat the procedure of entangling gauge DOFs on edges and matter DOFs on vertices.} Apply the following unitary similar to  before:
	\begin{align}
	   \begin{split}
	     \qquad U_{Q_2}=\sum_{g_1,g_2,g_3\in G}&\ket{g_1,g_2}_{c_1,c_2}\bra{g_1,g_2}\otimes\\
	   &\quad\ket{q_2(g_1)g_3q_2(g_2)^{-1}}_{\langle 1,2\rangle}\bra{g_3}.  
	   \end{split} 
	\end{align}
	\item[(6)]
	\emph{Measure $\{X_{1},...,X_{l_2}\}$ on the matter DOFs and correct the corresponding phase factors from the measurement.} This results in a  $G_2$-SET ground state.

    \item[(7)]
    \emph{Repeat this process for all except the last quotient group $Q_N$.} 
    \item[(8)] \emph{At the last step, apply the gauging and measurement procedure for $Q_N$.} Specifically,
    first apply
      \begin{align}
      \begin{split}
         \,\,\,\,U_{Q_N}&=\sum_{g_1,g_2,g_3\in G} \ket{g_1,g_2}_{c_1,c_2}\bra{g_1,g_2}\otimes \\
        &\qquad\qquad\ket{q_N(g_1)g_3q_N(g_2)^{-1}}_{\langle1,2\rangle}\bra{g_3}, 
      \end{split}
       \end{align}
        measure $\{X_{1},...,X_{l_N}\}$, and then correct the corresponding phase factors. This gives us a $G$-TQD state.
\end{enumerate}

It is worth remarking that in the expressions above, we always use the multiplication rules of the entire group $G$. 
For instance, if we take $q_2,q'_2\in Q_2$, their product $q_2q'_2$ is not necessarily in $Q_2$ (it is in $Q_2$ only when the extension $G_2/G_1=Q_2$ is central). 
This seemingly makes the remaining global symmetry algebra non-closed, i.e., $g^{(1)}g'^{(1)}$ would produce components in the $Q_1$ subgroup (recall that $g^{(1)}=q_N\cdots q_2$ for some $q_k\in Q_k$). 
Nonetheless, instead of $\ket{g^{(1)}}$ in the above step (4), we have $\frac{1}{|Q_1|}\sum_{q'_1}\ket{g^{(1)}q'_1}$, which can absorb the potential $Q_1$ components, making the multiplication closed. 
We can therefore define the global symmetry group of $G_1$ SET as such. Moreover, the state after applying $U_{Q_2}$ is symmetric with $X_{q_2}$.
It also follows that the phase operators resulting from the measurement of $X_{q_2}$ can be corrected as the global symmetry gives a constraint on measurement outcomes, which will be discussed later.

In the following, we will consider the 1-step gauging for abelian groups in Sec.~\ref{sec:GaugingAbelian} to illustrate correction processes. 
Then we will consider the 2-step gauging for dihedral groups in Sec.~\ref{sec:GaugingDihedral}. 
The two-step and multi-step gauging can be also applied to abelian groups as well, and in the intermediate steps, SET states can emerge. 
We will discuss the phase of such SET states in Sec.~\ref{sec:GaugingDihedral}. Then in Sec.~\ref{sec:SymmetryDefect}, we introduce the framework of the symmetry defect branch line and discuss the SET phases for several more examples.

We give an alternative procedure for the $N$-step gauging in Appendix~\ref{N-step}. This gauging procedure is implemented for a group $G$ which admits sequential normal subgroups (see Appendix~\ref{N-step} for definition). 
This criterion is in fact equivalent to the group $G$ being solvable (see Appendix~\ref{proof:solvale=seqnormal} for a proof). 
The two procedures differ in the way in which the product of group elements are written down (compare Eq.~\eqref{ghinverse0} and Eq.~\eqref{ghinverse}). 
Apart from that, in the procedure given in this section, we gauge the quotient groups in every step, while in the procedure given in Appendix~\ref{N-step} we gauge normal subgroups in each step. 
As mentioned earlier, in Ref.~\cite{Hierarchy}, the commutator subgroups of the previous group in the series of a solvable group $G$ are gauged successively.

\section{1-step gauging: Abelian groups}
\label{sec:GaugingAbelian}

In this section, we review the gauging procedure for abelian groups \cite{tantivasadakarn2021long, Hierarchy}. We start with the SPT state given in Eq.~\eqref{eq:SPT-wave-function}.
The gauging map can be implemented by first transferring the corresponding group elements to edges and then by measuring the vertices, where our state is projected to a quantum double state with (unwanted) charges whose configuration is given by the measurement outcomes. 
The excitation due to the randomness of measurement is then corrected by a certain finite-depth procedure.  
The steps are described in more detail as follows:
\begin{enumerate}
	\item[(0)] 
	\emph{Prepare the SPT state on vertices.} We use local control-phase gates to prepare the SPT state from a direct product state.
	\item[(1)] 
	\emph{Include ancillas.} Add ancillas in the state $\ket{e}$, where $e\in G$ is the identity element, on edges between adjacent vertices. The ancillas become the gauge DOFs
	\item[(2)]
	\emph{Entangle gauge and matter DOFs} Apply the following controlled-controlled-shift operators with controls $c_1\, \&\, c_2$ on the neighboring vertices of an edge $e$ (oriented as $c_2\rightarrow c_1$) and the target $t$ being the ancilla on the edge between the two controls:
	\begin{equation}
	    \,\,\,\,\,\,\,U_{G}=\sum_{g_1,g_2\in G}\ket{g_1,g_2}_{c_1,c_2}\bra{g_1,g_2}\otimes L^{g_1}_{+e} L^{g_2}_{-e},
	    \label{eq:controlgate}
	\end{equation}
	At this point, the (pre-measurement) state is
	\begin{align}
    \,\,\,\,\,\,\,\,\,\,\,\ket{\Psi_{\text{pre}}}=
    \sum_{\{g_v\}} \Omega(\{g_v\}) 
    \bigotimes_{v\in V} |g_v\rangle_v 
    \bigotimes_{\langle vv'\rangle \in E} | g_v g_{v'}^{-1} \rangle_{vv'}. \  
    \end{align}

	\item[(3)]
	\emph{Choose a measurement basis in the $G$ algebra, then project the matter DOFs onto the basis via measurement.} A natural basis can be chosen if we order elements in $G$ as an ordered list $(g_0,g_1,\cdots,g_{n-1})$, where $g_0=1$, $n=|G|$. (Note that the subscript $j$ in $g_j$ here denotes the labeling of the group elements of $G$, not the vertex.) Then we simply use the Fourier basis $\ket{k}=\sum_{j=0}^{n-1} \exp{{2\pi ij k}/{n}}\ket{g_j}/\sqrt{n}=Z^k|+\rangle$ to perform measurements on vertex DOFs, where we have defined $|+\rangle\equiv \sum_{j=0}^{n-1} |g_j\rangle/\sqrt{n}$. When $G=Z_n$, we project the matter DOFs onto this basis via   measuring the generalized (qudit) $X$ operator.

	\item[(4)]
	\emph{Correct excitations in the $G$ twisted quantum double.} The correction can be done with a finite-depth circuit.
 
\end{enumerate}

We give more explanation on the procedure for the case with $G=Z_n$ below.
For the $Z_n$ group, the wave function can be written using the qudit system.
The basis vector $|a\rangle$ ($a \in \{0,...,n-1 \text{~mod}~n\}$) and the generalized Pauli operators satisfy
\beq
Z|a\rangle = \omega^a |a \rangle , \quad X|a\rangle = |a+1\rangle  , \quad ZX = \omega XZ,
\eeq
with $\omega = \exp(2\pi i /n)$, which is much simpler than the general abelianized basis in Eqs.~(\ref{eq:gauge_X}) and (\ref{eq:gauge_Z}).
Starting from a $Z_n$ SPT on a triangulated lattice, we first add ancillas to all the edges in a product state with $\ket{0}$. Then we apply the controlled gate in Eq.~(\ref{eq:controlgate}), which is a set of controlled-$X$ gates. 
Then the gauge DOFs are as given by the gauging map in Fig.~\ref{fig:gaugingmap}. 
The next step is to disentangle matter DOFs by measuring the  $X$ operator on all vertices. 
After measurements, the matter DOF at a vertex $v$ is projected onto $Z^{k_v}_v\ket{+}_v$, where $k_v=0,1,\cdots,n-1$, and $\ket{+}=\sum_{i}\ket{i}/\sqrt{n}$. 
Suppose the measurement outcome on the vertex $v$ is {$X_v=\omega^{-k_v}$}. 
We write the basis associated with measurement outcomes $\{k_v\}_{v\in V}$ as 
\begin{align}
    |M\rangle := \bigotimes_v Z^{k_v}_v |+\rangle_v .
\end{align}
Then the total wave function after measurements (with the gauge part being projected to $\langle M |\Psi \rangle_{\text{pre}}$) is written as
\begin{align}
\label{eq:post-measured-state}
\ket{\Psi}_{\text{post}}=&\Big(\sum_{\{g_v\}}\Omega(\{g_v\})\prod_{v}
{z^{-k_v}}(g_v) \bigotimes_{e}\ket{g_v g_{v'}^{-1}}_e\Big) \nonumber \\
& \quad \bigotimes_v {Z^{k_v}}\ket{+}_v,
\end{align}
where $z^{a}(g_v):=\bra{g_v}Z^{a}\ket{g_v} = (\omega^{a})^{g_v}$
and $\Omega$ is the phase factor inherited from the SPT state (i.e., a product of 3-cocycles). 
Note that $vv'$ is the edge $e$; many edges can share a vertex $v$ but the factor $z^{k_v}$ only appears once. 
Due to measurement, the vertex DOFs have been disentangled from the edge DOFs, and the edge DOFs form a state that is a ground state of the $Z_n$ twisted quantum double in the flux-free sector up to a factor $z^{k_v}(g_v)$, which can be interpreted as an $\mathbf{e}^{k_v}$ chargeon on the vertex $v$.

In what follows, we describe how to remove the excitations in $|\Psi\rangle_{\text{post}}$.
First, the set of measurement outcomes is restricted to $\big[\sum_v k_v\big]_n=0$, with $[x]_n$ being $x$ mod $n$, due to the global symmetry of the SPT state.
The global symmetry implies
\begin{align}
    \prod_{v \in V} X_v | \Psi_{\text{pre}} \rangle  = | \Psi_{\text{pre}} \rangle, \ 
\end{align}
so it should be satisfied that
\begin{align} \label{eq:zn-sym-const}
    \langle M | \Psi_{\text{pre}} \rangle 
    & =  \langle M | \prod_{v \in V} X_v | \Psi_{\text{pre}} \rangle,  \nonumber \\
    &=\Big( \prod_{v} \omega^{-k_v}\Big) \langle M | \Psi_{\text{pre}} \rangle,  
\end{align}
which gives the constraint $\prod_{v} \omega^{-k_v} = 1$, meaning $\big[\sum_v k_v\big]_n=0$.

Next, the measurement with $|M\rangle$ gives us a phase $\prod_{v} \omega^{-k_v g_v}$
when contracted with the basis $\otimes_v |g_v \rangle$.
Due to the constraint $\big[\sum_v k_v\big]_n=0$, one can always find a set of paths such that we can rewrite 
{the phase factor as $\prod_{v} \omega^{-k_v g_v}$, or equivalently $\prod_{v} z(g_v)^{-k_v}$,}
in terms of the phase operator $Z_e$ supported on the paths. 
Concretely, we use a type of relations, which we call the {\it transmutation rules}, illustrated
in Fig.~\ref{fig:Zn-transmutation}. 
For $G=Z_n$, the relation is
\begin{align}
    z(g_v)=z(g_{v'})z(g_v g_{v'}^{-1}).
\end{align}
We apply the phase operator on the paths to remove the chargeons. Given that these operators commute, they can be applied all at once. Hence, our gauging procedure assisted by measurement requires only finite time steps or a finite-depth quantum circuit (with intermediate measurements).

\begin{figure}[h]
    \centering
    \includegraphics[width=\linewidth]{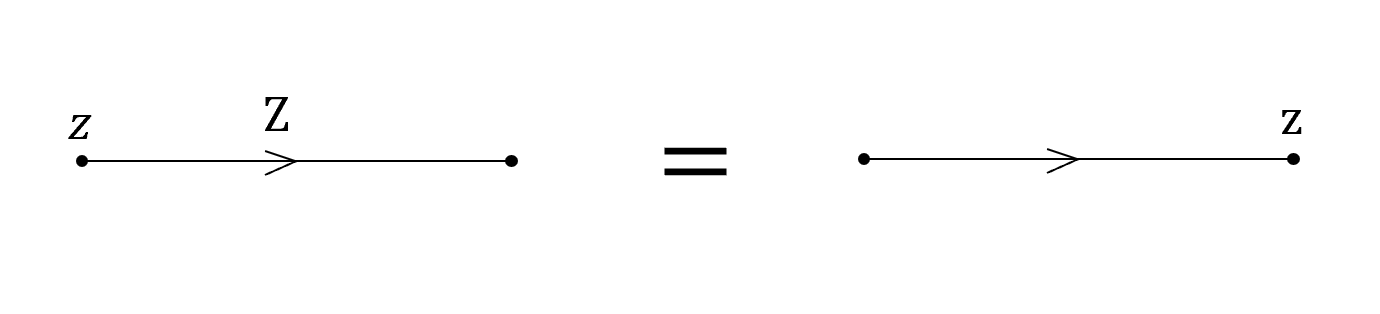}
    \caption{The transmutation rule of $z$ factor according to relation $z(g_{v'})z(g_v g_{v'}^{-1})=z(g_v)$.}
    \label{fig:Zn-transmutation}
\end{figure}

Let us give two remarks here. 
The reason that we can correct the state by moving all factors to one vertex is because of the fact that all chargeons in 
$D^{\omega}(Z_n)$ are abelian anyons. 
This procedure can be straightforwardly generalized to $Z_n \times Z_m \times\cdots$ group, where we measure $X \times 1 \times\cdots$, $1 \times X \times\cdots$, ..., etc. on all vertices after we entangle gauge and matter DOFs. This occurred previously in the general $N$-step gauging in Sec.~\ref{sec:N-stepGauging}. However, to explain the detailed correction there would incur cumbersome notations. The example of $Z_n$ in this section should now make the procedure clearer.
Different measurement outcomes will give rise to different chargeons in the flux-free sector, which are all abelian anyons. (Note that we do not have fluxons, as we began with a  flat-flux configuration followed by the controlled-controlled operation that does not create fluxons.)  Therefore, the state after measurement is still correctable within finite steps.

\section{2-step gauging: Dihedral group and intermediate SET states}
\label{sec:GaugingDihedral}

When we attempt to gauge nonabelian SPT states using measurement, although one can always choose a suitable basis such that the factors are {\it correctable}, one crucial problem is that the phase factors that arise from measurement do not necessarily correspond to abelian anyons as in the $D^{\omega}(Z_n)$ case above; this makes correction with a finite depth circuit a nontrivial problem.
In Ref.~\cite{verresen2021schorodinger, Bravyi2022a}, there were two different ways proposed to prepare the ground state of the $S_3$ quantum double model. 
In Ref.~\cite{Bravyi2022a}, a $Z_3$ toric code ground state is prepared first, and it is coupled to the $Z_2$ product state using controlled gates.
Then the $Z_2$ part is gauged via the measurement-assisted one-step gauging in Ref.~\cite{tantivasadakarn2021long}.

We will show in this section that, for the symmetry group $G$ being the extension of two abelian groups, by choosing some abelianized basis, we can still perform a 2-step gauging procedure on $G$-SPT states via measurement. 
In the case with $G=S_3$, our procedure would be equivalent to first preparing the $Z_3$-TQD ground state, and then coupling to the $Z_2$-SPT state using entangling gates and controlled gates. The correction process for the 2-step gauging is still fairly simple, i.e.,  via  finite-depth quantum circuits. The complete procedure to gauge the abelian $N$-symmetry (i.e., the normal subgroup) of a $G$-SPT state and then to gauge the quotient $Q$-symmetry of an SET is as follows:
\begin{enumerate}
	\item[(1)] 
	\emph{Include ancillas.} Add ancillas in the state $\ket{e}$, where $e\in G$ is the identity element, on edges between adjacent vertices. The ancillas will become the gauge DOFs

	\item[(2)]
	\emph{Entangle gauge and matter DOFs.} Apply the following controlled-controlled-shift operators with controls $c_1\, \&\, c_2$ on neighboring vertices (oriented as $c_2\rightarrow c_1$) and the target $t$ on the ancilla on the in-between edge $e$:
	\begin{equation}
	    U_{N}=\sum_{g_1,g_2\in G}\ket{g_1,g_2}_{c_1,c_2}\bra{g_1,g_2}\otimes L^{n(g_1)}_{+e} L^{n(g_2)}_{-e}.
	\end{equation}
	The purpose of this step is to mimic the gauging map in Eq.~(\ref{eq:Gamma_N}).
	\item[(3\,\&\,4)]
	\emph{Measure $X_{(n)}$ on matter DOFs and correct the $z_n$ factors.} After measurement, with the outcome being $X_{(n)}=\omega^{-k}$, there is a corresponding phase factor $z_n^k$. Using the transmutation rule for $z_n$, one can correct all those factors by moving them to one single vertex, resulting in an SET ground state.

	\item[(5)]
	\emph{Further entangling the quotient part of the gauge and matter DOFs.} We apply a controlled-conjugate operator with the target $e$ being the ancilla (oriented as $c_2\rightarrow c_1$), and the control being $c_2$:
    \begin{equation}
	    U_{Q}=\sum_{g_1,g_2\in G}\ket{q(g_1),q(g_2)}_{c_1,c_2}\bra{q(g_1),q(g_2)}\otimes L^{q(g_1)}_{+e} L^{q(g_2)}_{-e},
	\end{equation}
    where $q(g)$ denotes the quotient part of $g$ via an embedding in Eq.~\eqref{eq:embedding}. Notice that the normal part of the matter DOF has been wiped out by measuring $X_{(n)}$, while the quotient  part $Q=G/N$ still remains, which makes the above controlled-gates possible to implement. The edge DOFs are now $\{q(g_1) n(g_1)n(g_2)^{-1}q(g_2)^{-1}\}=\{g_1g_2^{-1}\}$.

    \item[(6\,\&\,7)] \emph{Measure $X_{(q)}$ on matter DOFs, and correct $z_q$ factors.} Their correction is straightforward; we apply $Z_{(q)}$ operators on edges to move all $z_q$'s to one vertex.

\end{enumerate}
In the following, we will apply the above procedure to several cases.

\subsection{Gauging $S_3$ SPT}

The $S_3$ group is $G=\langle a,x|a^3=e, x^2=e, xax=a^{-1}\rangle $. Any element $g\in G$ can be written as $g=x^i a^j$, where $i=0,1$, $j=0,1,2$. We define the decomposition of a group element respectively as
\begin{align}
    n(x^i a^j)&=a^j,
    \label{eq:n}
    \\
    q(x^i a^j)&=x^i,
    \label{eq:q}
\end{align}
with the former being the normal part ($N=Z_3$), and the latter being the quotient part ($s(Q)=s(Z_2)$) of $S_3$. 
We then define the shift operator in each part as
\beq
    X_{(n)}&=\sum_{i,j}\ket{x^i a^{j+1}}\bra{x^i a^j},
    \\
    X_{(q)}&=\sum_{i,j}\ket{x^{i+1} a^j}\bra{x^i a^j}.
    \label{eq:x}
\eeq

The phase operators, which are known as the clock operators, in each respective part, are
\beq
    Z_{(n)}&=\sum_{g}z_n(g)\ket{g}\bra{g},~z_n(x^i a^j)=\omega^j,
    \\
    Z_{(q)}&=\sum_{g}z_q(g)\ket{g}\bra{g},~z_q(x^i a^j)=(-1)^i,
    \label{eq:z}
\eeq
where $\omega=e^{i\frac{2\pi}{3}}$. 
The gauging step (2) transforms the ancilla DOF on edge $e=\langle v,v'\rangle$ from identity to $n(g_v)n(g_{v'})^{-1}$. 

Then in step (3) we measure $X_{(n)}$ on all the vertex DOFs.
Suppose the measurement outcome is $X_{(n)}=e^{-i\frac{2\pi k_v}{3}}$ on vertex $v$ (where $k_v=0,1,2$). The state after the measurement is projected into 
\beq
    \ket{\Psi_3}=&\sum_{\{g_v\}}\Big(\prod_v z_{n}^{-k_v}(g_v)\Big)  \Omega(\{g_v g_{v'}^{-1}\})\ket{\{n(g_v)n(g_{v'})^{-1}\}}_e\\
    &\bigotimes_v\Big(Z_{(n)}^{k_v}\big(\sum_{r\in N}\ket{q(g_v)r}_v\big)\Big).
\eeq
The phase factor $\prod_v z_{n}^{-k_v}(g_v)$ depends on the measurement outcomes $\{k_v\}$. In order to  correct them, we employ the transmutation rules for $z$ factors 
\beq
    z_n(h)z_n(n(g)n(h)^{-1})=z_n(g).
\eeq
As in the case with $Z_n$ in the previous section, we have $[\sum_v k_v]_3 =0$ due to the $N$-symmetry. By inserting corresponding numbers of $Z_n$ operators on the edges, we can move the factors $z_n$ on vertices around and cancel them altogether. Equivalently, one can simultaneously apply $Z_n$ operators supported on strings whose endpoints correspond to nontrivial measurement outcomes. 
This gives us the state
\beq
    \ket{\Psi_{4}}=&\sum_{\{g_v\}}  \Omega(\{g_v g_{v'}^{-1}\})\ket{\{n(g_v)n(g_{v'})^{-1}\}}_e\\
    &\bigotimes_v\Big(Z_{(n)}^{k_v}\big(\sum_{r\in N}\ket{q(g_v)r}_v\big)\Big).
\eeq

\begin{figure}[h]
    \centering
    \includegraphics[width=0.8\linewidth]{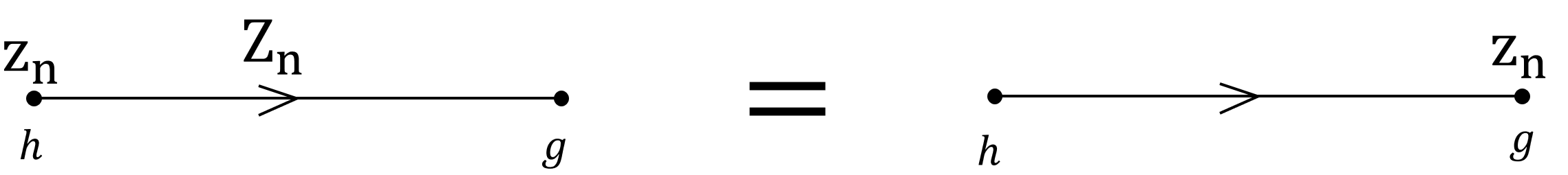}
    \caption{The transmutation rule for $z_n$ factors on step (4).}
    \label{fig:transmutation_n}
\end{figure}

After the gauging step (5), the edge DOFs are conjugated and shifted by $U_Q$, giving rise to
\beq
    \ket{\Psi_5}
    =&\sum_{\{g_v\}}  \Omega(\{g_v g_{v'}^{-1}\})\ket{\{g_v g_{v'}^{-1}\}}_e\\ 
    &\bigotimes_v 
    \Big( Z^{k_v}_{(n)}\sum_{r \in N}\ket{q(g_v) r}_v \Big). \label{eq:psi-5}
\eeq

Step (6) is similar to  step (3). By measurements, the state is projected to
\beq
    \ket{\Psi_6}=&\sum_{\{g_v\}}\Big(\prod_v z_{q}^{-m_v}(g_v)\Big)  \Omega(\{g_v g_{v'}^{-1}\})\ket{\{g_v g_{v'}^{-1}\}}_e\\
    &\bigotimes_v\Big(Z_{(q)}^{m_v}Z_{(n)}^{k_v}\big(\sum_{g\in G}\ket{g}_v\big)\Big),
\eeq
where we have assumed that $X_{(q)}=e^{-i\frac{2\pi m_v}{2}}$ from the measurement on vertex $v$ (where $m_v=0,1$). In order to  correct the phase factor $\prod_v z_{q}^{-m_v}(g_v)$, we employ the transmutation rules for $z_q$ factors: 
\beq
    z_q(h)z_q(gh^{-1})=z_q(g),
\eeq
which is illustrated in Fig.~\ref{fig:transmutation_q}. This rule, just as the rule for $z_n$ in step (4), allows us to move all the $z_q$ factors to a single vertex and annihilate them. This is guaranteed by the $Q$-global symmetry in $|\Psi_5\rangle$ (see Appendix~\ref{sec:q-global-sym}),
\beq
\Big( \prod_{v} X_{(q)} \Big)
|\Psi_5 \rangle = |\Psi_5 \rangle,
\eeq
which implies that the measurement outcomes satisfy $[\sum_v m_v]_2=0$ in this case, as the global symmetry is $Z_2$. After  we apply the corresponding  correcting phase factors to edges, we  thus obtain a $G$-TQD ground state.
\begin{figure}[h]
    \centering
    \includegraphics[width=0.8\linewidth]{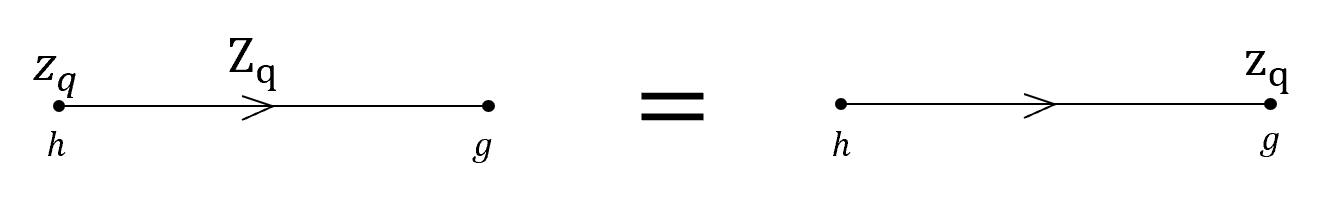}
    \caption{The transmutation rule for $z_q$ factors on step (7).}
    \label{fig:transmutation_q}
\end{figure}

\subsection{$Z_3$ SET with $Z_2$ Symmetry}
\label{sec:Z3SETwithZ2}

 Let us begin by recalling that the gauging map $\Gamma_N$ in Eq.~(\ref{eq:Gamma_N}) gauges the normal subgroup $N$ of $G$. 
 In our procedure, after we correct  the $z_n$ factors in step (3), the state is essentially a ground state of  the $Z_3$ SET phase. 
In what follows, we first look into the entanglement structure of the wave function after gauging the normal subgroup $Z_3$.
Then we identify the class of this SET phase, namely,  the unitary modular tensor category (UMTC) $\mathcal{C}$ that contains all anyonic excitations, the $Z_2$ symmetry action as an automorphism of $\mathcal{C}$, the symmetry fractionalization class and defectification class~\cite{Barkeshli_2019}.

We write the element in $S_3$ as $\Tilde{g}=(G,g)\equiv x^Ga^g$ with some slight abuse of the notation, where $G=0,1$ and $g=0,1,2$. It should be clear from the context when $G$ is a number or a group. A representative of the cocycle in $H^3(S_3,U(1))$ is 
\beq
    &\omega(\Tilde{g},\Tilde{h},\Tilde{l})  \\
    &=\exp{\frac{2\pi i p_1}{9}g(-1)^{H+L}(h(-1)^L+l-[h(-1)^L+l]_3)}\\
    &\qquad \times \exp{\pi i p_2 GHL},
    \label{eq:s_3 cocycle}
\eeq
where $p_1=0,1,2$, and $p_2=0,1$. As pointed out in Sec.~{\ref{sec:GaugingMap}}, the anyon set (UMTC) $\mathcal{C}$ is determined by the restriction of $\omega$ on subgroup $Z_3$ (i.e., setting $G=H=L=0$),
\beq
    \nu(g,h,l)=\exp{\frac{2\pi i p_1}{9}g(h+l-[h+l]_3)}.
\eeq
Different values of $p_1$ are in one-to-one correspondence with  different $Z_3$ twisted quantum double phases $D^{\nu}(Z_3)$. An anyon in these phases is characterized by its flux $a\in Z_3$, and a projective representation of $Z_3$, satisfying
\beq
    \mu_a(g)\mu_{a}(h)=\exp{\frac{2\pi i p_1 a}{9}(g+h-[g+h]_3)}\mu_a(gh),
\eeq
which means $\mu_a(g)=e^{\frac{2\pi i a p_1 g}{9}}v(g)$, where $v(g)$ is an ordinary representation of $Z_3$. 

Using Lyndon–Hochschild–Serre spectral sequence~\cite{lyndon1948cohomology, Hochschild1953,chen2012symmetry}, we can decompose the cohomology class of the $S_3$ group as
\beq
 \label{eq:S3-decomposition-LHS}
H^3(S_3,U(1))=H^3(Z_3,U(1))\oplus H^3(Z_2,U(1)).
\eeq
This suggests that this SET state is composed of a TQD $D^{\nu}(Z_3)$ and a $Z_2$-SPT state.
A natural question is whether the wave function of the whole system is decomposed into a product of the two corresponding parts.

It turns out that we can write the 3-cocycle in Eq.~\eqref{eq:s_3 cocycle} as
\beq
    \omega(\tilde{g},\tilde{h},\tilde{l})=\omega_n\cdot\omega_q\cdot\omega'(\tilde{g},\tilde{h},\tilde{l}),
    \label{eq:s_3 decomposition}
\eeq
\begin{figure}[h]
    \centering
    \includegraphics[width=\linewidth]{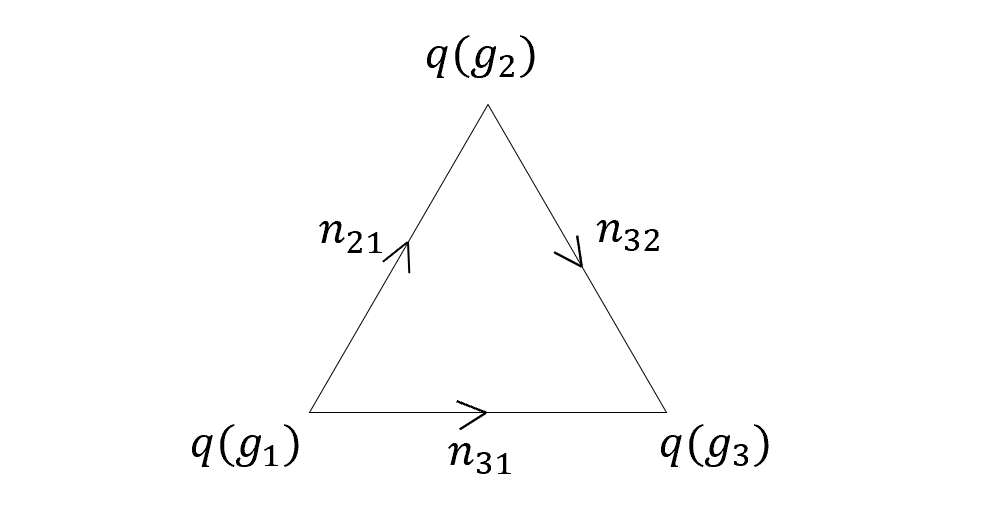}
    \caption{One plaquette on a triangulated lattice.}
    \label{fig:triangle}
\end{figure}
where the phase $\omega_n$ is defined from  the 3-cocycle of $Z_3$,
\beq
    &\omega_{n}(g_3 g_2^{-1},g_2 g_1^{-1},g_1)
    \\
    &\equiv\nu\big(n(g_3) n(g_2)^{-1},n(g_2) n(g_1)^{-1},n(g_1)\big)
    \\
    &=\nu\big(\sigma^{q(g_2)^{-1}}(n(g_3 g_2^{-1})),\sigma^{q(g_1)^{-1}}(n(g_2 g_1^{-1})),n(g_1)\big),
\eeq
and $\nu$ is a representative in $H^3(N,U(1))$. Every plaquette on the spatial manifold is associated with such a phase. The product of them over all plaquettes gives
\beq
    \Omega_{\nu}(\{n(g_v)n(g_{v'})^{-1}\})=\prod_{\Delta}\nu(\{n(g_v)\})^{s(\Delta)}.
\eeq
The sum of the above phase over all possible $\{g_v\}$ configurations gives the wave function of a ground state of $D^{\nu}(Z_3)$. 

The phase $\omega_q$ is defined from the 3-cocycle of $Z_2$, $\alpha\in H^3(Z_2,U(1))$,
\beq
    \omega_q(\tilde{g},\tilde{h},\tilde{l})=\exp{\pi i p_2 GHL}=:\alpha(G,H,L).
\eeq
The product of this type of phases gives 
\beq
    \Omega_q(\{q(g_v)q(g_{v'})^{-1}\})=\prod_{\Delta}\alpha(\{q(g_v)\})^{s(\Delta)}.
\eeq
The sum of the above phases over all possible $\{g_v\}$ configurations results in the wave function of a $Z_2$-SPT state. 
The $\omega'$ part in Eq.~\eqref{eq:s_3 decomposition} is
\beq
    \omega'(\tilde{g},\tilde{h},\tilde{l})=\exp{\frac{2\pi i p_1}{3}(-1)^{H}g(1-\delta_{h,0})\delta_{L,1}},
\eeq
which is nontrivial when $p_1\neq 0$. Similarly, we define the product of this type of phases over the spatial manifold as $\Omega'$,
\beq
    \Omega'(\{q(g_v)\},\{n(g_v)\})=\prod_{\Delta}\omega'(\{q(g_v)\},\{n(q_v)\})^{s(\Delta)}.
\eeq

Thus the resulting state after gauging $Z_3$ from an $S_3$-SPT state is 
\beq
    \ket{\Psi_{\text{SET}}}=&\sum_{\{g_v\}}\Omega'(\{q(g_v)\},\{n(g_v)n(g_{v'})^{-1}\})\\
    &\Big(\Omega_n\ket{\{n(g_v)n(g_{v'})^{-1}\}}_e\Big)\bigotimes\Big(\Omega_q\ket{\{q(g_v)\}}_v\Big),
    \label{eq:S_3 entangle}
\eeq
which is an entangled state between a $Z_3$-TQD ground state and a $Z_2$-SPT state. When $p_1=0$, we have $\omega'= \Omega_n = 1$, hence the wave function of the system becomes
\beq
    \ket{\Psi}=&\sum_{\{g_v\}}\Omega_q\ket{\{n(g_v)n(g_{v'})^{-1}\}}_e\bigotimes\ket{\{q(g_v)\}}_v
    \\
    =&\sum_{\{g_v\}}\Big(\ket{\{n(g_v)n(g_{v'})^{-1}\}}_e\Big)\bigotimes\Big(\Omega_q\ket{\{q(g_v)\}}_v\Big)
    \\
    =&\ket{Z_3~\text{TC}}\otimes\ket{Z_2~\text{SPT}},
\eeq
which is a product state of a $Z_3$ Toric code ground state and a $Z_2$-SPT state.
 
Having obtained the SET wave functions, we now discuss the effect of the global symmetry action. The $Z_2$ symmetry action $U^x=\prod_v L^x_{-v}$ in the $S_3$-SPT state is mapped to $U^x_Q=U^x \sigma^x$, under which
an anyon with flux $a^i$ will be mapped to one with flux $\sigma^x(a)=a^{[-i]_3}$. And a chargeon will be mapped to its antiparticle under the symmetry. According to Sec.~\ref{sec:SET classification}, the possible SFC will be given by elements in a $H^2_{\rho}(Z_2,\mathcal{A})$ torsor. With different values of $p_1$, the abelian group $\mathcal{A}$ could be either $Z_3\times Z_3$ or $Z_9$. In
either cases, the cohomology group $H^2_{\rho}(Z_2,\mathcal{A})$ turns out to be trivial, and so is its torsor. Therefore, the TQD $D^{\nu}(Z_3)$ has only one possible $Z_2$ symmetry fractionalization pattern. Moreover, different values of $p_2$ in the 3-cocycle of $S_3$ result in different $Z_2$ symmetry defectification classes (SDC) in the SETs, which are obtained by gauging the normal $Z_3$ group. This is  expected because as seen from  Eq.~\eqref{eq:S_3 entangle},  different $p_2$ values represent  different $Z_2$-SPT states entangled with some $Z_3$-TQD state.

Let us remark that this construction for $S_3$ has a natural generalization on $D_{2n+1}$ groups, where one first gauge $Z_{2n+1}$, resulting in an $Z_{2n+1}$ SET on which the $Z_2$ symmetry acts to conjugate the gauge DOFs. Different parent SPT phases will result in different anyon theories and different SDCs, but always some unique symmetry fractionalization pattern. One can further gauge the quotient $Z_2$ symmetry to obtain $D_{2n+1}$ TQD.

\subsection{Gauging $D_{2n}$ SPT}

Now we discuss the process of 2-step gauging a generic $D_{2n}$ SPT state via measurement under a similar type of abelianized basis. As in $S_3$, an element in group $D_{2n}$ can be written as
\begin{equation}
    g=x^i a^j,~ i=0,1,~ j=0,1,\cdots,2n-1.
\end{equation}
We will thus use a generalized definition of operators as for $S_3$ in Eqs.~(\ref{eq:n}), (\ref{eq:q}), (\ref{eq:x}), and~(\ref{eq:z}). 

After applying the control gate to set the DOFs on edges, e.g., $\langle ij \rangle$, to $n(g_i)n(g_j)^{-1}$,  measuring $X_{(n)}$ on vertices, and correcting all chargeon excitations, the resultant state is a ground state in a $Z_{2n}$-SET phase with a global $Z_2$ symmetry at this intermediate stage. According to the multiplication rule of $D_{2n}$, just as in $S_3$, the symmetry transformation $U^x$ conjugates all the gauge DOFs.

As an illustration, we will consider the $D_4$ group. Using the Lyndon-Hochschild-Serre sequence, the cohomology group can be decomposed as 
\beq
    H^3(D_4,U(1))=&H^3(Z_4,U(1))\oplus H^3(Z_2,U(1))
    \\
    &\oplus H^2(Z_2,H^1(Z_4,U(1))).
\eeq
Again, the cocycle factorizes as
\beq
    \omega= \omega_{n}\cdot\omega_{q}\cdot\omega'\cdot\omega_{nq},
\eeq
where $\omega_{n}$ and $\omega_{q}$ are defined similarly as for the $S_3$ group in the last section, while $\omega'$ and $\omega_{nq}$ depend on both quotient and normal parts of the vertex DOFs. From this decomposition, it is clear that after gauging the normal $Z_{2n}$, we have an entangled state between the $Z_{2n}$-TQD state and the $Z_2$-SPT state. Because of the additional entanglement via $\omega_{nq}$, we expect to have a nontrivial SFC from gauging the $Z_4$ symmetry of a $D_4$-SPT state.

Indeed, in the next section, we will show that the current 2-step gauging setup could result in different SFCs.  
To do so, we first introduce symmetry branch line operators and other necessary tools to determine the SFC.
We will also discuss several examples, including the above $Z_4$-SET phase.

\section{Symmetry Defect in SPT and SET}
\label{sec:SymmetryDefect}
In this section, we will apply the notion of symmetry branch lines introduced in Ref.~\cite{Barkeshli_2019} and formulate corresponding symmetry branch line operators  in SPT phases, as well as their relation to ribbon operators in TQD. 
We then show the gauging procedure transforms such operators in SPT phases into symmetry branch lines in SET phases and discuss how their fusions relate to the symmetry fractionalization classes (SFC) in a few examples.

\subsection{Symmetry Branch Lines}\label{sec:sbl}

To introduce symmetry branch lines, we start with the symmetry action in an SPT wave function. We recall in Eq.~(\ref{eq:PsiSPT})  the SPT wave function on a triangulated spatial manifold, 
\beq
\ket{\Psi_{\text{SPT}}}=\sum_{\{g_v\}}\prod_{\Delta}\omega(g_3g_2^{-1},g_2g_1^{-1},g_1)^{s(\Delta)}\bigotimes_{v} \ket{g_v}.
\eeq
When the manifold is closed, the global symmetry action $U^x\equiv\prod_v L^x_{-v}$ leaves the entire SPT state invariant. We can also consider the symmetry action on a sub-manifold $\mathcal{R}$~\cite{else2014classifying}, such as the one shown in Fig.~\ref{fig:closed_symmetry},
\beq
    U^x_{\mathcal{R}} \prod_{\Delta} \omega^{s(\Delta)}\ket{\{g_v\}}=\prod_{\Delta}Amp^{\mathcal{R}}(\{g_v\},x)\omega^{s(\Delta)}\ket{\{g_v\}}.
\eeq
 Triangulating the frustum created by lifting vertices in $\mathcal{R}$, we have multiple tetrahedrons. We associate each tetrahedron in the frustum as in Fig.~\ref{fig:closed_symmetry} to a 3-cocycle, such as
 the one in Fig~\ref{fig:tetrahedron}, to which we assign a phase factor $\omega(g_4g_3^{-1},g_3g_2^{-1},g_2g_1^{-1})$. The product of all such cocycles
composes the factor $Amp^{\mathcal{R}}$, namely,
\beq  
    Amp^{\mathcal{R}}=\prod_{\text{tetra}\in\mathcal{R}} \omega(\text{tetra})^s.
\eeq

Using the cocycle conditions, it turns out that $\text{Amp}^{\mathcal{R}}$ only depends the DOFs around $\partial \mathcal{R}$ and does not depend on those DOFs deep inside $\mathcal{R}$, see Fig.~\ref{fig:closed_defect}, and its expression is
\beq
    Amp^{\mathcal{R}}=\Tilde{\Theta}^{g_v x g_v^{-1}}_{\partial\mathcal{R}}\prod_{\text{tetra}\in\partial\mathcal{R}} \omega(\text{tetra})^s,
\eeq
where the extra factor to the product on the r.h.s. is
\beq
    \Tilde{\Theta}^{g_n x g_n^{-1}}_{\partial\mathcal{R}}=\theta_{g_n x g_n^{-1}}(g_n g_{n-1}^{-1},g_{n-1}g_1^{-1})\cdots\theta_{g_3 x g_3^{-1}}(g_3 g_{2}^{-1},g_{2}g_1^{-1}),
\eeq
when $\partial\mathcal{R}$ contains vertices equipped with the branching structure, $ 1\rightarrow 2\rightarrow 3\rightarrow \cdots \rightarrow n \leftarrow 1$, see Appendix~\ref{sec: branch line operator appendix}. In fact, one can introduce an operator $\mathcal{B}^x_{\partial \mathcal{R}}$ that is supported only on $\partial \mathcal{R}$ such that~\cite{else2014classifying} 
\beq
    ``\mathcal{B}^x_{\partial\mathcal{R}}"\ket{\Psi_{\text{SPT}}}=U^x_{\mathcal{R}} \ket{\Psi_{\text{SPT}}}.
    \label{eq:defect-closed}
\eeq

\begin{figure}[h]
    \centering
    \includegraphics[width=0.8\linewidth]{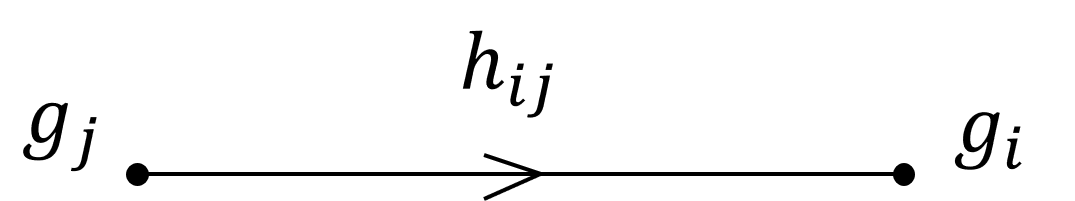}
    \caption{$\ket{h_{ij}}_e$ is located on edge $e=\langle i,j\rangle$, $\ket{g_i}_v$ and $\ket{g_j}_v$ are located on the two endpoints $i$ and $j$.}
    \label{fig:pregauge}
\end{figure}

To do this carefully, we need to first introduce a pre-gauge structure~\cite{dolev2022gauging}. Namely, we introduce a $G$-DOF $h_{ij}$ on every edge $\langle i,j\rangle$ (which we will be set to the identity group element $\ket{1}_e$ to begin with). One can think of the edge DOFs as the discrete gauge field. After introducing such a gauge field, one may write the local symmetry action  on both vertex $v$ and the surrounding edges $e\supset v$. This is also called the ``gauge transformation" operator on a vertex $v$ as
\beq \label{eq:gauge-transformation}
G^x_v\equiv L^x_{-v} \prod_{e\supset v}L^{x}_{\pm e},
\eeq
where $e\supset v$ denotes those edges with one end being $v$ and where $L^{\pm,x_e}_e$ is the left (right) actions of $x_e$ on $\ket{h}_e$, when the edge $e$ flows to (emanates from) $v$. Furthermore, the ``interactions'' should also be written in a gauge invariant way. Namely, instead of the original SPT Hamiltonian, one can write the gauge invariant version as follows,
\beq
\label{eq:hamiltonian SPT-pre}
    H_{\text{SPT-pre}}=-\sum_v \sum_{g\in G} \frac{1}{|G|}L^g_{+v} W^g_v,
\eeq
where the $W$ phase was previously introduced in Sec.~\ref{sec:GaugingMap}.

Taking any ground state of this Hamiltonian, we can impose the gauging map in the presence of the pre-gauge structure, 
\beq
    \Gamma:\ \  \ket{\{g_i\}}_v\ket{h_{ij}}_e\rightarrow \ket{\{g_i h_{ij}g_j^{-1}\}}_e,
\eeq 
under which, the state would be mapped to a ground state of the TQD model. One special ground state of this Hamiltonian would be 
\beq
    \ket{\Psi_{\text{SPT-pre}}}=\ket{\Psi_{\text{SPT}}}\bigotimes_e\ket{1}_e,
\eeq
where we have taken the original SPT state, which has $\ket{h}_e\equiv \ket{1}_e$ on all edges. One can easily verify that $G^x_v\ket{\Psi_{\text{SPT-pre}}}$ is still a ground state, for any vertex $v$ on lattice and $x\in G$. We define the gauge transformation over a region $\mathcal{R}$ as $G^x_{\mathcal{R}}\equiv\prod_{v\in \mathcal{R}}G^x_v$. If the spatial lattice $\Gamma$ is closed, we write $G^x\equiv \prod_{v\in \Gamma}G^x_v$ and one can check that
\beq
    G^x\ket{\Psi_{\text{SPT-pre}}}=\ket{\Psi_{\text{SPT-pre}}}.
\eeq
Therefore, the operator $G^x$ mimics the behavior of global symmetry operator $U^x$ after introducing the pre-gauge structure.

Now we introduce the definition  of symmetry branch line operators $\tilde{\mathcal{B}}^x_{\partial\mathcal{R}}$ on states with trivial edges $\ket{h}_e\equiv\ket{1}_e$,
\beq
    \label{eq:branch line operator}
    \tilde{\mathcal{B}}^x_{\partial\mathcal{R}}\ket{\Psi_{\text{SPT-pre}}}=G^x_{\mathcal{R}}\ket{\Psi_{\text{SPT-pre}}},
\eeq
but unlike $G^x_{\mathcal{R}}$, the operator $\tilde{\mathcal{B}}^x_{\partial\mathcal{R}}$ only takes effect on $\partial\mathcal{R}$. We find that the following expression of $\tilde{\mathcal{B}}^x_{\partial\mathcal{R}}$,
\begin{align}
    \label{eq:tilde B}
    \tilde{\mathcal{B}}^x_{\partial\mathcal{R}}=\sum_{g_v}\mathcal{L}^{x}_{\partial \mathcal{R}} \Tilde{W}^{g_v x g_v^{-1}}_{\partial\mathcal{R}}\Tilde{\Theta}^{g_v x g_v^{-1}}_{\partial\mathcal{R}}\ket{g_v}_v\bra{g_v}, 
\end{align}
where $v$ is a reference vertex on $\partial\mathcal{R}$ as shown in Fig.~\ref{fig:closed_defect}. The phase $\Tilde{W}^{g_v x g_v^{-1}}_{\partial\mathcal{R}}$ is the product of cocycles associated to the tetrahedrons in Fig.~\ref{fig:closed_defect}, with $g_{v'}g_v^{-1}=g_v x g_v^{-1}$,
\beq
\label{eq:W}
\Tilde{W}^{g_v x g_v^{-1}}_{\partial\mathcal{R}}=\prod_{\text{tetra}\in \partial\mathcal{R}}\omega(\text{tetra})^s.
\eeq
The operator $\mathcal{L}^{x}_{\partial\mathcal{R}}$ is a product of shift operators on the edges crossed by $\partial\mathcal{R}$. In general, $\mathcal{L}^x_{l}$ with $x\in G$ on a ribbon $l$ is defined as follows,

\begin{widetext}
\begin{equation}
\label{eq:Lx1}
    \begin{tikzcd}
	& {g_1} && {g_2} && {g_3} && {g_4} \\
	{\mathcal{L}_l^x} &&&&&&& \textcolor{white}{ } \\
	& {g_5} && {g_6} && {g_7} \\
	& {g_1} && {g_2} && {g_3} && {g_4} \\
	{=} \\
	& {g_5} && {g_6} && {g_7}
	\arrow["{1}"', from=3-2, to=1-2]
	\arrow["{1}"', from=3-4, to=1-4]
	\arrow["{1}"', from=3-6, to=1-6]
	\arrow["{1}"', from=1-4, to=1-2]
	\arrow["{1}"', from=1-6, to=1-4]
	\arrow["{1}"', from=1-8, to=1-6]
	\arrow["{1}"', from=4-8, to=4-6]
	\arrow["{1}"', from=4-6, to=4-4]
	\arrow["{1}"', from=4-4, to=4-2]
	\arrow["{x}"', from=6-2, to=4-2]
	\arrow["{x}"', from=6-4, to=4-4]
	\arrow["{x}"', from=6-6, to=4-6]
	\arrow["l", color={rgb,255:red,214;green,92;blue,92}, dashed, from=2-1, to=2-8]
    \end{tikzcd}
\end{equation}
\end{widetext}

This operation 
$\tilde{\mathcal{B}}^x_{\partial\mathcal{R}}$ 
can be regarded as the non-onsite symmetry action on the boundary $\partial\mathcal{R}$, and its definition can be extended to the case whenever there is no flux in the state (i.e., for every plaquette, $\prod_{e} h_e=1$.). From direct calculation using Eq.\eqref{eq:branch line operator} and the fact that $G^x_{\mathcal{R}}G^y_{\mathcal{R}}=G^{xy}_{\mathcal{R}}$, one can easily show that the multiplication rule of $\tilde{\mathcal{B}}^x_{\partial\mathcal{R}}$,
\beq
    \tilde{\mathcal{B}}^x_{\partial\mathcal{R}}\tilde{\mathcal{B}}^y_{\partial\mathcal{R}}=\tilde{\mathcal{B}}^{xy}_{\partial\mathcal{R}},
\eeq
is exactly the multiplication rule of the group $G$, as expected for the symmetry branch lines. One can also show the multiplication rule of $\tilde{\mathcal{B}}^x_{\partial\mathcal{R}}$ directly using the form in Eq.~\eqref{eq:tilde B}, see Appendix~\ref{sec: branch line operator appendix}.  Indeed, if we use the operator 
\beq
    \prod_v \big(\frac{1}{|G|}\sum_{x\in G} G^x_v\big)
    \label{eq:projector}
\eeq
to project any ground state onto the gauge invariant sector, we would also have a TQD ground state. The operator above in Eq.~\eqref{eq:projector} can be seen as a superposition of different meshes of symmetry branch lines.

\begin{figure}[h]
    \centering
    \begin{subfigure}[h]{0.9\linewidth}
    \centering
    \includegraphics[width=\linewidth]{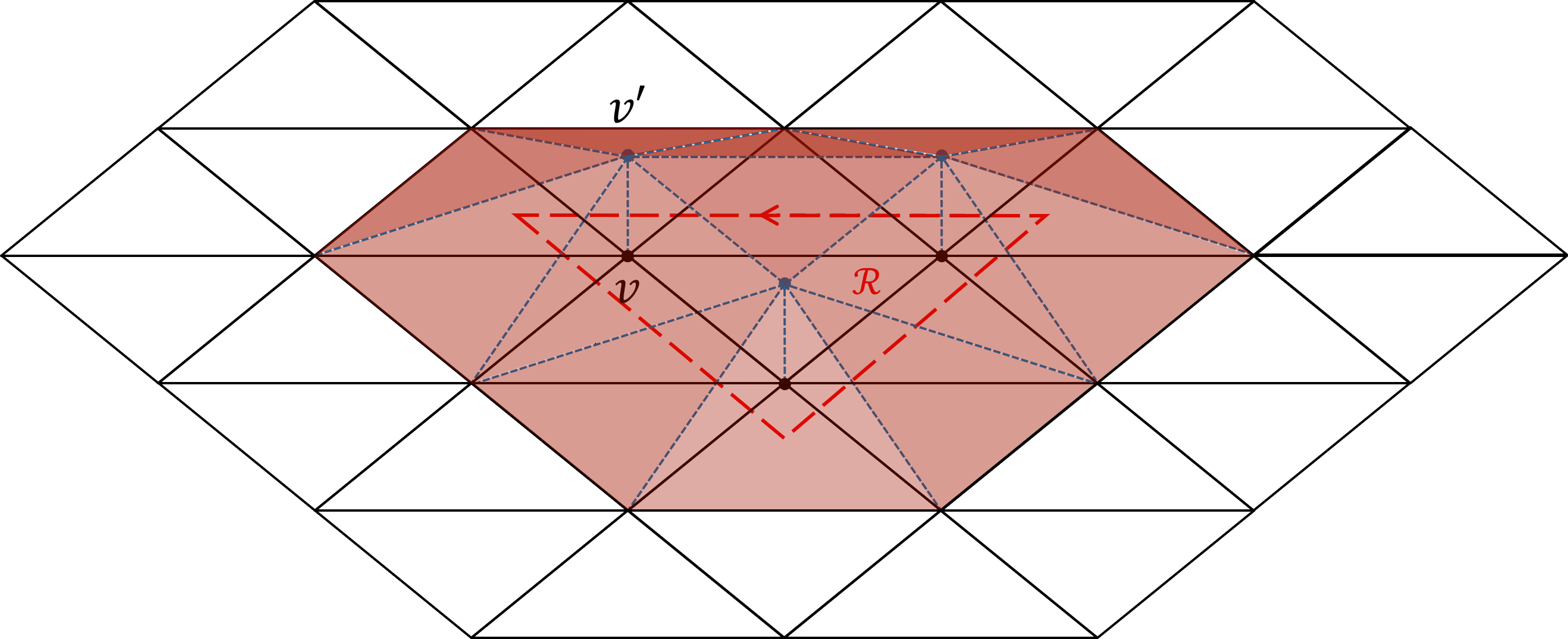}
    \caption{}
    \label{fig:closed_symmetry}
    \end{subfigure}
    \hfill
    \begin{subfigure}[h]{0.9\linewidth}
    \centering
    \includegraphics[width=\linewidth]{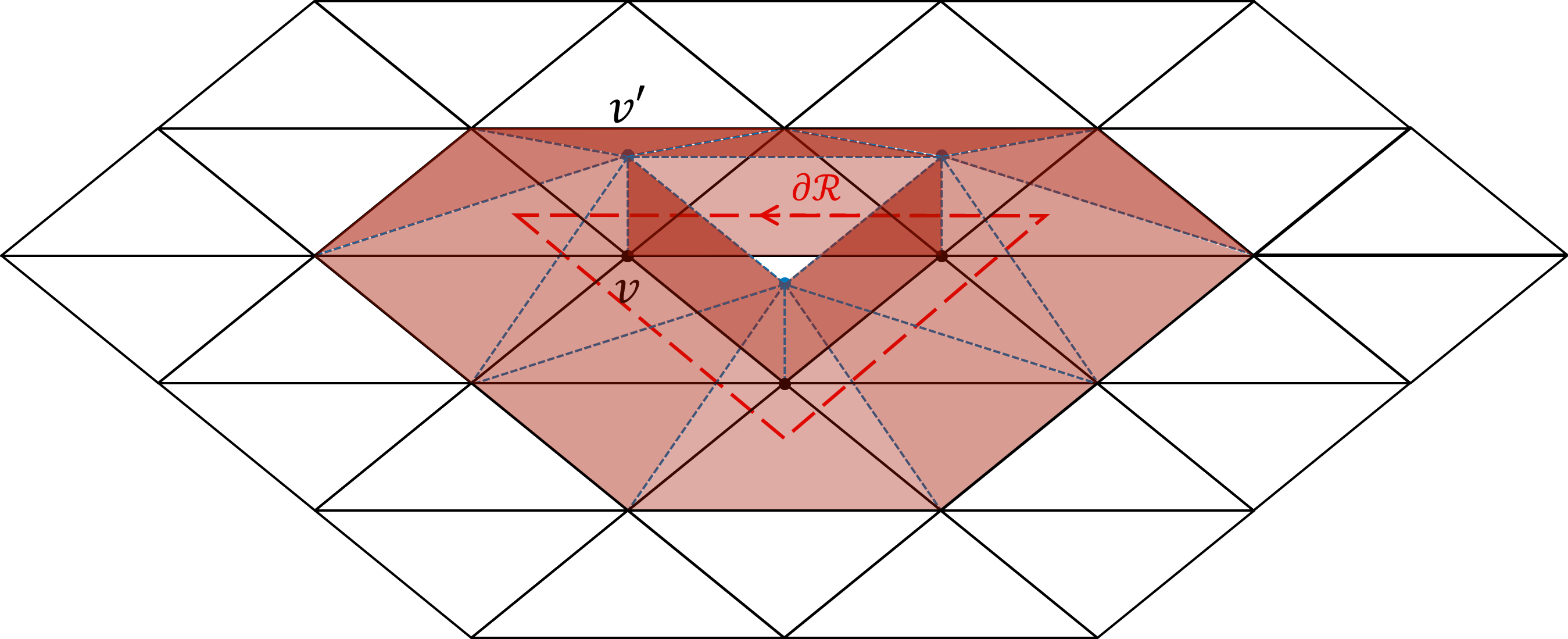}
    \caption{}
    \label{fig:closed_defect}
    \end{subfigure}
    \caption{(a) Symmetry action inside of region $\mathcal{R}$ ``lifts'' $\mathcal{R}$ such that all the simplex in the region correspond to $\Tilde{\omega}$. (b) This symmetry action can 
 be equivalently  regarded as the insertion of symmetry branch line on $\partial\mathcal{R}$. }
\end{figure}

There are two  important remarks here.  The first is that inserting a symmetry branch line on an SPT state is to create a point symmetry defect $0_x$, move it along $\partial\mathcal{R}$ and annihilate it with $0_{x^{-1}}$. The multiplication rule above indicates that the fusion between point defects is $0_x \cdot 0_y = 0_{xy}$.
Second, we can consider the global symmetry action $G^y$ on the state with a branch line on $\partial\mathcal{R}$, i.e., $\tilde{\mathcal{B}}^x_{\partial\mathcal{R}}\ket{\Psi_{\text{SPT-pre}}}$. It turns out that 
\beq
    G^y \tilde{\mathcal{B}}^x_{\partial\mathcal{R}}\ket{\Psi_{\text{SPT-pre}}}=&G^yG^x_{\mathcal{R}}\ket{\Psi_{\text{SPT-pre}}}\\
    =&G^yG^x_{\mathcal{R}}G^{y^{-1}}\ket{\Psi_{\text{SPT-pre}}}\\
    =&G^{yxy^{-1}}_{\mathcal{R}}\ket{\Psi_{\text{SPT-pre}}}\\
    =&\tilde{\mathcal{B}}_{\partial\mathcal{R}}^{yxy^{-1}}\ket{\Psi_{\text{SPT-pre}}},
\eeq
where $G^y_{\mathcal{R}}\equiv\prod_{v\in \mathcal{R}}G^y_v$. In other words, the global symmetry transformation $\rho^y$ 
 on $0_x$ is $\rho^y(0_x)=0_{yxy^{-1}}$.

\subsection{More on Symmetry Branch Lines}
\label{sec:more on sbl}

To discuss further the symmetry branch lines, we first remind the readers of some definitions in group cohomology. Given a 3-cocycle $\omega(g,h,l)$ as a representative of element in $H^3(G,U(1))$, the slant product is defined as \cite{propitius1995topological}
\beq\label{eq:thetadef}
    \theta_x(g,h)\equiv\frac{\omega(x,g,h)\omega(g,h,(gh)^{-1}xgh)}{\omega(g,g^{-1}xg,h)},
\eeq
which is naturally a conjugated 2-cocycle, i.e., a representative in $H^2(G,U(1)[G])$. Namely, it satisfies the following condition,
\beq
    \tilde{\delta}\theta_x(g,h,l)\equiv\frac{\theta_{g^{-1}xg}(h,l)\theta_x(g,hl)}{\theta_x(gh,l)\theta_x(g,h)}=1.
\eeq
When $\theta$ is a representative of the trivial element in $H^2(G,U(1)[G])$, there exists a conjugated 1-cochain $\epsilon$, such that \beq \label{eq:conj1cochain}
\theta_x(g,h)=\tilde{\delta}\epsilon_x(g,h)\equiv\frac{\epsilon_{g^{-1}xg}(h)\epsilon_x(g)}{\epsilon_x(gh)}.
\eeq
There is another product that will also become useful later:
\beq
    \gamma_g(x,y)\equiv\frac{\omega(x,y,g)\omega(g,g^{-1}xg,g^{-1}yg)}{\omega(x,g,g^{-1}yg)},
\eeq
which, however, is not a 2-cocycle nor a conjugated one.

For a general state with a nontrivial pre-gauge structure, we give a conjecture for the expression of symmetry branch lines. 
We can follow the similar idea as in the previous section to introduce the branch line operator $\mathcal{B}^x_{\partial\mathcal{R}}$ from the symmetry action in the region $\mathcal{R}$, when all the plaquettes on $\partial\mathcal{R}$ are fluxless (i.e. $\prod_{e\in\partial p}h_e=1$), and the flux on each plaquette $p\in\mathcal{R}$ ($\prod_{e\in \partial p}h_e$) is in the centralizer group $\mathcal{Z}_x$, see Appendix~\ref{sec: branch line operator appendix} for details. If we start from vertex $v$ and go along $\partial\mathcal{R}$, the holonomy, i.e., the product of group elements on all the edges along the path, is $g_v \prod_e h_e g_v^{-1}$, and the resulting symmetry branch line is 
\beq
    \mathcal{B}^{x}_{\partial\mathcal{R}}&=\sum_{g}\mathcal{B}^{x,g}_{\partial\mathcal{R}}\epsilon_{g_v x g_v^{-1}}(g),
\eeq
with  the operator on the r.h.s. being
\beq
\mathcal{B}^{x,g}_{\partial\mathcal{R}}&\equiv\sum_{g_v}\mathcal{L}^{x}_{\partial\mathcal{R}} W^{g_v x g_v^{-1}}_{\partial\mathcal{R}}\Theta^{g_v x g_v^{-1}}_{\partial\mathcal{R}}\delta_{g,g_v (\prod_e h_e )g_v^{-1}}\ket{g_v}_v\bra{g_v},
\eeq
where $\epsilon_x(g)$ is a 1-cochain defined in Eq.~(\ref{eq:conj1cochain}), and $v$ is a reference vertex. The phase $W^{g_v x g_v^{-1}}_{\partial\mathcal{R}}$ is the product of cocycles associated to the tetrahedrons in Fig.~\ref{fig:closed_defect}, with $h_{v'v}=x$ and $g_{v'}=g_v$, namely,
\beq
W^{g_v x g_v^{-1}}_{\partial\mathcal{R}}=\prod_{\text{tetra}\in \partial\mathcal{R}}\omega(\text{tetra})^s.
\eeq
The operator $\Theta^{g_v x g_v^{-1}}_{\partial\mathcal{R}}$ is defined in Eq.~\eqref{eq:Theta},
and $\mathcal{L}^x_{\partial\mathcal{R}}$ is defined as follows.

\begin{widetext}
\begin{equation}
\label{eq:Lx}
    \begin{tikzcd}
	& {g_1} && {g_2} && {g_3} && {g_4} \\
	{\mathcal{L}_l^x} &&&&&&& \textcolor{white}{ } \\
	& {g_5} && {g_6} && {g_7} \\
	& {g_1} && {g_2} && {g_3} && {g_4} \\
	{=} \\
	& {g_5} && {g_6} && {g_7}
	\arrow["{k_1}"', from=3-2, to=1-2]
	\arrow["{k_2}"', from=3-4, to=1-4]
	\arrow["{k_3}"', from=3-6, to=1-6]
	\arrow["{h_1}"', from=1-4, to=1-2]
	\arrow["{h_2}"', from=1-6, to=1-4]
	\arrow["{h_3}"', from=1-8, to=1-6]
	\arrow["{h_3}"', from=4-8, to=4-6]
	\arrow["{h_2}"', from=4-6, to=4-4]
	\arrow["{h_1}"', from=4-4, to=4-2]
	\arrow["{x\, k_1}"', from=6-2, to=4-2]
	\arrow["{h_1^{-1}xh_1 \, k_2}"', from=6-4, to=4-4]
	\arrow["{(h_1h_2)^{-1}x(h_1h_2)\, k_3}"', from=6-6, to=4-6]
	\arrow["l", color={rgb,255:red,214;green,92;blue,92}, dashed, from=2-1, to=2-8]
\end{tikzcd}
\end{equation}
\end{widetext}

In order to make $\mathcal{L}^x_{\partial \mathcal{R}}$ well-defined, we require that $x$ and $\prod_e h_e$ commute. For an SPT state, we have that $\prod_e h_e=1$, and this is trivially satisfied. 
With the above discussions, we
 can
 derive the multiplication of branch line operators (see Appendix~\ref{sec: branch line operator appendix} for the proof),
\begin{align}
    \mathcal{B}^{x,g}_{\partial\mathcal{R}}\,\mathcal{B}^{y,g'}_{\partial\mathcal{R}}=\mathcal{B}^{xy,g}_{\partial\mathcal{R}}\,\gamma_g(g_v x g_v^{-1},g_v y g_v^{-1})\,\delta_{g,g'}.
    \label{eq:fusion}
\end{align}

We also note that given a 3-cocycle $\omega$, the phase factor $\epsilon$ is not unique. One can always replace $\epsilon_x$ by $\epsilon_x v_x$, where $v_x$ satisfies $\tilde{\delta}v_x(g,h)\equiv1$. This corresponds to a different choice of $0_x$ in $\mathcal{C}_x$, and we will illustrate this with concrete examples below.

We recall the gauging map in the presence of the pre-gauge structure, 
\beq
    \Gamma:\ \  \ket{\{g_i\}}_v\ket{h_{ij}}_e\rightarrow \ket{\{g_i h_{ij}g_j^{-1}\}}_e.
    \label{eq:generalized gauging map}
\eeq
Under this, the operator $\mathcal{B}^x_{\partial\mathcal{R}}$ is mapped to 
\beq
\Gamma(\mathcal{B}^x_{\partial\mathcal{R}})&=\frac{1}{|G|}\sum_{k\in G} \mathcal{L}^{kxk^{-1}}_{\partial \mathcal{R}} W^{k x k^{-1}}_{\partial\mathcal{R}}\Theta^{k xk^{-1}}_{\partial\mathcal{R}}\epsilon_{k x k^{-1}}(kgk^{-1}).
\eeq
Notice that after gauging, the $g_v$ dependence of the operator is summed over as in the sum of $k\in G$ above. Therefore, the resultant operator under the gauging map  does not depend on any reference vertex.

In the case when $\omega\equiv1$, we can choose $\epsilon\equiv1$, then the operator $\Gamma(\mathcal{B}^x_{\partial\mathcal{R}})$ is reduced to  the trace of the ribbon operator that creates an $x$-fluxed anyon in the quantum double (see, e.g., Ref.~\cite{BombinDelgado2008}),
\beq
    F^{C_x,\bold{1}}_{\partial\mathcal{R}}=\frac{1}{|\mathcal{Z}_x|}\sum_{a\in \mathcal{Z}_x}F^{c_i,b_i a b_j^{-1}}_{\partial\mathcal{R}},
    \label{eq:ribbon qd}
\eeq
where on the left-hand side,  $C_x$ is the conjugacy class of $x$, $\bold{1}$ is the trivial representation of the centralizer group $\mathcal{Z}_x$. In order to construct the operators $F^{C_x,\bold{1}}_{\partial\mathcal{R}}$, one enumerates the elements of the conjugacy class as $C_x=\{c_i\}$, together with a suitable subset $\{b_i\}_{i=1}^{|C_x|}\subset G$ such that $c_i=b_i x
b_i^{-1}$. The operator $F^{C_x,\bold{1}}_{\partial\mathcal{R}}$ on the left-hand side is a ribbon operator labeled by topological charges, and $F^{c_i,b_i n b_j^{-1}}_{\partial\mathcal{R}}$ on the right-hand side is a ribbon operator in a basis labeled by group elements.  

For a generic 3-cocycle $\omega$, when $x$ is in the center of $G$, the $x$-fluxed anyon is abelian. Then we have the following  relation via the map $\Gamma$,
\beq
    \Gamma(\mathcal{B}^{x,g}_{\partial\mathcal{R}})=F^{x,g}_{\partial\mathcal{R}},
\eeq
where  $F^{x,g}_{\partial\mathcal{R}}$ is a ribbon operator defined in TQD for abelian groups in Ref.~\cite{MesarosRan2013}. 

The algebra of ribbon operators can be inferred from the quasi-Hopf algebra
\cite{propitius1995topological, dijkgraaf1991quasi}. To be more concrete, suppose we insert an $x$-flux on ribbon $l$. Operator $F^{x,g}_{l}$ thus 
satisfies the multiplication rule
\beq
    F^{x,g}_{l}F^{y,g'}_{l}=F^{xy,g}_{l}\delta_{g,g'}\gamma_g(x,y),
    \label{eq:quasi-hopf}
\eeq
which is consistent with our result of  the branch line multiplication in Eq.~(\ref{eq:fusion}), since we expect the gauging map to preserve the operator algebra, which is quasi-Hopf in this case.

One thing to notice is that, in order to write down $\mathcal{B}^x_{\partial\mathcal{R}}$, we have assumed the existence of $\epsilon$. This is not always possible. When such $\epsilon$ does not exist, even when $x$ is in the center of $G$, the $x$-fluxed anyon can still be nonabelian~\cite{propitius1995topological}. For example, if $\omega$ is a type-3 cocycle of $Z_2\times Z_2\times Z_2$, the anyons in the TQD are generally nonabelian. Therefore, one could not expect to write down ribbon operators as we have defined above. However, when we gauge one of the $Z_2$ groups from the SPT, we would enter an SET order with global symmetry $Z_2\times Z_2$. It turns out that we can try to write the branch line operators in the SET order, and from the algebra of which, one can infer the symmetry fractionalization patterns. We will leave this for further discussion later in this paper. 

Throughout this paper, we have mostly used branch lines on closed curves. As we have seen earlier, for closed branch line operators, we need to specify a reference vertex $v$. Imagine if we could define branch line operators on an open ribbon that starts from vertex $v_1$ and ends at vertex $v_n$, we have two natural reference vertices. In the case when $G$ is abelian, the gauging map will take such operators to ribbon operators defined in \cite{cheng2017exactly}. If we assume that the multiplication rule stays the same, then we have
\beq
    \mathcal{B}^x_l \mathcal{B}^{x^{-1}}_l= \sum_{g_1,g_n}\beta_{g_n x g_n^{-1}}(g_{n}g_{1}^{-1})\ket{g_1,g_n}_{v_1,v_n}\bra{g_1,g_n},
\eeq
where the factor 
\beq \label{eq:betaxg}
\beta_x(g)\equiv\epsilon_x(g)\epsilon_{x^{-1}}(g)\gamma_g(x,x^{-1})
\eeq
is the only object that is `pumped' out when we apply the anomalous non-onsite transformation and its inverse accordingly. One can check from the cocycle condition (see also Ref.~\cite{dijkgraaf1991quasi}) that
\beq
    \frac{\beta_{g^{-1}xg}(h)\beta_x(g)}{\beta_x(gh)}=1,
\eeq
showing that $\beta_x(g)$ satisfies the `twisted' cocycle condition.  One can thus use this to write 
\beq
    \beta_{g_n x g_n^{-1}}(g_{n}g_{1}^{-1})=\beta_{g_n x g_n^{-1}}(g_n)\beta_{g_1 x g_1^{-1}}(g_1)^{-1}.
\eeq
If $v$ is  one of the endpoints of the defect operators, then we essentially pump a factor $\beta_{g_v x g_v^{-1}}(g_v)$ to the state. The phase factor $\beta_x(g)$ is a conjugated 1-cocycle, i.e., $\beta_x(g)\in H^1(G,U(1)[G])\simeq \oplus_i H^1(\mathcal{Z}_i,U(1))$, where $i$ labels conjugacy classes of $G$, and $\mathcal{Z}_i$ is the corresponding centralizer group. We will call this factor the SPT pumping factor, since, by fusing defect operators, we pump a lower dimensional SPT state (in this case a $0$d SPT state) on the boundary of the line $l$ after the application of $\mathcal{B}^x_l \mathcal{B}^{x^{-1}}_l$. This is a generalization of a previous abelian case analyzed in Ref.~\cite{wen2022bulk}. If $G$ is abelian, for every element $x\in G$, the factor $\beta_x(g)$ is a 1d representation of group $G$, i.e., one pumps a 0d $G$-SPT state on the endpoints of an open ribbon.
\begin{figure}[h]
    \centering
    \includegraphics[width=\linewidth]{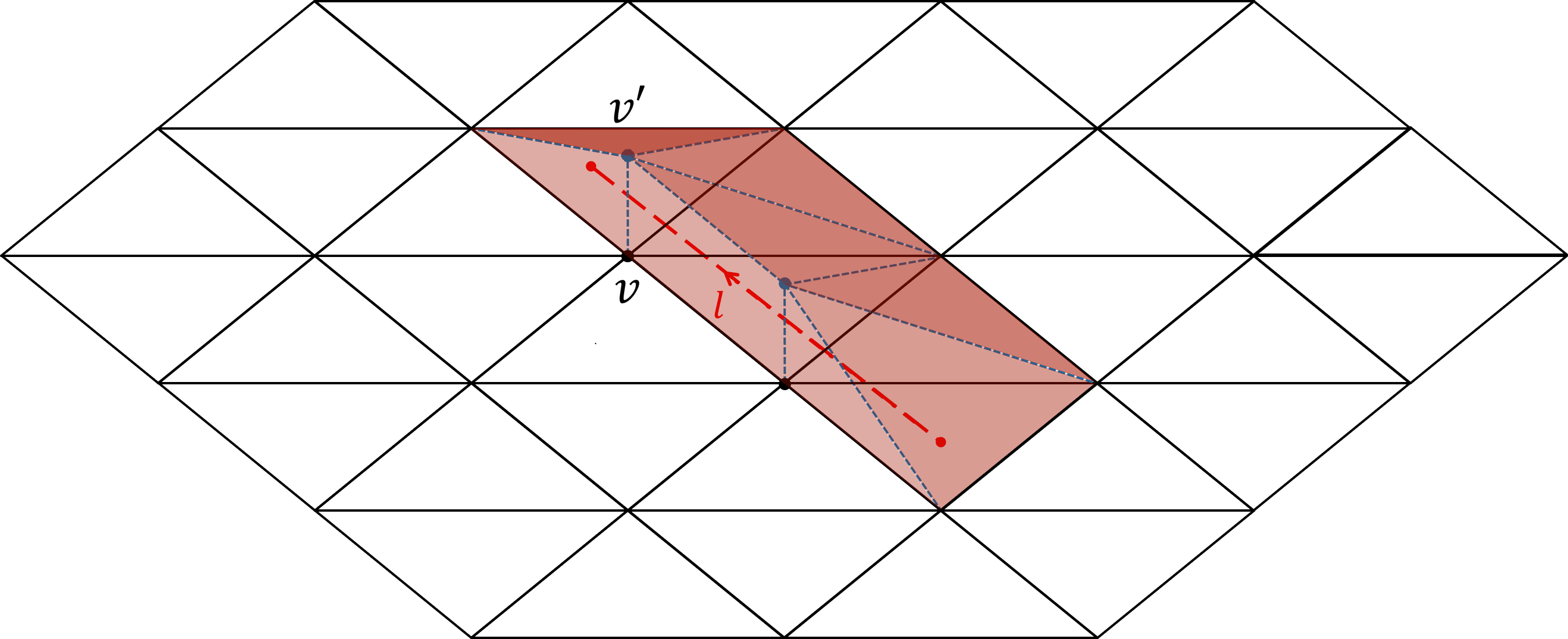}
    \caption{An open defect on a triangulated lattice.}
    \label{fig:open_defect}
\end{figure}

\subsection{Symmetry Branch Lines in SET Phase}
\label{sec:SymmetryBranchLinesSET}

As discussed in previous sections, we can gauge a normal subgroup $N$ of $G$. Then we enter an $N$ SET phase with global symmetry $Q=G/N$. Any group element $g$ in $G$ can be decomposed as $g=q(g)n(g)$. The SET wave function is 
 \beq
    \ket{\Psi_{\text{SET}}}=\sum_{\{q_v\},\{n_e\}}\Omega(\{q_i n_e q_j^{-1}\})\bigotimes_v\ket{q_v}\bigotimes_e\ket{n_e}.
 \eeq 
 
When the group element $x\in G$ commutes with all the elements $n\in N$ in the normal subgroup, a branch line operator $\mathcal{B}^x_{\partial\mathcal{R}}$ we introduced for the $G$-SPT state will be mapped to another operator $H^x_{\partial\mathcal{R}}$ under the gauging map $\Gamma_N$. Further, the operators defined as such respect the same multiplication rules as $\mathcal{B}^x_{\partial\mathcal{R}}$ do. Denoting $q_v=q(g_v)$, the operator $H^x_{\partial\mathcal{R}}$ is written as
\begin{align}
    H^{x}_{\partial\mathcal{R}}&=\sum_g H^{x,g}_{\partial\mathcal{R}}\epsilon_{q_v x q_v^{-1}}(g)\delta_{g,q_v (\prod_e n_e )q_v^{-1}},\\
&   \mbox{with}\,\, H^{x,g}_{\partial\mathcal{R}}=\mathcal{L}^{x}_{\partial\mathcal{R}} W^{q_v x q_v^{-1}}_{\partial\mathcal{R}}\Theta^{q_v x q_v^{-1}}_{\partial\mathcal{R}}\ket{q_v}\bra{q_v}.
\end{align}

When $x\in N$, the operator is a ribbon operator creating a gauge flux in the SET. When $n(x)=1$, i.e., $x=s(q)$ for some element $q\in Q$, the operator creates a flux that corresponds to an element in the global symmetry group and thus is a branch line operator. The multiplication rule of such operators is
\begin{align}
    H^{x,g}_{\partial\mathcal{R}}\,H^{y,g'}_{\partial\mathcal{R}}=H^{xy,g}_{\partial\mathcal{R}}\,\gamma_g(q_v x q_v^{-1},q_v y q_v^{-1})\,\delta_{g,g'},
    \label{eq:Hoperatormultiply}
\end{align}
 which, in turn, leads to
\beq
    H^{x}_{\partial\mathcal{R}}\,H^{y}_{\partial\mathcal{R}}=&\sum_{g}H^{xy,g}_{\partial\mathcal{R}}\beta_{q_v x q_v^{-1},q_v y q_v^{-1}}(g)\delta_{g,q_v(\prod_e n_e )q_v^{-1}},
    \label{eq:H_fusion}
\eeq
where the phase factor $\beta$ on the r.h.s. is 
\beq
    \beta_{x,y}(g)=\epsilon_{x}(g)\epsilon_{y}(g)\gamma_g(x,y).
    \label{eq:beta_xy}
\eeq

When we take two branch line operators $H^{x,g}_l$ and $H^{y,g}_l$, i.e., both $n(x)=n(y)=1$, and multiply them together, then the resulting $H^{xy,g}_l$ is not necessarily a branch line operator, because $n(xy)$ is not always trivial. As we will see in more details later, this indicates a nontrivial symmetry fractionalization class (SFC) of the SET order.

We can always apply a finite-depth local unitary to take all the vertex DOFs to the identity element (see appendix~\ref{sec:local unitary}), such that the state becomes
\beq
    \ket{\Psi}=\sum_{\{n_e\}}\Omega(\{n_e\})\ket{\{n_e\}}_e\otimes_v\ket{1}_v.
\eeq
This is a TQD with the 3-cocycle $\nu(n_1, n_2,n_3)$ being the restriction of $\omega(g_1,g_2,g_3)$ on subgroup $N$.   An abelian anyon in this model is determined by its flux (i.e., conjugacy class $C_a$) and charge (i.e., a conjugated 1-cochain $\mu_a$ such that $\tilde{\delta}\mu_a=\theta_a|_N$, i.e., $\mu_a=\epsilon_a|_N$). The symmetry action on the anyons are given by 
\begin{align}
    &\rho^x: a\text{-flux} ~ \rightarrow ~ x^{-1}ax\text{-flux},\label{eq: sym automorphism flux}\\
    &\rho^x: \mu_a  ~ \rightarrow ~ \mu'_{x^{-1}ax}. \label{eq: sym automorphism charge}
\end{align}

From our previous analysis, we know that the automorphism $\rho^x$ maps $a$-fluxed anyon to $x^{-1}ax$-fluxed anyon, thus we have Eq.~\eqref{eq: sym automorphism flux}. Furthermore, the automorphism will map a chargeon (i.e., a representation $\mu\in Rep(N)$) to $\mu'$, where $\mu'(x^{-1}ax)=\mu(a)$. Therefore, we can infer the general map of the anyon charge as
\beq
\mu'_{x^{-1}ax}(x^{-1}nx)=\frac{\theta_a(x,x^{-1}nx)}{\theta_a(n,x)}\mu_{a}(n).
\label{eq: sym automorphism general}
\eeq

However, the symmetry branch line operators we introduced above for the SET states assumed that $x$ commutes with all the elements in normal subgroup $N$. Therefore, the results obtained from analyzing such operators are limited to the case when the symmetry does not change the anyon type. To determine the SFC in this case, without loss of generality, we will suppose that the global symmetry is $Z_2$ in most of the cases from now on. (Examples beyond $Z_2$ will be discussed later.)
We first pick one abelian anyon in $\mathcal{C}_x$, i.e., we pick one $\epsilon_x$ cochain such that the corresponding branch line operator creates $0_x$. Then the SFC should be determined by the fusion rule $0_x\cdot 0_x=\mathcal{w}_{\tiny \openone}$. 

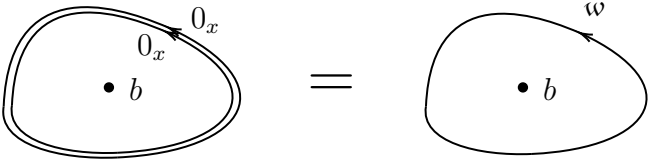
\begin{figure}[h]

\tikzset{every picture/.style={line width=0.75pt}}       

\begin{tikzpicture}[x=0.45pt,y=0.45pt,yscale=-1,xscale=1]

\draw    (66,138.47) .. controls (71,23.47) and (172,66.94) .. (195,76.47) ;
\draw    (66,138.47) .. controls (61,168.47) and (110,180.47) .. (155,178.47) ;
\draw    (155,178.47) .. controls (259.48,173.49) and (287.72,120.01) .. (196.39,77.11) ;
\draw [shift={(195,76.47)}, rotate = 24.81] [color={rgb, 255:red, 0; green, 0; blue, 0 }  ][line width=0.75]    (10.93,-3.29) .. controls (6.95,-1.4) and (3.31,-0.3) .. (0,0) .. controls (3.31,0.3) and (6.95,1.4) .. (10.93,3.29)   ;
\draw  [fill={rgb, 255:red, 0; green, 0; blue, 0 }  ,fill opacity=1 ] (144,123.5) .. controls (144,121.57) and (145.57,120) .. (147.5,120) .. controls (149.43,120) and (151,121.57) .. (151,123.5) .. controls (151,125.43) and (149.43,127) .. (147.5,127) .. controls (145.57,127) and (144,125.43) .. (144,123.5) -- cycle ;
\draw    (414,139.47) .. controls (419,24.47) and (520,67.94) .. (543,77.47) ;
\draw    (414,139.47) .. controls (409,169.47) and (458,181.47) .. (503,179.47) ;
\draw    (503,179.47) .. controls (607.48,174.49) and (635.72,121.01) .. (544.39,78.11) ;
\draw [shift={(543,77.47)}, rotate = 24.81] [color={rgb, 255:red, 0; green, 0; blue, 0 }  ][line width=0.75]    (10.93,-3.29) .. controls (6.95,-1.4) and (3.31,-0.3) .. (0,0) .. controls (3.31,0.3) and (6.95,1.4) .. (10.93,3.29)   ;
\draw  [fill={rgb, 255:red, 0; green, 0; blue, 0 }  ,fill opacity=1 ] (492,123.5) .. controls (492,121.57) and (493.57,120) .. (495.5,120) .. controls (497.43,120) and (499,121.57) .. (499,123.5) .. controls (499,125.43) and (497.43,127) .. (495.5,127) .. controls (493.57,127) and (492,125.43) .. (492,123.5) -- cycle ;
\draw    (59.03,139.57) .. controls (64.39,16.26) and (172.68,62.87) .. (197.34,73.09) ;
\draw    (59.03,139.57) .. controls (53.66,171.73) and (106.2,184.6) .. (154.45,182.45) ;
\draw    (154.45,182.45) .. controls (266.47,177.12) and (296.75,119.77) .. (198.82,73.78) ;
\draw [shift={(197.34,73.09)}, rotate = 24.81] [color={rgb, 255:red, 0; green, 0; blue, 0 }  ][line width=0.75]    (10.93,-3.29) .. controls (6.95,-1.4) and (3.31,-0.3) .. (0,0) .. controls (3.31,0.3) and (6.95,1.4) .. (10.93,3.29)   ;

\draw (168.86,76.4) node [anchor=north west][inner sep=0.75pt]  [font=\large]  {$0_{x}$};
\draw (162,113.4) node [anchor=north west][inner sep=0.75pt]  [font=\large]  {$b$};
\draw (543.86,49.4) node [anchor=north west][inner sep=0.75pt]  [font=\large]  {$\mathcal{w}$};
\draw (510,113.4) node [anchor=north west][inner sep=0.75pt]  [font=\large]  {$b$};
\draw (213.72,53.69) node [anchor=north west][inner sep=0.75pt]  [font=\large]  {$0_{x}$};
\draw (312,110) node [anchor=north west][inner sep=0.75pt]  [font=\Huge]  {$=$};

\end{tikzpicture}

\caption{The operator $(H^{x}_{\partial\mathcal{R}})^2$ applying on curve $\partial\mathcal{R}$ is equivalent to the braiding phase between the $g$-flux of an anyon $b$ ($g$ is the holonomy along $\partial\mathcal{R}$), and the charge of the anyon $\mathcal{w}$.\label{fig:braiding}}
\end{figure}

As shown in Fig.~\ref{fig:braiding}, the operator $(H^{x}_{\partial\mathcal{R}})^2$, when applied  on the DOFs on curve $\partial\mathcal{R}$, is to create two point defects $0_x$, move them along $\partial\mathcal{R}$, and finally make them cancel with each other.  
This process will introduce a phase factor that corresponds exactly to the braiding phase between the $g$-flux of an anyon $b$ ($g$ is the holonomy along $\partial\mathcal{R}$), and the charge of the anyon $\mathcal{w}$. 
According to Eq.~\eqref{eq:H_fusion}, this phase factor is $\beta_{x,x}(g)$. 
Notice that, when the symmetry group is $Q=G/N=Z_2$, the $q_v$ dependence in Eq.~\eqref{eq:H_fusion} disappears.
In general, such as in more than 2-step gauging, this factor still depends on $q_v$.  
Since the flux of $\mathcal{w}$ is $x^2$ (which is not necessarily the identity element, as $x$ is an embedding of the generator of $Q=Z_2$ into $G$),  suppose the charge of anyon $b$ is given by a 1-conjugated-cochain $\mu_g$ (such that $\tilde{\delta}\mu_g=\theta_g|_N$), then the braiding  phase between the $x^2$-flux and $\mu_g$ charge is $\mu_g(x^2)$. To determine $\mathcal{w}$, we write down the braiding phase between anyon $\mathcal{w}$ and $b$, which is the product of the above two factors,
\beq
    B(\mathcal{w},b)=\beta_{x,x}(g)\mu_g(x^2),
    \label{eq:braiding}
\eeq
and we illustrate this relation in Fig.~\ref{fig:braiding-phase}.
\begin{figure}[t]
\includegraphics[width=0.7\linewidth]{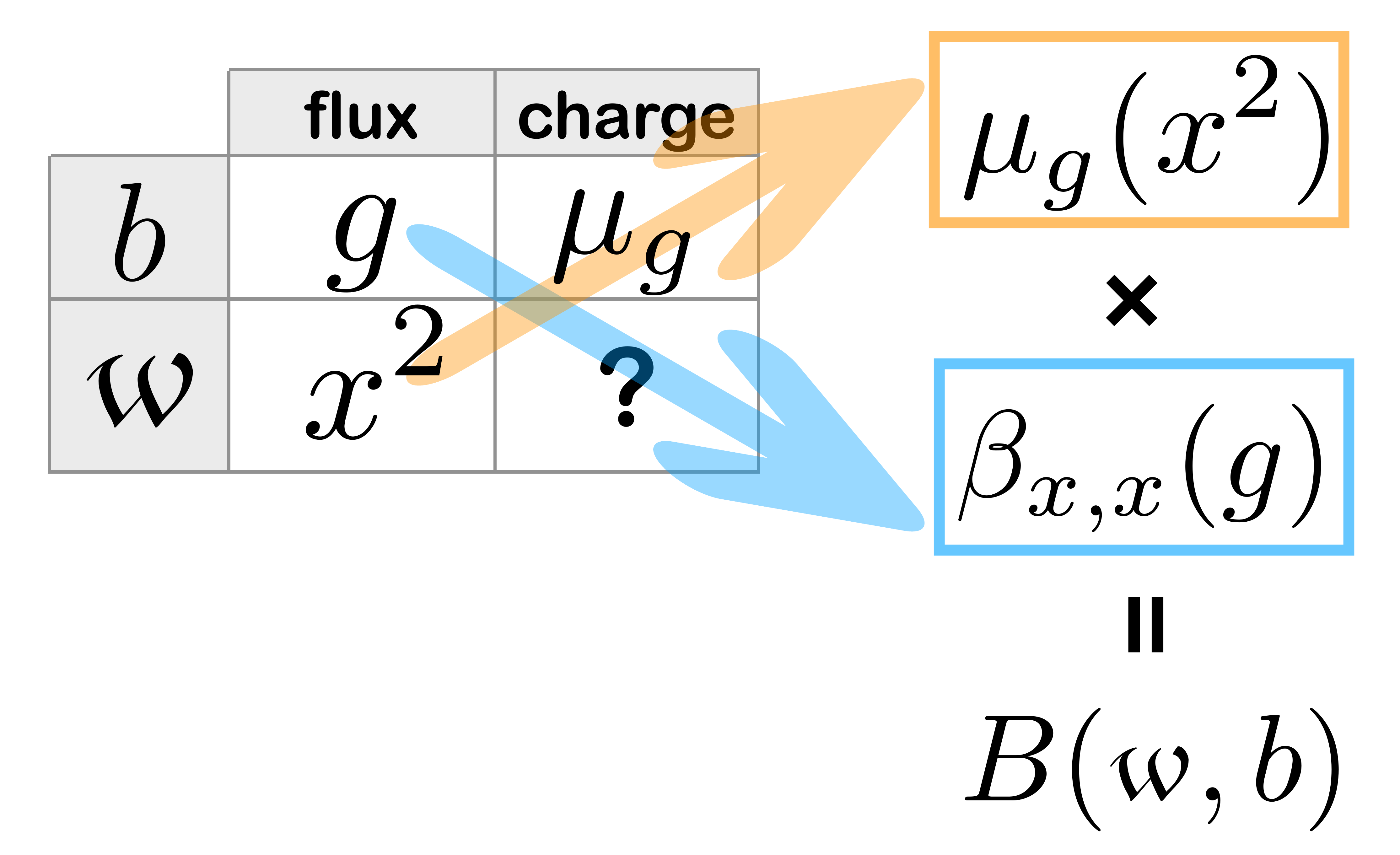}
\caption{
The braiding phase between a $g$-fluxed $\mu_g$-charged anyon $b$ and the anyon $\mathcal{w}(=0_x \times 0_x)$ is written as a product of two braiding phases, as presented in Eq.~\eqref{eq:braiding}. The second factor enclosed by the blue line is a consequence of Eq.~\eqref{eq:H_fusion}.
From the phase $B(\mathcal{w},b)$, we infer the unknown charge of the anyon $\mathcal{w}$ (written as ``?"), which determines the SFC.}
\label{fig:braiding-phase}
\end{figure}

From  this result, one can determine the SFC of the SET state after gauging some normal subgroup. Since we can always attach a 1-conjugated-cocycle $v_x$ to $\epsilon_x$ such that $\epsilon'_x=\epsilon_x v_x$, the phase $B(\mathcal{w},b)$ we derived above also has this ambiguity. But it just corresponds to the freedom to choose any abelian anyon in $\mathcal{C}_x$ that is labeled as $0_x$. Furthermore, we can always attach a coboundary to the 3-cocycle $\omega$ such that $\omega'=\omega\,\delta\alpha$, where $\alpha$ is a 2-cochain. One can check that the phase $B(\mathcal{w},b)$ is always the same as long as we choose $\omega$ in the same cohomology class.

In the next few subsections, we will use the above result to determine the SFC in different SET phases, resulting from gauging a normal subgroup $N$ of an SPT phase. In the later subsections, we will deal with  the case when symmetry does change the anyon type and conjecture the form of branch line operators  so as to use them to discuss the SFC of the SET orders that  we obtain from gauging Dihedral SPT states.

\subsection{SETs from partially gauging $Z_2\times Z_2$ SPT}
\label{sec:gaugingZ2timesZ2SPT}

The third cohomology of $Z_2\times Z_2$ has three generators, two of which are of type-1, and the other one is of type-2~\cite{propitius1995topological}. Assume the two generators of the $Z_2\times Z_2$ group are $t$ and $x$, then we can denote any group elements as $g=(g^{(1)},g^{(2)})\equiv t^{g^{(1)}}x^{g^{(2)}}$, where $g^{(1)},g^{(2)} \in \{0,1\}$. The representative of the 3-cocycle is then
\beq
    \omega_{g,h,l}=e^{\pi i (k_1 g^{(1)}h^{(1)}l^{(1)}+k_2g^{(2)}h^{(2)}l^{(2)}+k_3g^{(1)}h^{(2)}l^{(2)})},
    \label{eq:Z2Z2-cocycle}
\eeq
where $k_1,k_2,k_3=0,1$.

In the SET phase obtained from gauging the first $Z_2$ group, the anyon set is the same as that of a TQD $D^{\nu}(Z_2)$, where 
\beq 
    \nu(g,h,l)=e^{\pi i k_1 g^{(1)}h^{(1)}l^{(1)}}
\eeq
is a representative in $H^3(Z_2,U(1))$ obtained by the restriction of $\omega$ in the first $Z_2$ group (namely, one restricts the cocycles to those with $g^{(2)}=h^{(2)}=l^{(2)}=0$) and is used as the `twisting' of Kitaev's $Z_2$ QD model.

Therefore, when $k_1=0$ it is a toric code model, and when $k_1=1$ it is a double-semion model. Since $Z_2\times Z_2$ is the trivial central extension of $Z_2$ by $Z_2$, the symmetry action on anyons is trivial.

The second $Z_2$ group $\{1,x\}$ represents the global symmetry and therefore we can consider the multiplication of two branch line operators $H^{x}_{\partial\mathcal{R}}$, which according to Eq.~\eqref{eq:H_fusion}, gives 
\beq
    H^{x}_{\partial\mathcal{R}} H^{x}_{\partial\mathcal{R}} =\sum_{g} \beta_{x,x}(g) \delta_{g,\prod_e t_e}.
    \label{eq:algebra-H}
\eeq   
From Eq.~(\ref{eq:beta_xy}) and Eq.\eqref{eq:braiding}, we find the braiding phase between $\mathcal{w}$ and a $t$-fluxed anyon $b$,
\beq
B(\mathcal{w},b)=\epsilon_{x}(g)^2\gamma_g(x,x)=\theta_x(g,g)\gamma_g(x,x)=(-1)^{k_3 g^{(1)}}.
\eeq

For the conjugated 1-cochain we have used $\mu_g(x^2)= \mu_g(\tiny\openone)=1$.
Notice that since $g^2 = \openone$ for $g \in G$, 
we have $\theta_x(g,g)= \epsilon_x(g)^2 / \epsilon_x(g^2)$ by definition but $\epsilon_x(g^2)=\epsilon_x(\openone)=1$ so the second equality above follows.
Therefore, different choices of $t$-fluxed $b$ anyon and different choices of $\epsilon_x$ cochain (i.e., different choices of $0_x$) give rise to the same braiding phase.

Now we discuss the consequence of the resultant braiding $B(\mathcal{w},b)$ in different cases. As mentioned above, when $k_1=0$, we have a toric code model. From our previous general analysis in Sec.~\ref{sec:SymmetryBranchLinesSET}, we can infer that the anyon $\mathcal{w}$ braiding with $b$ ($m$ or $em$) gives rise to a phase $(-1)^{k_3}$. 
Therefore, 
\beq
    0_{x}\times 0_{x}=0 ~\text{or}~ e, ~ k_3=0\,\, \mbox{or}\,\, 1.
\eeq

As mentioned earlier, when $k_1=1$, we have a double-semion model. The fluxless anyon $\mathcal{w}$ braiding with $b$ ($s$ or $\bar{s}$) results in a phase $(-1)^{k_3}$. Therefore,
\beq
    0_x\times 0_x=0 ~\text{or}~ s\bar{s}, ~ k_3=0\,\, \mbox{or}\,\, 1.
\eeq

The discussion of $k_1$ and $k_3$ above completely specifies the SFC of the SET in this case. We have not discussed the consequence of $k_2$, but if we further gauge the second $Z_2$, different values of $k_i$ will give rise to different topological orders, due to the 1-to-1 correspondence between the SPT and TQD phases \cite{LevinGu2012, YOSHIDA2017387}. 
Therefore, we know that the intermediate SETs with different values of $k_2$ must belong to different phases. Since all the topological order parameters of SET, except the SDC, are already fixed by $k_1$ and $k_3$, we can safely conclude that $k_2=0,1$ corresponds to two defectification classes, respectively. Different defectifications intuitively can be regarded as stacking  or gluing different SPT phases~\cite{Barkeshli_2019} to the SET.
This particular case of SET phase was previously discussed in Ref.~\cite{MesarosRan2013}.

If we further gauge the global symmetry $Z_2$ in the SET, then it becomes a twisted quantum double $D^{\omega}(Z_2\times Z_2)$. As we discussed earlier in Sec.~\ref{sec:more on sbl}, for abelian groups, the symmetry branch line operators will be mapped to ribbon operators creating certain abelian anyons after gauging. Indeed, from the Slant product, the 1-conjugated cochain could be chosen as $\epsilon_{t_2}(g)= i^{k_2 g^{(2)}}$. The operator $H^x_{\partial\mathcal{R}}$ is mapped to a ribbon operator creating a $t_2$-flux anyon $\eta$,
\beq
    F^{\eta}_l=F^{t_2,1}_l+i^{k_2} F^{t_2,t_2}_l+F^{t_2,t_1}_l+i^{k_2} F^{t_2,t_1 t_2}_l.
\eeq
Since the gauge group is $Z_2\times Z_2$, the fusing of two such ribbon operators becomes a ribbon operator exciting a flux-less anyon (chargeon) $a$, $(F^{\eta}_l)^2=F^{a}_l$, similar to Eq.~(\ref{eq:algebra-H}),
\beq
    F^{a}_l=F^{1,1}_l+ F^{1,t_2}_l+(-1)^{k_3}F^{1,t_1}_l+(-1)^{k_3} F^{1,t_1 t_2}_l.
\eeq

For example, when $k_1=k_2=k_3=0$, the anyon $\eta$ is just a boson $m$ in toric-code model, and the anyon $a$ is the vacuum anyon. For any values of the parameters, we will see that the multiplication rules of branch line operators become the fusion rules of anyons under the gauging map.

\subsection{SETs from partially gauging $Z_4$ SPT}
We will use both multiplicative and additive representations of abelian groups interchangeably, e.g., $g=2$ means $g=x^2$ in multiplicative representation (for $x$ being the generator of $Z_4$). We take representative cocycles in $H^3(Z_4,U(1))=Z_4$ as 
\beq
    \omega_{g,h,l}=\exp{\frac{2\pi i p}{16}g(h+l-[h+l]_4)},
\eeq
where $p=0,1,2,3$. The slant product $\theta_{x^k}(g,h)=\exp{\frac{2\pi i p}{16}k(g+h-[g+h]_4)}$ corresponds to a projective representation given by $\epsilon_{x^k}(g)=\xi^{kg}$, where $\xi\equiv\exp{\frac{2\pi i p}{16}}$. 

An SET phase can be obtained by gauging the normal $Z_2=\{1,t\}$ group, where $t\equiv x^2$. By restricting $\omega$ in $H^3(Z_2,U(1))$, we have $\nu(g,h,l)=e^{\pi i p ghl}$, where  now $g,h,l=0,1$ are $Z_2$-valued. Therefore, when $p=0$ or 2, it is a toric code, and when $p=1$ or 3, it is a double-semion model. Since $Z_4$ is a central extension of $Z_2$ by $Z_2$, the symmetry action on anyons is trivial.

Let us recall that the branch line operators in the SET ground state are 
\beq                H^{x}_{\partial\mathcal{R}}\equiv\sum_g\epsilon_{x}(g)H^{x,g}_{\partial\mathcal{R}}=\sum_g \xi^{g}H^{x,g}_{\partial\mathcal{R}}.
\eeq

The product of two such branch line operators $H^x_{\partial\mathcal{R}}$ gives rise to a factor (see  Eq.~(\ref{eq:beta_xy})) 
\beq
    \beta_{x,x}(g)=\epsilon_{x}(g)^2\gamma_g(x,x)=e^{\frac{2\pi i p g}{8}},
\eeq
for $g=0,2$. The charge of $g$-fluxed anyon $b$ is given by \beq
\mu_g(h)=e^{\frac{2\pi i pgh}{16}+\frac{2\pi i r_g h}{4}},
\label{eq:mu for z4}
\eeq
for $g,h\in\{0,2\}=s(Q)$. Furthermore, $r_g=0,1$ corresponds to different choices of charges of anyon $b$. Thus, the braiding phase between anyon $\mathcal{w}$ and $b$  should be given as
\beq
\label{eq:betaZ4}
    B(\mathcal{w},b)=\beta_{x,x}(g)\mu_g(t)=e^{\frac{2\pi i p g}{4}}(-1)^{r_g}.
\eeq

When $p=0$, we have a toric code model. When $g=0$, $B(\mathcal{w},b)=(-1)^{r_0}$ where charge of anyon $b$ is given by $\mu_0(t)=(-1)^{r_0}$. There are two chargeons $0$ and $e$, corresponding to $r_0=0$ or $1$, respectively. Therefore, the braiding phase between $\mathcal{w}$ and $0$ ($e$) is $1$ ($-1$), according to Eq.~(\ref{eq:betaZ4}).  Moreover, when $g=2$, the braiding $B(\mathcal{w},b)=(-1)^{r_2}$ where the charge of $b$ is given by $\mu_2(h=t)=(-1)^{r_2}$ according to Eq.~\eqref{eq:mu for z4}. Therefore we could say that the braiding phase between $\mathcal{w}$ and $m$ ($em$) is $1$ ($-1$), which corresponds to $r_2=0,1$ respectively.  Therefore, we have a toric code with the following SFC:
\beq
    0_x\times 0_x=m.
\eeq

When $p=1$, this is a double-semion model. The braiding phase is $B(\mathcal{w},b)=(-1)^{g/2+r_g}$ where the charge of the $g$-fluxed anyon $b$ is given by $\mu_g(h=t)=e^{\frac{\pi i}{2}\cdot\frac{g}{2}+\pi i r_g}$  according to Eq.~\eqref{eq:mu for z4}. When $g=0$, two chargeons $0$ and $s\bar{s}$ correspond to $r_0=0$ and $1$ respectively. Anyon $\mathcal{w}$ braiding with $s\bar{s}$ gives $-1$. When $g=2$, the braiding phase between $\mathcal{w}$ and $s$ ($\bar{s}$) is $-1$ ($1$), which corresponds to $r_2=0,1$ respectively.
Therefore we have a double-semion model with SFC:
\beq
    0_x\times 0_x=s.
\eeq

When $p=2$, this is a toric code model. The braiding phase is $B(\mathcal{w},b)=(-1)^{r_g}$ where the charge of the $g$-fluxed anyon $b$ is given by $\mu_g(h=t)=e^{\pi i(\frac{g}{2}+r_g)}$  according to Eq.~\eqref{eq:mu for z4}. When $g=0$, two chargeons $0$ and $e$ correspond to $r_0=0$ and $1$ respectively. Anyon $\mathcal{w}$ braiding with $0$ ($e$) gives $1$ ($-1$). When $g=2$, the braiding phase between $\mathcal{w}$ and $em$ ($m$) is $1$ ($-1$), which corresponds to $r_2=0,1$ respectively. Therefore, we have a toric code with SFC:
\beq
    0_x\times 0_x=em.
\eeq

When $p=3$, this is a double-semion model. The braiding phase is $B(\mathcal{w},b)=(-1)^{g/2+r_g}$, where the charge of the $g$-fluxed anyon $b$ is given by $\mu_g(h=t)=e^{\frac{3\pi i}{2}\cdot\frac{g}{2}+\pi i r_g}$  according to Eq.~\eqref{eq:mu for z4}. When $g=0$, two chargeons $0$ and $s\bar{s}$ correspond to $r_0=0$ and $1$,  respectively. Anyon $\mathcal{w}$ braiding with $s\bar{s}$ gives $-1$. When $g=2$, the braiding phase between $\mathcal{w}$ and $\bar{s}$ ($s$) is $-1$ ($1$), which corresponds to $r_2=0,1$ respectively. Therefore, we have a double-semion code with SFC:
\beq
    0_x\times 0_x=\bar{s}.
\eeq
One could check that, if we choose other $\epsilon_x$ instead of what we used above, we would derive exactly the same fusion rule as above.

\subsection{SETs from partially gauging  $Z_2^3$ SPT}
\label{sec:gaugingz2cubeSPT}
The third cohomology group of $Z_2^{(1)}\times Z_2^{(2)}\times Z_2^{(3)}$ has seven generators, three of which are of type-1, three of which are of type-2, and one of type-3~\cite{propitius1995topological}. Assume the three generators of the $Z_2^3$ group are $t$, $x_1$ and $x_2$, then we can denote any group elements as $g=(g^{(1)},g^{(2)},g^{(3)})\equiv t^{g^{(1)}}x_1^{g^{(2)}}x_2^{g^{(3)}}$, where $g^{(1)},g^{(2)},g^{(3)} \in \{0,1\}$. For simplicity, in this section, we will demonstrate the analysis for representatives of some of the 3-cocycles, and then derive the general result without further explanation. The representatives that we take are
\beq
    \omega_{g,h,l}=e^{\pi i (k_1 g^{(1)}h^{(1)}l^{(1)}+k_2g^{(1)}h^{(2)}l^{(3)})},
    \label{eq:Z2Z2Z2-cocycle}
\eeq
where $k_1,k_2=0,1$.

In the SET phase obtained from gauging the group $Z_2^{(1)}$, the anyon set is the same as that of a TQD $D^{\nu}(Z_2)$, where 
\beq 
    \nu(g,h,l)=e^{\pi i k_1 g^{(1)}h^{(1)}l^{(1)}}
\eeq
is a representative in $H^3(Z_2,U(1))$ obtained by the restriction of $\omega$ in the first $Z_2$ group. Therefore, when $k_1=0$ it is a toric code model, and when $k_1=1$ it is a double-semion model. Since $Z_2^{(1)}\times Z_2^{(2)}\times Z_2^{(3)}$ is the trivial central extension of $Z_2^{(2)}\times Z_2^{(3)}$ by $Z_2^{(1)}$, the symmetry actions on anyons are trivial.

Since the slant product of the cocycle given above belongs to a class $[\theta]$ that is not the trivial element in $H^2(Z_2^3,U(1)[Z_2^3])$, it is impossible to find $\epsilon_h(g)$, such that 
\beq 
\theta_h(k,g)=\tilde{\delta}\epsilon_h(k,g)\equiv\frac{\epsilon_{h}(g)\epsilon_h(k)}{\epsilon_h(kg)},
\label{eq:epsilonequation}
\eeq
for any $g,h,k\in Z_2^3$. However, in defining the symmetry branch line operators, we only need phase factors $\epsilon_h(g)$ where the group element $h\in Z_2^{(2)}\times Z_2^{(3)}$ and $g\in N=Z_2^{(1)}$. Indeed in this case, there exists such a phase factor that satisfies Eq.\eqref{eq:epsilonequation} when restricting the group elements $h,g$ in their corresponding subgroups. 

The slant product of the cocycle is trivial,
\beq
    \theta_{h}(k,g)=(-1)^{h^{(2)} k^{(1)}g^{(3)}+h^{(3)} k^{(1)}g^{(2)}}=1,
\eeq
when $h\in Z_2^{(2)}\times Z_2^{(3)}$ and $k,g\in Z_2^{(1)}$. Therefore, we can choose $\epsilon_h(g)\equiv1$. 

In general, when the symmetry group is $Q=Z_2\times Z_2$, we take two elements $h_1,h_2\in s(Q)\subset G$ that are the embedding of elements $\tilde{h}_1,\tilde{h}_2\in Q$. The consistency condition of embedding is
\beq
    q(h_1h_2)=s(\tilde{h}_1\tilde{h}_2).
    \label{eq:embeddingequation}
\eeq
The fusion rule of $\mathcal{C}^{\times}_Q$ is of the form,
\beq
    0_{\tilde{h}_1}\times 0_{\tilde{h}_2}=\mathcal{w}(\tilde{h}_1,\tilde{h}_2)\times 0_{\tilde{h}_1\tilde{h}_2}.
\eeq
From our previous analysis, the braiding phase between $0_{\tilde{h}_1}\times 0_{\tilde{h}_2}$ and anyon $b=(g,\mu_g)$ is $B_1=\beta_{h_1,h_2}(g)\mu_g(n(h_1h_2))$. According to Eq.~\eqref{eq:embeddingequation}, the braiding phase between $0_{\tilde{h}_1\tilde{h}_2}$ and anyon $b$ is $B_2=\epsilon_{q(h_1h_2)}(g)$.
As a result, the braiding phase between abelian anyon $\mathcal{w}(\tilde{h}_1,\tilde{h}_2)$ and $b$ should be the ratio
\beq
    B(\mathcal{w}(\tilde{h}_1,\tilde{h}_2),b)=\frac{\epsilon_{h_1}(g)\epsilon_{h_2}(g)}{\epsilon_{q(h_1h_2)}(g)}\gamma_g(h_1,h_2)\mu_g(n(h_1h_2)).
    \label{eq:braidinggeneral}
\eeq

Later on for simplicity, we will use $\mathcal{w}(h_1,h_2)$ to denote $\mathcal{w}(\tilde{h}_1,\tilde{h}_2)$. Since we choose $\epsilon_h(g)\equiv1$ in this case and $n(h_1 h_2)\equiv 1$, we have 
\beq
    B(\mathcal{w}(h_1,h_2),b)=\gamma_g(h_1,h_2)=e^{\pi i k_2 g h_1^{(2)}h_2^{(3)}},
\eeq
for $g\in Z_2^{(1)}$. Since the group extension of $Z_2^{(2)}\times Z_2^{(3)}$ by $Z_2^{(1)}$ corresponds to the trivial element in $H^2(Z_2^{(2)}\times Z_2^{(3)},Z_2^{(1)})$, we know that the abelian anyon $\mathcal{w}(h_1,h_2)$ is always a chargeon for any $h_1,h_2\in Z_2^{(2)}\times Z_2^{(3)}$. When $k_1=0$, we have a $Z_2$ toric code model. From the above braiding phase we can conclude that when $k_2=0$, 
\beq
    \mathcal{w}(h_1,h_2)\equiv 1;
\eeq
when $k_2=1$, 
\beq
    \mathcal{w}(h_1,h_2)=\begin{cases}
    e, & h_1^{(2)}=h_2^{(3)}=1,\\
    1, & \text{others.}
    \end{cases}
\eeq

On the other hand, when $k_1=1$, we have a $Z_2$ double-semion model. From the above braiding phase we can conclude that when $k_2=0$, 
\beq
    \mathcal{w}(h_1,h_2)\equiv 1;
\eeq
when $k_2=1$, 
\beq
    \mathcal{w}(h_1,h_2)=\begin{cases}
    s\bar{s}, & h_1^{(2)}=h_2^{(3)}=1,\\
    1, & \text{others.}
    \end{cases}
\eeq

One can check that all the abelian anyons $\mathcal{w}(h_1,h_2)$'s above satisfy the cocycle condition,
\beq
    \frac{\mathcal{w}(h_2,h_3)\mathcal{w}(h_1,h_2h_3)}{\mathcal{w}(h_1h_2,h_3)\mathcal{w}(h_1,h_2)}=1.
\eeq

If one chooses different $\epsilon_h(g)$ other than what we used above, the derived anyon $\mathcal{w}(h_1,h_2)$ will be differed by a coboundary. Therefore we conclude, different values of $k_2$ will give different symmetry fractionalization patterns that correspond to different elements in $H^2(Z_2^{(2)}\times Z_2^{(3)},\mathcal{A})$, where $\mathcal{A}=Z_2\times Z_2$ is the group of abelian anyons.

One can generalize the above result to an arbitrary 3-cocycle. The cohomology group of $Z_2^3$ can be decomposed as such:
\begin{widetext}
    \begin{equation}
    \begin{tikzcd}
	{H^3(Z_2^3,U(1))} & {=} & {Z_2} & \times & {Z_2^3} & \times & {Z_2} \\
	&& {\text{anyon theory}} && {\text{SDC}} && {\text{SFC}_1} \\
	& \times & {Z_2} & \times & {Z_2} \\
	&& {\text{SFC}_1} && {\text{SFC}_2}
	\arrow["{\text{type-1 of } Z_2^{(1)}}", from=2-3, to=1-3]
	\arrow["{\text{type-1 and 2 of } Z_2^{(2)}\times Z_2^{(3)}}", from=2-5, to=1-5]
	\arrow["{\text{type-2 of } Z_2^{(1)}\times Z_2^{(2)}}", from=2-7, to=1-7]
	\arrow["{\text{type-2 of } Z_2^{(1)}\times Z_2^{(3)}}", from=4-3, to=3-3]
	\arrow["{\text{type-3 of }Z_2^{(1)}\times Z_2^{(2)}\times Z_2^{(3)}}", from=4-5, to=3-5]\,\,.
\end{tikzcd}
\label{eq:Z2cubecohomology}
\end{equation}
\end{widetext}

In this example, we have illustrated 2 out of the 7 generators in $H^3(Z_2^3,U(1))$ as in Eq.~\eqref{eq:Z2Z2Z2-cocycle} and showed that $k_1$ gives the anyon theory and $k_2$ (which is associated with the type-3 cocycle) gives a symmetry fractionalization pattern named SFC$_2$ in the above diagram. To understand the rest of SET properties, we note that the two SFC$_1$'s are the symmetry fractionalization pattern associated with type-2 cocycles of $Z_2^{(1)}\times Z_2^{(2)}$ and $Z_2^{(1)}\times Z_2^{(3)}$, respectively,  which were already discussed in Sec.~\ref{sec:gaugingZ2timesZ2SPT}. The SDC part is the symmetry defectification class associated with cocycles of $Z_2^{(2)}\times Z_2^{(3)}$, both of type-1 and type-2. In the cases when the $Z_2^3$-SPT phase corresponds to the cohomology class which is trivial in the first $Z_2$ subgroup in Eq.~\eqref{eq:Z2cubecohomology}, one can choose a representative that is of some specific form. Then after gauging $Z_2^{(1)}$ subgroup, according to Ref.~\cite{cheng2017exactly}, one can determine the symmetry fractionalization patterns of the SET order, which agrees with our general results above.

\subsection{SETs from partially gauging $D_4$ SPT}
\label{sec:gaugingd4spt}
Now we consider the non-central extension of $Z_2$ by $Z_4$. We write the element in $D_4$ as $\Tilde{g}=(G,g)\equiv x^Ga^g$. We construct a representative of 3-cocycle in $H^3(D_4,U(1))$ as follows:
\beq
\label{eq:D4-cocycle}
    &\omega(\Tilde{g},\Tilde{h},\Tilde{l})  \\
    &=\exp{\frac{2\pi i p_1}{16}g(-1)^{H+L}(h(-1)^L+l-[h(-1)^L+l]_4)}\\
    &\qquad \times \exp{\pi i p_2 GHL+ \pi i p_3 gHL},
\eeq
where $p_1=0,1,2,3$, and $p_2,p_3=0$ or $1$. 
There are four nontrivial abelian normal subgroups in $D_4$, which leads to four options in the first step when gauging this group. We will consider three of them here.

\smallskip \noindent {\bf Gauging $Z_2$}.   The first option is to gauge the normal subgroup $Z_2=\{1,a^2\}$, resulting in a state in an SET that has the same anyon set as $D^{\nu}(Z_2)$, where
\beq
    \nu(g,h,l)=\exp{\frac{2\pi i p_1}{16}ghl}
\eeq
is the restriction of $\omega$ on $Z_2$, i.e. $g,h,l\in \{1,a^2\}$. When $[p_1]_2=0$ it is a toric code model, and when $[p_1]_2=1$ it is a double-semion model. Since the group extension of $Z_2\times Z_2$ by $Z_2$ is central,
the symmetry actions on the anyons are trivial. Therefore, according to Eq.~\eqref{eq:braidinggeneral}, the braiding phase $B(\mathcal{w}(h_1,h_2),b)$ is given by
\beq
    B(\mathcal{w}(h_1,h_2),b)=\frac{\epsilon_{h_1}(g)\epsilon_{h_2}(g)}{\epsilon_{q(h_1h_2)}(g)}\gamma_g(h_1,h_2)\mu_g(n(h_1h_2)).
\eeq

We write the embedding of quotient group elements
\beq
g=(g^{(1)},g^{(2)})\in Z_2\times Z_2\equiv\{1,t_1\}\times\{1,t_2\}
\eeq
in $D_4$ as 
\beq
s(g)=x^{g^{(1)}}a^{g^{(2)}}.
\eeq

From the group multiplication rule, one can infer that the abelian anyons $\mathcal{w}(t_2,t_1)$, $\mathcal{w}(t_2,t_2)$, $\mathcal{w}(t_1t_2,t_1)$ and $\mathcal{w}(t_1t_2,t_2)$ have nontrivial flux, while $\mathcal{w}(h_1,h_2)$ for other $h_1,h_2$ are chargeons. We list the detailed symmetry fractionalization patterns below.

When $[p_1]_2=0$, we have a $Z_2$ toric code model. From the above braiding phase we can conclude that the SFC is characterized by $[\mathcal{w}(h_1,h_2)]\in H^3(Z_2\times Z_2,Z_2\times Z_2)$, where
\beq
    \mathcal{w}(h_1,h_2)=\begin{cases}
    m, & (h_1,h_2)=(t_2,t_1), (t_2,t_2),\\
    & ~~~~~~~~~(t_1t_2,t_1), (t_1t_2,t_2)\\
    1, & \text{others.}
    \end{cases}
\eeq

When $[p_1]_2=1$, we have a $Z_2$ double-semion model. From the above braiding phase we can conclude that, the SFC is characterized by $[\mathcal{w}(h_1,h_2)]\in H^3(Z_2\times Z_2,Z_2\times Z_2)$, where
\beq
    \mathcal{w}(h_1,h_2)=\begin{cases}
    s, & (h_1,h_2)=(t_2,t_1), (t_1t_2,t_2), (t_2,t_2), \\
    \bar{s}, & (h_1,h_2)=(t_1t_2,t_1), \\
    s\bar{s}, & (h_1,h_2)=(t_1,t_1), (t_1,t_1t_2), (t_1t_2,t_1t_2), \\
    1, & \text{others}.
    \end{cases}
\eeq

Other parameters of the cohomology group $H^3(D_4,U(1))$, including $\frac{p_1-[p_1]_2}{2}, p_2$ and $p_3$, will give rise to different SDCs that form an $H^3(Z_2\times Z_2,U(1))=Z_2^3$ torsor.

\smallskip \noindent {\bf Gauging $Z_4$}. The second option is to gauge the normal subgroup $Z_4$, resulting in a state in an SET that has the same anyon set as $D^{\nu}(Z_4)$, where
\beq
    \nu(g,h,l)=\exp{\frac{2\pi i p_1}{16}g(h+l-[h+l]_4)}
\eeq
is the restriction of $\omega$ on $Z_4$. Different values of $p_1$ exactly correspond to different $Z_4$ TQD models. The symmetry action takes $e$ to $e^3$, and takes $m$ to $m^3 e^{2p_1}$ according to Eq.~\eqref{eq: sym automorphism general}, which is not a trivial automorphism on $\mathcal{C}$. One can still manage to write a phase factor $B(\mathcal{w},b)=\beta_{x,x}(g)$ for $g\in N=Z_4$. However, two obvious problems will emerge from this factor. The first one is that unlike in the case when symmetry does not change the anyon type, when we change the representative 3-cocycle for the $D_4$-SPT state by a coboundary, $\omega'=\omega\cdot \delta \alpha$, the ``braiding phase'' is not invariant anymore, $B(\mathcal{w},b)'=B(\mathcal{w},b)\frac{\alpha(g^{-1}xg,g^{-1}xg)}{\alpha(x,x)}$. The second problem seems to be even worse. In a generic case, it could be impossible to find an abelian object in sector $\mathcal{C}_x$ as the $0_x$ we take before. Therefore, it is suspicious to talk about abelian anyon $\mathcal{w}$ as the fusion between $0_x$ and itself. Indeed, in sector $\mathcal{C}_x$, there are 4 objects of quantum dimension $2$. If we nonetheless pick one of them and name it as $0_x$, by  counting the dimension, we can write a fusion rule of the form,
\beq
    0_x \times 0_x = a + b + c + d,
\eeq
where $a,b,c,d\in \mathcal{C}$ are abelian anyons.

Motivated by the ribbon operator in the quantum double model as in Eq.~\eqref{eq:ribbon qd}, we choose $b_1=1$ and $b_2=a$ and we write a matrix-valued operator on an open ribbon as
\beq
    (H^x_l)_{ii'}=\sum_{n\in\{1,a^2\}}H_l^{b_i x b_i^{-1},b_i n b_{i'}^{-1}}\epsilon_{b_i x b_i^{-1}}(b_i n b_{i'}^{-1}),
    \label{eq:branchlinenonabelian}
\eeq
where the matrix indices $i,i'=1,2$, and the operator $H_l^{x,g}$ satisfies the same multiplication rule as in Eq.~\eqref{eq:Hoperatormultiply},
\beq
    H_l^{x,g}H_l^{y,g'}=H_l^{xy,g}\gamma_g(x,y)\delta_{g,g'}.
\eeq

We conjecture that the operator as in Eq.~\eqref{eq:branchlinenonabelian} creates an object in sector $\mathcal{C}_x$ on the endpoint of $l$. We call this object $0_x$ even though it is of dimension 2. Then the object $0_x\times 0_x$ should be created on the endpoint of $l$ by operator $(H^x_l)^{\otimes2}$. It can be shown that, when we change the representative 3-cocycle for the $D_4$-SPT state by a coboundary, $\omega'=\omega\cdot \delta \alpha$, the matrix $(H^x_l)^{\otimes2}$  differs by a similar transformation. Therefore, the fusion rule remains invariant under different representative choices. According to the detailed analysis in Appendix \ref{sec:fusionruled4 appendix}, we see that different values of $p_3$ give different SFCs where the fusion rules are shifted by anyon $[e^2]\in H^2_{\rho}(Z_2,\mathcal{A})$.

\smallskip\noindent {\bf Gauging $Z_2\times Z_2$}. One could also gauge the $Z_2\times Z_2=\{1,x,t,xt\}$ in $D_4$, resulting a state in the phase of $D^{\nu'}(Z_2\times Z_2)$. We write $t\equiv a^2$ and $g=x^{g^{(1)}}t^{g^{(2)}}=x^{g^{(1)}}a^{2g^{(2)}}$. The 3-cocycle $\omega$ restricted in this group is obtained from Eq.~\eqref{eq:D4-cocycle} and is given as
\beq    
    \nu'(g,h,l)=&\exp\Big\{\frac{2\pi i p_1}{4}g^{(2)}(-1)^{h^{(1)}+l^{(1)}}\big(h^{(2)}(-1)^{l^{(1)}}+l^{(2)}\\
    &-[h^{(2)}(-1)^{l^{(1)}}+l^{(2)}]_2\big)+\pi i p_2 g^{(1)}h^{(1)}l^{(1)}\Big\},\\
    =&(-1)^{p_1 (g^{(2)}h^{(2)}l^{(2)}+g^{(2)}h^{(2)}l^{(1)})+p_2 g^{(1)}h^{(1)}l^{(1)}}.
    \label{z2z2nu'}
\eeq
Notice that there is no contribution from the third part in Eq.~(\ref{eq:D4-cocycle}) as $e^{\pi i p_3 (2g^{(2)})h^{(1)}l^{(1)}}\equiv1$. In Appendix \ref{braidingphase}, we analyze the fluxes and charges of all the anyons in the theory from 3-cocycle $[\nu']$. Let $b_1=1$ and $b_2=x$, one can write the matrix-valued operator on an open ribbon $l$ as
\beq
    (H^a_l)_{ii'}=\sum_{n\in\{1,t\}}H_l^{b_i a b_{i}^{-1},b_i n b_{i'}^{-1}}\epsilon_{b_i a b_{i}^{-1}}(b_i n b_{i'}^{-1}),
\eeq
where the matrix indices have the range $i,i'=1,2$. As we conjectured, the object $0_a\times 0_a$ should be created on the endpoint of $l$ by operator $(H^a_l)^{\otimes2}$. According to the detailed analysis in Appendix~\ref{sec:fusionruled4 appendix z2z2}, we see that different values of $p_3$ give different SFCs where the fusion rules are shifted by anyon $[e^{(1)}]\in H^2_{\rho}(Z_2,\mathcal{A})$.

\subsection{SETs from partially gauging $S_3$ SPT}
Now with the conjecture made in Sec.~\ref{sec:gaugingd4spt}, we can revisit our first example in Sec.~\ref{sec:Z3SETwithZ2}. Recall that we write the element in $S_3$ as $\Tilde{g}=(G,g)\equiv x^Ga^g$. We construct a representative of 3-cocycle in $H^3(S_3,U(1))$ as in Eq.~\eqref{eq:s_3 cocycle}. Gauging the normal subgroup $Z_3$ of a $S_3$-SPT state, results in a state in an SET phase that has the same anyon set as $D^{\nu}(Z_3)$, where
\beq
    \nu(g,h,l)=\exp{\frac{2\pi i p_1}{9}g(h+l-[h+l]_3)}
\eeq
is the restriction of $\omega$ on $Z_3$. Different values of $p_1$ exactly correspond to different $Z_3$ TQD models. The symmetry action takes $e$ to $e^2$, and takes $m$ to $m^2 e^{2p_1}$ according to Eq.~\eqref{eq: sym automorphism general}, which is also not a trivial automorphism on $\mathcal{C}$, as we have seen something similar in the previous $D_4$ case. 

In sector $\mathcal{C}_x$, there is only one object of quantum dimension $3$. We name it $0_x$. By the dimension counting, we can write a fusion rule of the form,
\beq
    0_x \times 0_x = \sum_{i=1}^9 a_i,
\eeq
where $a_i\in \mathcal{C}$ are abelian anyons. Let $b_1=1$, $b_2=a$ and $b_3=a^2$, one can write a matrix-valued operator on an open ribbon as
\beq
    (H^x_l)_{ii'}=H_l^{b_i x b_i^{-1},b_i b_{i'}^{-1}}\epsilon_{b_i x b_i^{-1}}(b_i b_{i'}^{-1}),
\eeq
where the matrix indices are in the range $i,i'=1,2,3$. As we conjectured in the last example, the object $0_x\times 0_x$ should be created on the endpoint of $l$ by operator $(H^x_l)^{\otimes2}$. According to the detailed analysis in Appendix \ref{sec:fusionruleS3 appendix}, we can obtain the fusion rule of the $Z_3$ SET from the conjectured branch line operator,
\beq
    0_x\times 0_x=1 + e + e^2 + m + em + e^2m + m^2 + em^2 + e^2m^2.
\eeq
From the analysis in Sec.~\ref{sec:gaugingd4spt}, we know that there is only one symmetry fractionalization pattern for every value of $p_1$. This unique SFC result is consistent with the fact that the fusion in the above  equation is the same for all values of $p_2$ (with a fixed $p_1$, i.e., fixing a distinct anyon theory), and thus $p_2$ gives different SDCs, unrelated to the SFC.

\subsection{SETs from  $D_{n}$ SPT}

Here we comment on the SET phase obtained from gauging the $N=Z_{n}$ subgroup in the $D_{n}$ SPT state.
When $n=2m+1$ is odd, from the similar argument we used for the $S_3$-SPT state, there is only one symmetry fractionalization pattern of such SET. The cohomology group can be decomposed as
\beq
H^3(D_{2m+1},U(1))&=H^3(Z_{2m+1},U(1))\oplus H^3(Z_{2},U(1))\\
&=Z_{2m+1}\oplus Z_2.
\eeq
Therefore just as in $S_3=D_3$ case, a representative 3-cocycle $[\omega]\in H^3(D_{2m+1},U(1))$ will have two parameters $p_1=0, ..., 2m$ and $p_2=0,1$. Different values of $p_1$ give different anyon theory of the SET order, while  different values of $p_2$ give different SDCs. Furthermore, there is only one object $0_x$ in sector $\mathcal{C}_x$ of dimension $2m+1$, and from a similar calculation, one expects the fusion rule to be
\beq
    0_x \times 0_x=\sum_{a\in\mathcal{C}}a.
\eeq

When $n=2m$ is even, the cohomology group can be decomposed as
\beq
H^3(D_{2m},U(1))=&H^3(Z_{2m},U(1))\oplus H^3(Z_{2},U(1))\\
&\quad\oplus H^2(Z_2,H^1(Z_{2m},U(1)))\\
=&Z_{2m}\oplus Z_2\oplus Z_2.
\eeq
Therefore just as in $D_4$ case, a representative 3-cocycle $[\omega]\in H^3(D_{2m},U(1))$ will have three parameters $p_1=0, ..., 2m-1$, $p_2=0,1$ and $p_3=0,1$. Different values of $p_1$ give different anyon theory of the SET order and different values of $p_2$ give different SDCs. Furthermore, different values of $p_3$ will differ in the fusion of $0_x\times 0_x$ by anyon $e^m$,  and, therefore, correspond to different symmetry fractionalization patterns.

\subsection{SETs from partially gauging $Q_8$ SPT}
Another group extension of $Z_2$ by $Z_4$ is the $Q_8$ group. We write the element in $Q_8$ as $\Tilde{g}=(G,g)\equiv x^Ga^g$ just as for $D_4$. The only difference is that $x^2=a^2$ instead of identity now. A representative of 3-cocycle in $H^3(Q_8,U(1))$ is~\cite{propitius1995topological} 
\beq
    &\omega(\Tilde{g},\Tilde{h},\Tilde{l})=\exp\Big\{\frac{2\pi i p}{8}\Big(-2GHL+ \\
    &g(-1)^{H+L}\big(h(-1)^L+l-[h(-1)^L+l+2HL]_4\big)\Big)\Big\},
\eeq
where $p=0,1,2,3$. We note that despite the fact that $H^3(Q_8,U(1))=Z_8$, we only present half of the cocycles here, as we are not aware of the other half. After gauging the normal subgroup $Z_2=\{1,a^2\}$, we obtain an SET states in which the anyon set is the same as in $D^{\nu}(Z_2)$ where $\nu(g,h,l)=\exp{\frac{2\pi i p}{8}ghl}$, with $g,h,l=0,2$ representing elements from the set $\{1,a^2\}$. Therefore, $p=0,1,2,3$ all correspond to the $Z_2$ toric code model after gauging. The symmetry action on anyons are trivial since the group extension is central. According to Eq.~\eqref{eq:braidinggeneral}, the  braiding phase $B(\mathcal{w}(h_1,h_2),b)$ is given by
\beq
    B(\mathcal{w}(h_1,h_2),b)=\frac{\epsilon_{h_1}(g)\epsilon_{h_2}(g)}{\epsilon_{q(h_1h_2)}(g)}\gamma_g(h_1,h_2)\mu_g(n(h_1h_2)).
\eeq

Let us write the  quotient group elements as $Z_2\times Z_2\equiv\{1,t_1\}\times\{1,t_2\}$, where the embedding of $t_1(t_2)$ is $x(a)$. After carrying out the detailed calculations from the cocycles above, the SFC corresponds to $[\mathcal{w}(h_1,h_2)]\in H^2(Z_2\times Z_2,Z_2\times Z_2)$, where
\beq
    \mathcal{w}(h_1,h_2)=\begin{cases}
    m, & (h_1,h_2)=(t_1,t_1), (t_2,t_1), (t_2,t_2),\\
    & ~~~~~~~~~~~~(t_1,t_1t_2), (t_1t_2,t_2),\\
    1, & \text{others.}
    \end{cases}
\eeq

 Therefore, the parameter $p$ characterizes different SDCs of the SET order. One expects that for the other 4 classes of cocycles in $H^3(Q_8,U(1))$, the anyon theory after gauging the normal $Z_2$ subgroup would be $Z_2$ double-semion model, and from similar calculations, one can determine the SFC accordingly.

\section{Conclusion}\label{sec:conclusion}

Recently, it has been realized that a wide class of topologically ordered states described by the (twisted) quantum double models with solvable gauge groups can be prepared with finite depth local operations as long as local measurements are included~\cite{verresen2021schorodinger, Bravyi2022a, Hierarchy, tantivasadakarn2022shortest, iqbal2023topological,foss-feig2023experimental}.
We have re-examined such a measurement-based gauging approach which transforms a non-trivial SPT state into a corresponding TQD state. We provided two alternative gauging procedures: one using a particular decomposition in terms of successive quotient groups and another one exploiting a new and equivalent definition of solvable groups. This flexibility in our method may allow us different options in preparing mid-gauging SET states.

In the case of non-abelian groups, the gauging procedure involves multiple steps where intermediate steps only partially gauge the system so that some symmetry remains. Starting from an initial $G$-SPT state, we have presented an in-depth analysis of the intermediate states and have found them to be topologically ordered states enriched by the remaining ungauged symmetry.
We have constructed the generic lattice (parent) Hamiltonian for these states, and showed that they are connected to twisted quantum double (TQD) ground states via a finite-depth local unitary circuit (without measurements) which does not respect the global symmetry.  

Furthermore, we have shown that the algebra of the symmetry branch line operators can be used to extract the symmetry fractionalization classes and infer symmetry defectification classes of the SET phases given the input data $G$ and $[\omega]\in H^3(G,U(1))$. When the SET order  in the intermediate step of the $N$-step gauging has a global symmetry that \emph{does not} change the anyon type, using the algebra of symmetry branch line operators, we have developed a general formula for the braiding phases between any abelian anyon in the theory and the anyons obtained from fusing point defects, which exactly characterize the symmetry fractionalization patterns. We have given various examples for this case. When the SET order we enter has a global symmetry that \emph{does} change anyon types, we conjectured the form and algebra of non-abelian symmetry branch line operators that can create the corresponding symmetry  defects. Then by calculating the tensor product of such operators, we showed that fusion rules of these symmetry defects can be derived, which is sufficient to characterize the symmetry fractionalization patterns. We have used the dihedral SPT states and the associated SET states as  examples to illustrate this 
 latter case.

In this work, we mainly focused on the SFC, and a framework to characterize the SDC is left for future study. 
We note that, according to Ref.~\cite{Barkeshli_2019}, the SDC forms a $H^3(Q,U(1))$ torsor, and two defectification classes are differed by an element in $H^3(Q,U(1))$.
One can always enter another SDC by applying a unitary $U_{\omega'}$ to vertex DOFs (where $[\omega']\in H^3(Q,U(1))$). 
This is equivalent to stacking a $Q$-SPT state onto the current SET state. 

Our method to probe SET phases using fusion was inspired by Ref.~\cite{Hierarchy} and it turns out to be specifically useful when the gauge group (or the normal subgroup in the case of multi-step gauging) is abelian. We expect our formalism holds for non-abelian TQD models with some global symmetry.
We leave as open questions how to consistently define ribbon operators for probing more complex SET phases with non-abelian gauge groups. For this purpose, it may be useful to re-examine some literature regarding the quasi-Hopf algebra~\cite{propitius1995topological, dijkgraaf1991quasi,majid1998quantum}. 

As a technical issue, in writing down the branch line operators, we have assumed the existence of $\epsilon_x(g)$ (see Eq.~\eqref{eq:conj1cochain}) for the 3-cocycle $[\omega]\in H^3(G,U(1))$. However, this factor may not exist in general. 
We did encounter this situation in the example of gauging  the $Z_2^{(1)}$ symmetry in the $Z_2^{(1)}\times Z_2^{(2)}\times Z_2^{(3)}$-SPT phase when the cocycle is of type-3. Nonetheless, when restricting $x$ and $g$ to some specific subgroups, we can still define $\epsilon_x(g)$, and we used them for the branch line operators to characterize the symmetry fractionalization patterns. It is not clear how to overcome the non-existence of $\epsilon_x(g)$ in general, and this is also left for future exploration.

As conjectured by the hierarchy of topological orders conceptualized in Ref.~\cite{Hierarchy}, topologically ordered states with non-solvable groups or even more general anyon models without any group structure cannot be prepared using finite-depth measurement-assisted circuits.
Nonetheless, it was recently shown in Ref.~\cite{Lu2022} that the Fibonacci anyon state can be prepared using $\log L$-depth circuits with mid-circuit measurements (where $L$ is the linear size of the system).
It would be worth exploring mid-gauging topological phases beyond solvable groups and group-based anyon theories.

\acknowledgments
The authors would like to  thank Norbert Schuch, David T. Stephen,  Nat Tantivasadakarn, and Ruben Verresen for useful discussions. This work was supported by the U.S. Department of Energy, Office
of Science, National Quantum Information Science Research
Centers, Co-design Center for Quantum Advantage (C2QA)
under Contract No. DE-SC0012704, in
particular, regarding the algorithmic procedure of the  gauging, 
and 
 by the National Science Foundation under Grant No. PHY 1915165, in particular, regarding the physical properties related to symmetry-enriched topological phases. 
 HPN acknowledges support by the Austrian Science Fund (FWF) through the SFB BeyondC F7102, and the European Union (ERC Advanced Grant, QuantAI, No. 101055129). Views and opinions expressed are however those of the author(s) only and do not necessarily reflect those of the European Union or the European Research Council. Neither the European Union nor the granting authority can be held responsible for them.
 \bibliography{ref}
 \appendix
 \begin{widetext}

\section{Some properties of twisted quantum double (TQD) Hamiltonian}\label{TQDprop}
In this section, we check some properties of twisted quantum double (TQD) Hamiltonian. Although some proofs of the claims here are given in the original paper \cite{PhysRevB.87.125114}, we demonstrate those in our notation.
The TQD Hamiltonian is given by
\begin{align}
    H=-\sum_v A_v-\sum_p B_p,
\end{align}
where $A_v$ is the vertex operator and $B_p$ is the plaquette operator explicitly given by
\begin{subequations}
    \begin{align}
    A_v&=\frac{1}{|G|}\sum_{g\in G} \left(\prod_{e\supset v}L_{\pm e}^g\right)\Tilde{W}_v^g=\frac{1}{|G|}\sum_{g\in G}A_{v,g},\\
    B_p&=\delta\Big(\prod_{e\in p} g_e,1\Big).
\end{align}
\end{subequations}
The operator $\left(\prod_{e\supset v}L_{\pm e}^g\right)$ denote the operator which implement left action ($L_{+e}^g$) or right action($L_{-e}^g$) on the edges adjacent to the vertex $v$ when the edge flows to vertex $v$ or emanates from vertex $v$ respectively. $\Tilde{W}_v^g$ is the phase operator defined as follows
    \begin{align}
    \Tilde{W}_v^g=\left(\prod_{e\supset v}L_{\pm e}^g\right)^{\dagger}U_{\omega}\left(\prod_{e\supset v}L_{\pm e}^g\right)U_{\omega}^{\dagger}
    \label{eq:W_v^g},
\end{align}
where $U_{\omega}$ is the phase operator which assigns a phase to a given configuration of edges in the TQD ground state(see Eq. \eqref{TQDwfn}). The purpose of $\Tilde{W}_v^g$ is to change the phase factor in TQD wave-function after the operation $\prod_{e\supset v}L_{\pm e}^g$ is applied on the vertex $v$. \\
\par 
\begin{claim}
    The action of $\Tilde{W}_v^g$ on vertex $v$ can be interpreted geometrically as a product of cocylces of tetrahedrons in Fig.~\ref{fig:W-phase} with appropriate signs in the exponent.
\end{claim}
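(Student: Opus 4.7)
The plan is to show that $\tilde{W}_v^g$ acts diagonally in the group basis $\{\ket{\{g_e\}}\}$ with an eigenvalue that coincides with the amplitude assigned to the prism in Fig.~\ref{fig:W-phase} by triangulating it into tetrahedra and multiplying their 3-cocycles with the orientation signs $s(\text{tetra})=\pm 1$.

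First, I would compute the action of $\tilde{W}_v^g$ on an arbitrary basis state $\ket{\{g_e\}}$ directly from its definition. The innermost $U_\omega^{\dagger}$ extracts the inverse SPT-like phase $\Omega(\{g_e\})^{-1}$; the edge shifts $\prod_{e\supset v}L^g_{\pm e}$ then take $\{g_e\}$ to a new configuration $\{g_e'\}$ obtained by (virtually) right-multiplying the pre-gauge vertex value at $v$ by $g^{-1}$, with the $+$ or $-$ on each adjacent edge chosen by the edge orientation so that the combined effect is precisely a relabeling at $v$; the next $U_\omega$ contributes $\Omega(\{g_e'\})$; and the outermost $(\prod_{e\supset v}L^g_{\pm e})^{\dagger}$ returns to the original configuration. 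The upshot is that $\tilde{W}_v^g\,\ket{\{g_e\}} = \big(\Omega(\{g_e'\})/\Omega(\{g_e\})\big)\,\ket{\{g_e\}}$ is a pure phase depending only on the edge labels of the star of $v$ and on $g$.

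Next, I would recast the ratio $\Omega(\{g_e'\})/\Omega(\{g_e\})$ as the evaluation of $\omega$ on a triangulated 3-manifold. I construct the prism $P_v$ over the closed star of $v$: the bottom face is the star at ``time'' $0$ with labels $\{g_e\}$, the top face is the star at ``time'' $1$ with labels $\{g_e'\}$, and each vertex $w$ in the star is connected to its copy $w'$ by a vertical edge carrying the identity when $w\neq v$ and carrying $g$ when $w=v$. I then triangulate $P_v$ into tetrahedra compatible with the branching structure, assign to each the 3-cocycle of Fig.~\ref{fig:tetrahedron2} with its inherited orientation $s(\text{tetra})$, and identify the product of these cocycles with the desired ratio by the standard Dijkgraaf--Witten telescoping: the cocycle condition $\delta\omega=1$ causes each interior triangular face of the prism to cancel between the two tetrahedra sharing it, leaving only boundary 2-simplices on the top and bottom faces, which reassemble into $\Omega(\{g_e'\})$ and $\Omega(\{g_e\})^{-1}$. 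The principal obstacle is the bookkeeping of orientations: one must verify that every $s(\text{tetra})$ assigned in Fig.~\ref{fig:W-phase} is compatible with the induced orientation on the horizontal boundary 2-simplices, so that the surviving boundary factors carry exactly the signs $s(\Delta)$ with which they appear in $\Omega$. Because $P_v$ has only finitely many tetrahedra, this is a finite combinatorial check performed tetrahedron by tetrahedron on the explicit picture, after which the telescoping identity follows by iterated use of the cocycle condition.
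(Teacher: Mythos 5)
Your proposal is correct and follows essentially the same route as the paper's proof: both identify the eigenvalue of $\Tilde{W}_v^g$ as the ratio $\Omega(\{g_e'\})/\Omega(\{g_e\})$ of upper to lower face amplitudes, and both obtain that ratio as $\prod_{\text{tetra}}\omega(\text{tetra})^{s(\text{tetra})}$ by multiplying the per-tetrahedron cocycle conditions over the prism and letting the interior shared faces cancel in oppositely-signed pairs. The only difference is presentational (you start from the operator side and the paper starts from the geometric side), and both defer the orientation bookkeeping to a finite per-tetrahedron check.
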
 
\begin{proof}
    To prove this equivalence, we mention the following fact: given a tetrahedron with a branching structure, equating one with the product of (a) cocycles (with appropriate signs in the exponent) on the faces and (b) the cocylce on the tetrahedron (also with appropriate sign in the exponent) gives the cocycle condition. The signs in the exponent can be found using the following rule:
\begin{itemize}
    \item First we define the orientation of a face of a tetrahedron. Consider the branching structure of a face. Curl your fingers on the right hand along the direction of two arrows which point one after the other. The direction your thumb points at gives you the orientation of the face.
    \item For the tetrahedron, consider the vertex where all the arrows end and  the face opposite to it. If the orientation of the face points inward to the tetrahedron, the sign is +1, otherwise it is -1.
    \item For a face, simply consider its orientation. If the orientation points inward to the tetrahedron, assign the sign to be +1, otherwise assign -1.
\end{itemize}
As an example, consider the following tetrahedron given in Fig.~\ref{fig:tetrahedron3}.
\begin{figure}[h!]
    \centering
    \includegraphics[scale=0.3]{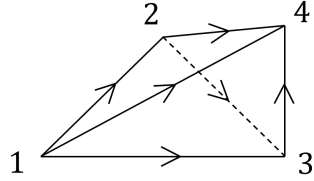}
    \caption{Tetrahedron with orientation}
    \label{fig:tetrahedron3}
\end{figure}\\
The product of cocycles with appropriate signs in the exponent gives the cocycle condition,
\begin{align}
    \frac{\omega(g_4g_2^{-1},g_2g_1^{-1},g_1)\omega(g_4g_3^{-1},g_3g_2^{-1},g_2)}{\omega(g_4g_3^{-1},g_3g_2^{-1},g_2g_1^{-1})\omega(g_3g_2^{-1},g_2g_1^{-1},g_1)\omega(g_4g_3^{-1},g_3g_1^{-1},g_1)}=1.
\end{align}
Now, multiplying all the cocycle conditions coming from all the tetrahedron adjacent to vertex $v$ as in Fig.~\ref{fig:W-phase}, one can clearly see that the cocycles coming from the faces shared by two tetrahedron cancel in pairs since the signs in the exponent coming from the two adjacent tetrahedrons of a face are opposite. Finally, the remaining product of cocycles can be rewritten as
\begin{align}
    \prod_{tetra}\omega(\text{tetra})^{s(\text{tetra})}=\prod_{\Delta}\frac{1}{\omega(\Delta')^{s(\Delta')}\omega(\Delta)^{s(\Delta)}}=\prod_{\Delta}\left(\frac{\omega(\Delta')}{\omega(\Delta)}\right)^{s(\Delta)},
    \label{eq:prodtetra}
\end{align}
where $\Delta$ and $\Delta'$ denote the triangles in the original and the lifted plane (after action by $A_v$), $s(\Delta)$ and $s(\Delta')$ denote the signs in the exponent for the cocycle coming from $\Delta$ and $\Delta'$. The last equality in Eq.~\eqref{eq:prodtetra} follows from the fact that $s(\Delta')=-s(\Delta)$. The expression which follows the last equality is exactly what Eq.~\eqref{eq:W_v^g}  achieves.
\end{proof}
\begin{claim}
   $A_v$ is hermitian, i.e, $A_v^{\dagger}=A_v$.
\end{claim}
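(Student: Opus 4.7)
The plan is to show that $A_v$ can be rewritten as a similarity transformation of a manifestly Hermitian operator by the unitary $U_\omega$, from which Hermiticity (and in fact the projector property) follows immediately.

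First I would substitute the definition of $\tilde W^g_v$ from Eq.~\eqref{eq:W_v^g} into $A_{v,g}=\bigl(\prod_{e\supset v}L^g_{\pm e}\bigr)\tilde W^g_v$. Writing $\mathcal{L}^g_v\equiv \prod_{e\supset v}L^g_{\pm e}$ for brevity, the left-and-right action operators $L^g_{+e}$ and $L^g_{-e}$ are unitary on the edge Hilbert space, so $\mathcal{L}^g_v$ is unitary and $\mathcal{L}^g_v(\mathcal{L}^g_v)^{\dagger}=\mathbb{1}$. Substitution therefore telescopes to
\begin{equation}
A_{v,g}=\mathcal{L}^g_v\,(\mathcal{L}^g_v)^{\dagger}\,U_\omega\,\mathcal{L}^g_v\,U_\omega^{\dagger}=U_\omega\,\mathcal{L}^g_v\,U_\omega^{\dagger}.
\end{equation}
Averaging over $g\in G$ pulls the unitary $U_\omega$ outside the sum:
\begin{equation}
A_v=U_\omega\Bigl(\tfrac{1}{|G|}\sum_{g\in G}\mathcal{L}^g_v\Bigr)U_\omega^{\dagger}.
\end{equation}

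Next I would verify that $g\mapsto \mathcal{L}^g_v$ is a genuine unitary representation of $G$ on the edges incident to $v$, so that $P_v\equiv \tfrac{1}{|G|}\sum_g \mathcal{L}^g_v$ is the standard group-averaging projector onto the $G$-invariant subspace. Concretely, one checks edge-by-edge that $L^g_{+e}L^h_{+e}=L^{gh}_{+e}$ and $L^g_{-e}L^h_{-e}=L^{gh}_{-e}$, so the composition law $\mathcal{L}^g_v\mathcal{L}^h_v=\mathcal{L}^{gh}_v$ holds regardless of which of the two orientation conventions each incident edge carries. The standard argument then gives $P_v^2=P_v$ and $P_v^{\dagger}=P_v$.

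Finally, combining these two steps,
\begin{equation}
A_v^{\dagger}=\bigl(U_\omega P_v U_\omega^{\dagger}\bigr)^{\dagger}=U_\omega P_v^{\dagger} U_\omega^{\dagger}=U_\omega P_v U_\omega^{\dagger}=A_v,
\end{equation}
which proves the claim. As a free bonus, the same expression gives $A_v^2=U_\omega P_v^2 U_\omega^{\dagger}=A_v$, i.e.\ $A_v$ is a projector, which is useful for subsequent claims. The only subtle step is the orientation bookkeeping in the second paragraph; there is no real obstacle, but one has to be careful that the mixed $L^g_{+e}/L^g_{-e}$ assignment around $v$ does not spoil the group law. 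Once that is verified, Hermiticity is a one-line consequence of the algebraic rewriting above.
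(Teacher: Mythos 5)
Your proof is correct and rests on the same key identity as the paper's, namely $A_{v,g}=\bigl(\prod_{e\supset v}L^g_{\pm e}\bigr)\tilde W^g_v=U_\omega\bigl(\prod_{e\supset v}L^g_{\pm e}\bigr)U_\omega^{\dagger}$; the paper uses it to show $A_{v,g}^{\dagger}=A_{v,g^{-1}}$ and relabels the sum over $g$, which is exactly the group-averaging argument you package as $A_v=U_\omega P_v U_\omega^{\dagger}$ with $P_v$ the standard projector. Your version additionally delivers the idempotence $A_v^2=A_v$ in the same stroke, which the paper proves as a separate claim using the very same conjugation identity.
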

\begin{proof}
By definition, we have
\begin{align}
    A_{v,g}^{\dagger}=\left(\left(\prod_{e\supset v}L_{\pm e}^g\right)\Tilde{W}_v^g\right)^{\dagger}=(\Tilde{W}_v^g)^*\left(\prod_{e\supset v}L_{\pm e}^g\right)^{\dagger},
\end{align}
where $(\Tilde{W}_v^g)^*$ denote the complex conjugate of $\Tilde{W}_v^g$. We note that $\left(\prod_{e\supset v}L_{\pm e}^g\right)^{\dagger}=\left(\prod_{e\supset v}L_{\pm e}^{g^{-1}}\right)$. From the definition Eq. \eqref{eq:W_v^g}, we have
\begin{align}
    (\Tilde{W}_v^g)^*=U_{\omega}\left(\prod_{e\supset v}L_{\pm e}^g\right)^{\dagger}U_{\omega}^{\dagger}\left(\prod_{e\supset v}L_{\pm e}^g\right).
\end{align}
Now we compute $A_{v,g}^{\dagger}$:
\begin{subequations}
    \begin{align}
   (\Tilde{W}_v^g)^*\left(\prod_{e\supset v}L_{\pm e}^g\right)^{\dagger}&= U_{\omega}\left(\prod_{e\supset v}L_{\pm e}^g\right)^{\dagger}U_{\omega}^{\dagger}\\
   &=\left(\prod_{e\supset v}L_{\pm e}^g\right)^{\dagger}\left(\prod_{e\supset v}L_{\pm e}^g\right)U_{\omega}\left(\prod_{e\supset v}L_{\pm e}^g\right)^{\dagger}U_{\omega}^{\dagger}\\
   &=\left(\prod_{e\supset v}L_{\pm e}^{g^{-1}}\right)\left(\prod_{e\supset v}L_{\pm e}^{g^{-1}}\right)^{\dagger}U_{\omega}\left(\prod_{e\supset v}L_{\pm e}^{g^{-1}}\right)U_{\omega}^{\dagger}\\
   &=\left(\prod_{e\supset v}L_{\pm e}^{g^{-1}}\right)\Tilde{W}_v^{g^{-1}}=A_{v,g^{-1}}.
\end{align}
\end{subequations}
Hence it holds that
\begin{align}
    A_v^{\dagger}=\frac{1}{|G|}\sum_{g\in G}A_{v,g}^{\dagger}=\frac{1}{|G|}\sum_{g\in G}A_{v,g^{-1}}=A_v.
\end{align}
\end{proof}
\begin{claim}
    $A_v$ is a projector. $A_v^2=A_v$.
\end{claim}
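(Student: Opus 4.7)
The plan is to show that the family of operators $A_{v,g} \equiv \bigl(\prod_{e\supset v} L^g_{\pm e}\bigr)\tilde{W}^g_v$ furnishes a unitary representation of $G$ on the relevant subspace, more precisely that $A_{v,g} A_{v,h} = A_{v,gh}$. Once this multiplicative closure is in hand, the projector property follows almost immediately:
\begin{align}
A_v^2 = \frac{1}{|G|^2}\sum_{g,h\in G} A_{v,g} A_{v,h} = \frac{1}{|G|^2}\sum_{g,h\in G} A_{v,gh} = \frac{1}{|G|^2}\sum_{g\in G}\sum_{k\in G} A_{v,k} = \frac{1}{|G|}\sum_{k\in G} A_{v,k} = A_v,
\end{align}
where in the penultimate step I reparametrized the double sum by $k=gh$ (for fixed $g$, as $h$ ranges over $G$ so does $k$) and picked up a factor $|G|$ from the free sum over $g$.

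The key identity $A_{v,g} A_{v,h} = A_{v,gh}$ is the part that deserves attention. The clean way to see it is to rewrite $A_{v,g}$ by substituting in the defining formula \eqref{eq:W_v^g} for $\tilde{W}^g_v$. Writing $L^g_v \equiv \prod_{e\supset v} L^g_{\pm e}$ as a shorthand, one finds that the $L^g_v$ preceding $\tilde{W}^g_v$ cancels the $(L^g_v)^\dagger$ inside, leaving the conjugated form $A_{v,g} = U_\omega\, L^g_v\, U_\omega^\dagger$. This makes the group law manifest, because the edge operators themselves compose as $L^g_v L^h_v = L^{gh}_v$: on a single edge emanating from $v$, $L^g_{-e} L^h_{-e}\ket{x} = \ket{x h^{-1} g^{-1}} = \ket{x(gh)^{-1}} = L^{gh}_{-e}\ket{x}$, and similarly $L^g_{+e} L^h_{+e} = L^{gh}_{+e}$ on edges flowing into $v$ (these local moves act on disjoint factors of the Hilbert space, so they commute and combine edge by edge). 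Therefore
\begin{align}
A_{v,g} A_{v,h} \,=\, U_\omega L^g_v U_\omega^\dagger \,U_\omega L^h_v U_\omega^\dagger \,=\, U_\omega L^g_v L^h_v U_\omega^\dagger \,=\, U_\omega L^{gh}_v U_\omega^\dagger \,=\, A_{v,gh}.
\end{align}

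The one subtlety worth checking is that the rewriting $A_{v,g} = U_\omega L^g_v U_\omega^\dagger$ holds as an identity of operators on the full Hilbert space (not merely on the ground-state subspace), since otherwise the algebraic manipulation above would be suspect. This follows directly from the definition \eqref{eq:W_v^g}: $\tilde{W}^g_v = (L^g_v)^\dagger U_\omega L^g_v U_\omega^\dagger$ is an operator identity, so $L^g_v \tilde{W}^g_v = U_\omega L^g_v U_\omega^\dagger$ is too. Given this, the whole argument is a two-line calculation, and combined with the hermiticity already established it shows that $A_v$ is an orthogonal projector.
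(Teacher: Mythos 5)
Your proof is correct and follows essentially the same route as the paper's: both rewrite $A_{v,g}=U_\omega\bigl(\prod_{e\supset v}L^g_{\pm e}\bigr)U_\omega^\dagger$ from the definition of $\tilde{W}^g_v$, deduce the group law $A_{v,g}A_{v,h}=A_{v,gh}$, and then reparametrize the double sum. Your additional check that the edge operators compose as $L^g_{\pm e}L^h_{\pm e}=L^{gh}_{\pm e}$ is a welcome bit of explicitness the paper leaves implicit.
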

\begin{proof}
Note the following observation.
\begin{subequations}
    \begin{align}
    A_{v,g}A_{v,h}&=U_{\omega}\left(\prod_{e\supset v}L_{\pm e}^{g}\right)U_{\omega}^{\dagger}U_{\omega}\left(\prod_{e\supset v}L_{\pm e}^{h}\right)U_{\omega}^{\dagger}\\
    &=U_{\omega}\left(\prod_{e\supset v}L_{\pm e}^{gh}\right)U_{\omega}^{\dagger}=A_{v,gh}.
\end{align}
\end{subequations}
Hence, we can square $A_v$ and arrive at
\begin{align}
    A_v^2=\frac{1}{|G|^2}\sum_{g,h\in G}A_{v,g}A_{v,h}=\frac{1}{|G|^2}\sum_{g,h\in G}A_{v,gh}=\frac{1}{|G|}\sum_{g\in G}A_{v,g}=A_v.
\end{align}
\end{proof}
\begin{claim}
    $B_p$ is hermitian as well as a projector.
\end{claim}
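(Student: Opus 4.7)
The plan is to verify both properties by exploiting the fact that $B_p$ is already diagonal in the computational basis $\{\bigotimes_e \ket{g_e}\}$ on the edges around plaquette $p$. Concretely, for any basis state I would write
\begin{equation}
B_p \bigotimes_{e\in p}\ket{g_e} = \delta\!\Big(\prod_{e\in p} g_e,\, 1\Big) \bigotimes_{e\in p}\ket{g_e},
\end{equation}
where the product uses the fixed orientation of the boundary $\partial p$ of the plaquette (with appropriate inversions for edges traversed against their orientation). Thus $B_p$ is a diagonal operator whose eigenvalues are the real numbers $0$ or $1$.

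Hermiticity then follows immediately: in the basis above, $B_p$ is represented by a diagonal matrix with real entries, so $B_p^\dagger = B_p$. For the projector property, I would simply observe that since the Kronecker delta only takes the values $0$ and $1$, it satisfies $\delta(x,1)^2 = \delta(x,1)$, and therefore
\begin{equation}
B_p^2 \bigotimes_{e\in p}\ket{g_e} = \delta\!\Big(\prod_{e\in p} g_e,\, 1\Big)^2 \bigotimes_{e\in p}\ket{g_e} = B_p \bigotimes_{e\in p}\ket{g_e}
\end{equation}
for every basis vector, giving $B_p^2 = B_p$.

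There is essentially no obstacle here beyond bookkeeping; the only subtlety worth flagging is the orientation convention implicit in $\prod_{e\in p} g_e$. I would briefly remark that a reversal of an edge orientation replaces $g_e$ by $g_e^{-1}$ in the holonomy, but since $\prod g_e = 1$ is invariant under such replacements applied consistently, the operator $B_p$ and hence both properties are independent of this choice.
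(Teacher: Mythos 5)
Your proof is correct and follows the same route as the paper, which simply notes that both properties follow trivially from the definition of $B_p$ as a diagonal operator with Kronecker-delta (i.e., $0$/$1$) eigenvalues. Your write-up just makes that observation explicit.
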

\begin{proof}
    This follows trivially from the definition Eq.~(\ref{eq:BpTQD}). 
\end{proof}
\begin{claim}
    $[A_v,A_{v'}]=0$ for any vertices $v$ and $v'$.
\end{claim}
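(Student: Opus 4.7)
The plan is to exploit the same clean rewriting of $A_{v,g}$ that was used to establish $A_{v,g}A_{v,h}=A_{v,gh}$ in the previous claim, namely
\begin{equation}
A_{v,g} \;=\; U_\omega \Bigl(\prod_{e\supset v} L^g_{\pm e}\Bigr) U_\omega^\dagger.
\end{equation}
With this form in hand, verifying $[A_v,A_{v'}]=0$ reduces to showing $A_{v,g}A_{v',h}=A_{v',h}A_{v,g}$ for every $g,h\in G$, since the two averages over $G$ can then be exchanged. Multiplying,
\begin{equation}
A_{v,g}A_{v',h} \;=\; U_\omega \Bigl(\prod_{e\supset v} L^g_{\pm e}\Bigr)\Bigl(\prod_{e\supset v'} L^h_{\pm e}\Bigr) U_\omega^\dagger,
\end{equation}
and analogously for $A_{v',h}A_{v,g}$, so the task reduces to showing that the two bare shift-products commute.

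I would then split into two cases. First, if $v$ and $v'$ share no edge, the two products act on disjoint sets of edges and commute trivially. Second, if $v$ and $v'$ are endpoints of a common edge $e_0$, then the orientation convention used in the definition of $A_v$ forces one vertex to act on $e_0$ by a left multiplication $L^\bullet_{+e_0}$ and the other by a right multiplication $L^\bullet_{-e_0}$. The key point is that left and right multiplications on a single edge DOF always commute: for any $k\in G$,
\begin{equation}
L^h_{+e_0} L^g_{-e_0}\ket{k}_{e_0} = \ket{h k g^{-1}}_{e_0} = L^g_{-e_0} L^h_{+e_0}\ket{k}_{e_0},
\end{equation}
which is just the associativity of group multiplication. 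All other edges in the two products are disjoint, so the full products commute in this case as well.

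Combining the two cases yields $A_{v,g}A_{v',h}=A_{v',h}A_{v,g}$ for all $g,h$, and averaging then gives $[A_v,A_{v'}]=0$. There is no real obstacle here; the only point requiring care is the orientation bookkeeping, since the previous claim fixes the convention that $L^g_{+e}$ is used when $e$ flows into $v$ and $L^g_{-e}$ when $e$ emanates from $v$, guaranteeing the left/right split on any shared edge $e_0$ between $v$ and $v'$. The twisting phases $\tilde{W}^g_v,\tilde{W}^h_{v'}$ never need to be analyzed directly, because they have been absorbed into the conjugation by $U_\omega$ in the identity above.
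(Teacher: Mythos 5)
Your proof is correct and follows essentially the same route as the paper's: conjugating by $U_\omega$ to reduce the commutator to that of the bare shift products, then observing that a shared edge receives a left action from one endpoint and a right action from the other, which commute by associativity. No gaps.
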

\begin{proof}
First we prove $[A_{v,g},A_{v',h}]=0$ when $v\neq v'$.
    \begin{align}
    \begin{split}
        [A_{v,g},A_{v',h}]&=\left[U_{\omega}\Big(\prod_{e\supset v}L^g_{\pm e}\Big)U_{\omega}^{\dagger},U_{\omega}\Big(\prod_{e\supset v'}L^h_{\pm e}\Big)U_{\omega}^{\dagger}\right]\\
        &=U_{\omega}\Big[\prod_{e\supset v}L^g_{\pm e},\prod_{e\supset v'}L^h_{\pm e}\Big]U_{\omega}^{\dagger}=0.
    \end{split}
    \end{align}
    The last line is trivial when $v$ and $v'$ are not adjacent. When they are adjacent, the two vertices have opposite (right/left) action on the edge DOF. So their commutator is again zero. Hence $[A_{v,g},A_{v',h}]=0$ when $v\neq v'$. This imply $[A_v,A_{v'}]=0$ when $v\neq v'$. When $v=v'$,  the commutator is trivially zero. So $[A_v,A_{v'}]=0$, $\forall$ $v$ and $v'$.
\end{proof}
  \begin{claim}
      $[A_v,B_p]=0$ $\forall$ $v$ and $p$.
  \end{claim}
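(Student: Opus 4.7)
\emph{Proof proposal.} The plan is to reduce to showing $[A_{v,g}, B_p] = 0$ for every $g\in G$, since $A_v$ is a uniform average of the $A_{v,g}$. Writing
$A_{v,g} = L_v^g\, \tilde{W}_v^g$
with the shift part $L_v^g := \prod_{e\supset v} L^g_{\pm e}$ and the phase part $\tilde{W}_v^g$, I would first exploit the fact that both $\tilde{W}_v^g$ and $B_p$ are diagonal in the group basis of the edges: $B_p$ by definition in Eq.~(\ref{eq:BpTQD}), and $\tilde{W}_v^g$ by the geometric interpretation proven in Claim~1 of this appendix (a product of cocycles evaluated on edge DOFs). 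Hence $[\tilde{W}_v^g, B_p]=0$, and
\begin{equation}
[A_{v,g}, B_p] = L_v^g\,\tilde{W}_v^g B_p - B_p L_v^g \tilde{W}_v^g = [L_v^g, B_p]\,\tilde{W}_v^g.
\end{equation}
So it suffices to prove $[L_v^g, B_p]=0$.

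The remaining task is a case analysis on whether $v$ lies on the boundary $\partial p$. If $v\notin\partial p$, then no edge in $\partial p$ has $v$ as an endpoint, so $L_v^g$ acts as the identity on the edges appearing in $\prod_{e\in p} g_e$, and trivially commutes with $B_p$. If $v\in\partial p$, exactly two edges $e_1, e_2$ of $p$ are incident to $v$. The factors $L^g_{\pm e_j}$ in $L_v^g$ are precisely arranged by the $\pm$ sign convention so that the plaquette holonomy $h_p = \prod_{e\in\partial p} g_e^{\pm 1}$ (with signs from the branching orientation) transforms as $h_p \mapsto g_v h_p g_v^{-1}$ (conjugation by $g$, up to a relabeling of the base point of the cycle). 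Since $\delta(g h_p g^{-1}, 1) = \delta(h_p, 1)$, the operator $B_p$ is invariant under this gauge transformation, giving $L_v^g B_p = B_p L_v^g$.

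The main obstacle, and the only step where care is required, is verifying the conjugation relation $L_v^g\!:\!h_p \mapsto g h_p g^{-1}$ for all branching configurations of the two edges $e_1,e_2$ of $\partial p$ incident to $v$ (both flowing into $v$, both flowing out, or one of each). This is the combinatorial heart of the proof: one must check that the convention in Eq.~(\ref{eq:vertexoperator}), applying $L^g_{+e}$ to edges flowing into $v$ and $L^g_{-e}$ to edges flowing out, reproduces left multiplication by $g$ at the start of the cyclic word $h_p$ and right multiplication by $g^{-1}$ at its end (or equivalent cancellation in the mixed-orientation case, when $h_p \mapsto g_v^{i}h_p g_v^{-i}$ trivially preserves any base-point conjugation). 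Once this case check is done, the previous paragraph closes the argument, and summing $[A_{v,g},B_p]=0$ over $g\in G$ with weight $1/|G|$ yields $[A_v,B_p]=0$.
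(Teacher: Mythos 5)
Your proposal is correct and follows essentially the same route as the paper: the paper's proof also reduces the claim to checking that the shift part of $A_{v,g}$ preserves the fluxless condition $\prod_{e\in p}g_e=1$, via the same three-case analysis of the orientations of the two edges of $p$ incident to $v$ (both inward, both outward, or mixed). Your explicit observation that $\tilde{W}_v^g$ and $B_p$ are both diagonal in the group basis, so that only $[L_v^g,B_p]$ needs checking, is a small tidying-up of a step the paper leaves implicit, but the substance is identical.
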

  \begin{proof}
      When the vertex $v$ is not on the boundary of the plaquette $p$, the two terms commute trivially. When the vertex is on the boundary of $p$, we consider the two edges which are adjacent to the vertex $v$ as well as lie on the boundary of $p$. Now consider three cases. 
      \begin{itemize}
          \item Case 1: Both edges point toward the vertex $v$. Action of $A_{v,g}$ on this configuration is given by left multiplying $g$ on the corresponding edges. This preserves the fluxless condition imposed by the $B_p$ operator.
          \item Case 2: Both edges point away from the vertex $v$. Action of $A_{v,g}$ on this configuration is given by right multiplying $g^{-1}$ on the corresponding edges. This preserves the fluxless condition imposed by the $B_p$ operator.
          \item Case 3: One edge point toward the vertex $v$ and the other edges point away from it. Action of $A_{v,g}$ on this configuration is given by left multiplying by $g$ and right multiplying by $g^{-1}$ respectively. This also preserves the fluxless condition.
      \end{itemize}
      From this observation it follows that $[A_v,B_p]=0$ $\forall$ $v$ and $p$.
  \end{proof}
  \begin{claim}
      $[B_p,B_{p'}]=0$ $\forall$ plaquettes $p$ and $p'$.
  \end{claim}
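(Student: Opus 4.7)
The plan is straightforward: observe that $B_p$, as defined in Eq.~(\ref{eq:BpTQD}), is a diagonal operator in the group element basis $\bigotimes_e |g_e\rangle$, since it simply evaluates the Kronecker delta $\delta(\prod_{e\in p} g_e, 1)$ on the configuration. First I would write both $B_p$ and $B_{p'}$ explicitly in this basis,
\begin{equation}
B_p = \sum_{\{g_e\}} \delta\!\left(\prod_{e\in p} g_e,\, 1\right) \bigotimes_e |g_e\rangle\langle g_e|,
\end{equation}
and similarly for $B_{p'}$, so that each is manifestly a projector onto a subset of classical configurations (the fluxless configurations at that plaquette).

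Next I would note that the product of two diagonal operators in the same basis is again diagonal, and that diagonal operators in a common eigenbasis commute. Concretely,
\begin{equation}
B_p B_{p'} = \sum_{\{g_e\}} \delta\!\left(\prod_{e\in p} g_e,\, 1\right) \delta\!\left(\prod_{e\in p'} g_e,\, 1\right) \bigotimes_e |g_e\rangle\langle g_e| = B_{p'} B_p,
\end{equation}
so $[B_p, B_{p'}] = 0$ for all $p, p'$. This works uniformly whether $p$ and $p'$ are disjoint, share an edge, or share a vertex, because no edge DOF is ever acted upon non-diagonally.

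There is no real obstacle here: the only subtlety worth remarking on is that, unlike $A_v$, the operator $B_p$ does not involve any phase factor built from the cocycle $\omega$ or any shift of edge variables, so the argument does not require any cocycle identity. Thus the claim reduces to the elementary observation that diagonal operators commute.
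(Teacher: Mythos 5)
Your proof is correct and follows essentially the same route as the paper, which likewise observes that $B_p$ acts as $1$ on fluxless configurations at $p$ and $0$ otherwise, i.e., it is diagonal in the group-element basis. Writing out the diagonal form explicitly, as you do, is a fine (and slightly more detailed) presentation of the same elementary argument.
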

  \begin{proof}
      The proof follows straightforwardly from the fact $B_p=1$ on the configurations for which there is no flux around plaquette $p$, and zero otherwise.
  \end{proof}
Now we consider the quantum double like Hamiltonian in the presence of a global symmetry given in Eq.~(\ref{eq:QDSET})
\begin{align}
    H=-\sum_v A_v-\sum_p B_p-\sum_v K_v,
\end{align}
where $A_v$, $B_p$ and $K_v$ are defined in Eq.~(\ref{eq:AvSET}), Eq.~(\ref{eq:BpSET}) and Eq.~(\ref{eq:KvSET}) respectively.
Again $A_v$ is hermitian as well as a projector. Similarly $B_p$ is also hermitian as well as a projector. $A_v$ and $B_p$ commute among themselves and with each other. The proofs follow by repeating the steps in the twisted quantum-double case. Now we consider the last term $K_v$.
\begin{claim}
    $K_v$ is hermitian as well as a projector.
\end{claim}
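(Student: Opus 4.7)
My plan is to verify the two properties separately, in analogy with the treatment of $A_v$ given above. For hermiticity, I would directly compute
\beq
K_v^\dagger = \frac{1}{m}\sum_{k,l=0}^{m-1} \overline{W_v^{q_k q_l^{-1}}}\,\ket{q_l}_v\bra{q_k},
\eeq
and relabel $k\leftrightarrow l$. The result coincides with $K_v$ provided $\overline{W_v^{g}} = W_v^{g^{-1}}$. This identity should follow directly from the geometric definition of $W_v^g$ in Fig.~\ref{fig:W-phase1}: reversing the sense of the prism (equivalently, sending $g\to g^{-1}$) flips the orientation $s(\Delta)$ of every tetrahedron contributing to the product in Eq.~\eqref{eq:W}, which takes each $\omega(\text{tetra})^{s}$ to its complex conjugate.

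For the projector property, I would expand
\beq
K_v^2 = \frac{1}{m^2}\sum_{k,l,k',l'} W_v^{q_k q_l^{-1}} W_v^{q_{k'} q_{l'}^{-1}} \ket{q_k}_v\,\langle q_l | q_{k'}\rangle_v\,\bra{q_{l'}}_v.
\eeq
The orthonormality $\langle q_l | q_{k'}\rangle_v = \delta_{l,k'}$ collapses one sum, and it then suffices to establish the composition identity
\beq
W_v^{q_k q_l^{-1}}\,W_v^{q_l q_{l'}^{-1}} \;=\; W_v^{q_k q_{l'}^{-1}} \qquad \forall\, k,l,l'.
\eeq
Once this holds, the inner sum over $l$ contributes a factor $m$ (independent of $l$), yielding $K_v^2 = K_v$ as desired.

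The main obstacle is therefore the composition identity for the $W$-phases. Geometrically, $W_v^{q_k q_l^{-1}}$ is a product of cocycles on the tetrahedra triangulating a prism whose bottom copy carries the label $q_l$ and whose top copy carries $q_k$ (cf.\ Figs.~\ref{fig:W-phase1} and~\ref{fig:tetrahedron4}). Stacking a second prism (bottom $q_{l'}$, top $q_l$) directly below the first yields a taller prism with bottom $q_{l'}$ and top $q_k$, and hence with overall lift $q_k q_{l'}^{-1}$. The plan is to show that the product of cocycles over the stacked triangulation equals the product over the single combined prism, by repeatedly applying the 3-cocycle condition to cancel the tetrahedra attached to the middle $q_l$-layer (the same mechanism that eliminates interior faces in Eq.~\eqref{eq:prodtetra}). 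This is directly analogous to the identity $A_{v,g}A_{v,h}=A_{v,gh}$ proven earlier for $\tilde{W}^g_v$, and should go through verbatim after replacing $\tilde{W}$ by $W$.
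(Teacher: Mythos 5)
Your proof is correct, but it takes a genuinely different route from the paper's. The paper never touches the geometric definition of the $W$-phase here: it rewrites $K_{v,kl}=W_v^{q_kq_l^{-1}}\ket{q_k}_v\bra{q_l}$ as $U_\omega\ket{q_k}_v\bra{q_l}U_\omega^\dagger$, after which hermiticity ($K_{v,kl}^\dagger=K_{v,lk}$) and idempotence ($K_{v,kl}K_{v,mn}=\delta_{l,m}K_{v,kn}$, so the $m$ surviving terms cancel the $1/|Q|^2$) are one-line consequences of the algebra of matrix units conjugated by a fixed unitary. Your two key identities are exactly the content of that conjugation: $\overline{W_v^{g}}=W_v^{g^{-1}}$ is the statement $(U_\omega\ket{q_k}\bra{q_l}U_\omega^\dagger)^\dagger=U_\omega\ket{q_l}\bra{q_k}U_\omega^\dagger$, and your prism-stacking composition law $W_v^{q_kq_l^{-1}}W_v^{q_lq_{l'}^{-1}}=W_v^{q_kq_{l'}^{-1}}$ is the statement $U_\omega\ket{q_k}\bra{q_l}U_\omega^\dagger\,U_\omega\ket{q_l}\bra{q_{l'}}U_\omega^\dagger=U_\omega\ket{q_k}\bra{q_{l'}}U_\omega^\dagger$, i.e.\ the telescoping of the ratios $\Omega(\{q_k,\dots\})/\Omega(\{q_l,\dots\})$. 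What your approach buys is an explicit geometric picture (stacked prisms, cancellation of the middle layer by the cocycle condition, analogous to Eq.~\eqref{eq:prodtetra}); what it costs is the combinatorial bookkeeping, which the $U_\omega$-conjugation makes unnecessary. One small caution if you carry out the geometric version: the superscript notation $W_v^{q_kq_l^{-1}}$ is slightly abusive, since the phase depends on the bottom label $q_l$ and the neighboring DOFs separately, not only on the product $q_kq_l^{-1}$; your composition identity is correct precisely when the top label of the lower prism matches the bottom label of the upper one, which is how you have set it up, so the argument goes through.
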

\begin{proof}
    Let us write 
    \begin{align}
        K_v=\frac{1}{|Q|}\sum_{k,l=0}^{|Q|-1}K_{v,kl},
    \end{align}
    where $K_{v,kl}=W_v^{q_kq_l^{-1}}\ket{q_k}_v\bra{q_l}$. We can write the phase operator $W_{v}^{q_kq_l^{-1}}$ as
    \begin{align}
        W_v^{q_kq_l^{-1}}\ket{q_k}_v\bra{q_l}=U_{\omega}\ket{q_k}_v\bra{q_l}U_{\omega}^{\dagger} .
    \end{align}
Using the Hermitian conjugation of the above equation, we have  \begin{subequations}
        \begin{align}
        K_{v,kl}^{\dagger}
        &=U_{\omega}\ket{q_l}\bra{q_k}U_{\omega}^{\dagger}=K_{v,lk}.
    \end{align}
    \end{subequations}
    Hence, we have $K_v^{\dagger}=K_v$. 
     Next, we prove $K_v^2=K_v$ using the following steps,
\begin{subequations}
        \begin{align}
        K_v^2&=\frac{1}{|Q|^2}\sum_{k,l,m,n=0}^{|Q|-1}U_{\omega}\ket{q_k}_v\bra{q_l}U_{\omega}^{\dagger}U_{\omega}\ket{q_m}_{v}\bra{q_n}U_{\omega}^{\dagger}\\
        &=\frac{1}{|Q|^2}\sum_{k,l,m,n=0}^{|Q|-1}U_{\omega}\ket{q_k}_v\bra{q_n}U_{\omega}^{\dagger}\delta_{l,m}\\
        &=\frac{1}{|Q|}\sum_{k,l=0}^{|Q|-1}K_{v,kn}=K_v.
    \end{align}
    \end{subequations}
\end{proof}
\begin{claim}
    $K_v$ commute with $A_{v'}$, $B_p$ and $K_{v'}$ $\forall$ $v$ and $v'$.
\end{claim}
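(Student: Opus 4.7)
The plan is to mirror the strategy used for the TQD Hamiltonian: rewrite every SET operator as a $U_\omega$-conjugate of a simpler, untwisted operator, and then reduce each commutator to one whose vanishing is manifest from support considerations.

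First, I would promote the identity used in the $K_v^2 = K_v$ argument,
\[
W_v^{q_k q_l^{-1}} \ket{q_k}_v\bra{q_l} \;=\; U_\omega \ket{q_k}_v\bra{q_l} U_\omega^\dagger,
\]
to a global rewriting $K_v = U_\omega P_v U_\omega^\dagger$, where $P_v = \tfrac{1}{|Q|} \sum_{k,l}\ket{q_k}_v\bra{q_l}$ is the rank-one projector on the $Q$-subspace of vertex $v$ onto the uniform superposition $\tfrac{1}{\sqrt{|Q|}}\sum_k \ket{q_k}_v$. By the same tetrahedron/frustum geometric argument that was used to justify the formula for $\tilde W_v^g$ earlier, I expect the analogous identity $A_v = U_\omega \tilde A_v U_\omega^\dagger$, where
\[
\tilde A_v \;=\; \frac{1}{|N|}\sum_{n\in N}\sum_{q\in s(Q)} \Big(\prod_{e\supset v} L^n_{\pm e}\Big)\, \ket{q}_v\bra{q}
\]
is the untwisted SET vertex operator. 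The plaquette operator $B_p$ is already diagonal in both the edge and vertex bases, hence $U_\omega B_p U_\omega^\dagger = B_p$.

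With these expressions in hand, the three commutators collapse in parallel. For $[K_v,K_{v'}]$, conjugation reduces the problem to $[P_v, P_{v'}]$; each $P_{v''}$ is supported on a single vertex's $Q$-subspace, so the commutator is zero when $v\neq v'$ and trivially zero when $v = v'$. For $[K_v, B_p]$, I would note that $K_v$ leaves every edge label $n_e$ untouched (only the $q$-label at $v$ is altered and $W$ acts as a diagonal phase in the edge basis), so $K_v$ preserves the flux $\prod_{e\in\partial p} n_e$; equivalently, $[P_v, B_p] = 0$ because $P_v$ does not act on any edge. For $[K_v, A_{v'}]$, the commutator reduces to $[P_v, \tilde A_{v'}]$: when $v \neq v'$, $P_v$ acts only on the $Q$-subspace of $v$ while $\tilde A_{v'}$ acts only on edges incident to $v'$ and the $Q$-subspace of $v'$, hence disjoint supports; when $v = v'$, one uses $\sum_{q\in s(Q)}\ket{q}_v\bra{q} = \mathbb{1}_{Q,v}$, which tautologically commutes with $P_v$, so $\tilde A_v$ factorizes on the $Q$-subspace as (edge part)$\,\otimes\,\mathbb{1}_{Q,v}$.

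The main obstacle is establishing the key rewriting $A_v = U_\omega \tilde A_v U_\omega^\dagger$ cleanly. The subtlety is that the $q$-dependent twist $W_v^{q n q^{-1}}$ in $A_v$ must emerge from pushing the edge shift $\prod_{e\supset v}L^n_{\pm e}$ through $U_\omega$ while $\ket{q}_v\bra{q}$ is held fixed. I expect this to work by the same tetrahedron accounting as in Fig.~\ref{fig:W-phase1}: lifting vertex $v$ with the lower copy labeled $q$ and the upper copy labeled $q n$, the product of cocycles over the prism yields exactly $W_v^{q n q^{-1}}$. Once this identity is in place, the three commutation statements above follow immediately from the disjoint/compatible supports of $P_v$, $\tilde A_{v'}$, and $B_p$.
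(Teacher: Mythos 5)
Your proposal is correct and follows essentially the same route as the paper: both strip the cocycle twist by writing $K_v$ and $A_{v'}$ as $U_\omega$-conjugates of untwisted operators (so the commutators reduce to manifestly vanishing ones by disjoint supports), and both handle $[K_v,B_p]$ by noting that $K_v$ acts diagonally on the edge labels and hence preserves the flux condition. Your explicit identification of the untwisted SET vertex operator $\tilde A_v$ and the $v=v'$ case is slightly more careful than the paper's write-up, but it is the same argument.
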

\begin{proof}
First we prove $[K_v,K_{v'}]=0$. For this, we show $[K_{v,kl},K_{v',mn}]=0$ when $v\neq v'$, as shown below,
\begin{subequations}
    \begin{align}
    [K_{v,kl},K_{v',mn}]&=\Bigl[U_{\omega}\ket{q_k}_v\bra{q_l}U_{\omega}^{\dagger},U_{\omega}\ket{q_m}_{v'}\bra{q_n}U_{\omega}^{\dagger}\Bigr]\\
    &=U_{\omega}\Bigl[\ket{q_k}_v\bra{q_l},\ket{q_m}_{v'}\bra{q_n}\Bigr]U_{\omega}^{\dagger}=0.
\end{align}
\end{subequations}
Hence $[K_v,K_{v'}]=0$ $\forall$ $v$ and $v'$.
Note that $K_v$ operator only changes the vertex DOF on the lattice,  so $K_v$ does not change the fluxless condition around any of the plaquettes. Hence, $[K_v,B_p]=0$.

Now let us prove $[K_v,A_{v'}]=0$ by the following steps, \begin{subequations}
     \begin{align}
     [K_v,A_{v'}]&=\Bigl[\frac{1}{|Q|}\sum_{k,l=0}^{|Q|-1}U_{\omega}\ket{q_k}_v\bra{q_l}U_{\omega}^{\dagger},\frac{1}{|G|}\sum_{g\in G}U_{\omega}\prod_{e\supset v'}L^g_{\pm e}U_{\omega}^{\dagger}\Bigr]\\
     &=\frac{1}{|Q|}\sum_{k,l=0}^{|Q|-1}\frac{1}{|G|}\sum_{g\in G}\Bigl[U_{\omega}\ket{q_k}_v\bra{q_l}U_{\omega}^{\dagger},U_{\omega}\prod_{e\supset v'}L^g_{\pm e}U_{\omega}^{\dagger}\Bigr]\\
     &=\frac{1}{|Q|}\sum_{k,l=0}^{|Q|-1}\frac{1}{|G|}\sum_{g\in G}U_{\omega}\Bigl[\ket{q_k}_v\bra{q_l}, \prod_{e\supset v'}L^g_{\pm e}\Bigr]U_{\omega}^{\dagger}=0.
 \end{align}
 \end{subequations}
From the above equations, we thus conclude that  $K_v$ commutes with $A_{v'}$, $B_p$ and $K_{v'}$.
\end{proof}

\section{Group Extension}
\label{app:extension}
Suppose we are given two groups $Q$ and $N$, then one can construct an extension of $Q$ by $N$ which we denote by $G$ if one has the following short exact sequence
\begin{align}
    1\rightarrow N\xrightarrow[]{i} G \xrightarrow[]{\pi} Q \rightarrow 1.
\end{align}
where $i$ denote the inclusion map and $\pi$ denote the projection map.
Given this short exact sequence, one can define a choice of embedding of $Q$ in $G$ 
\begin{align}
    Q\xrightarrow[]{s} G,
\end{align}
such that $\pi\circ s=id_Q$. Although the inclusion $i$ and projection $\pi$ are homomorphisms, the section $s$ is not a homomorphism (however, we have $s(1_Q)=1_G$). The failure to become a homomorphism is captured by a cocycle in the group cohomology $H^2(Q,N)$. The failure of the section to be a homomorphism is given by
\begin{align}
    s(q_1q_2)^{-1}s(q_1)s(q_2)=i(\omega(q_1,q_2)),
\end{align}
where $\omega\in H^2(Q,N)$.
We consider conjugation operation $\sigma:Q\rightarrow Aut(N)$ which satisfies
\begin{align}
    i(\sigma^{q_2^{-1}}(n_1))=s(q_2)^{-1}i(n_1)s(q_2).
\end{align}
Note that the conjugation operation is dependent on the choice of $s$.
We denote an element $g\in G$ as $g=(q,n)$, where $q\in Q$ and $n\in N$. With the given choice of section $s$, one can equivalently write
\begin{align}
   g=s(q)i(n). 
\end{align} 
Suppose $g_1=(q_1,n_1)$ and $g_2=(q_2,n_2)$ then 
\begin{align}
    g_1.g_2=(q_1,n_1).(q_2,n_2)=(q_1q_2,\omega(q_1,q_2)\sigma^{q_2^{-1}}(n_1)n_2).
\end{align}
The associativity condition on the group multiplication gives the cocycle condition for $\omega$,
\begin{align}
    \omega(q_1,q_2q_3)\omega(q_2,q_3)=\omega(q_1q_2,q_3)\sigma^{q_3^{-1}}[\omega(q_1,q_2)],
\end{align}
when $N$ is abelian. Note that this cocycle condition is different from the one where the conjugation acts on $\omega(q_2,q_3)$ considered in \cite{Hierarchy}.
\newline
 As an example, consider the central extension of $\mathbb{Z}_2$ by $\mathbb{Z}_2$. Since $H^2(\mathbb{Z}_2,\mathbb{Z}_2)=\mathbb{Z}_2$, there are two  possible extensions.
\begin{itemize}
    \item Case 1: $\omega$ is trivial. In this case, we have $G=\mathbb{Z}_2\times \mathbb{Z}_2$ since $Aut(\mathbb{Z}_2)$ consists only  of the identity map.
    
    \item Case 2: $\omega$ is the nontrivial class. Then the group $G$ is $\mathbb{Z}_4$. Suppose we denote $N=\mathbb{Z}_2=\{1,t_1\}$  and $Q=\mathbb{Z}_2=\{1,t_2\}$, then $G=\mathbb{Z}_4$ is generated by $(t_2,1)$.
\end{itemize}

\section{$Q$-global symmetry in two-step gauging} 
\label{sec:q-global-sym}

Here we show the $Q$-global symmetry of $|\Psi_5\rangle$ in \eqref{eq:psi-5}:
\beq
U_{(q)} |\Psi_5 \rangle = |\Psi_5 \rangle,
\eeq
with
\beq
&|\Psi_5 \rangle 
:= \sum_{\{g_v\}}
\Omega (\{g_v g^{-1}_{v'}\})
 |\{g_v g^{-1}_{v'}\} \rangle_e 
\bigotimes_v 
\Big( Z^{k_v}_{(n)}
\sum_{r \in N}| q(g_v) r \rangle_v
\Big), \\
& U_{(q)} := \prod_{v} X_{(q)}.
\eeq 
We have that $\sum_{r\in N}X_{(q_1)} | q(g_v) r \rangle 
:=\sum_{r\in N} |s(q_1) q(g_v)r \rangle = \sum_{r\in N}|q(s(q_1) g_v))n(s(q_1) q(g_v))r  \rangle =\sum_{r'\in N} |q(g_v s(q_1) )r'  \rangle$, where the first equality is by definition (see \eqref{eq:x}), 
the second equality is by the definitions of $q(\dots)$ and $n(\dots)$ in Eq.~\eqref{eq:n}, and in the last equality we have relabeled the group element $r \in N$ by $r'=n(s(q_1)q(g_v) )r$ and used $q(s(q_1) g_v) = q(g_v s(q_1))$.
(Note that in the second equality, for $G=S_3$ the element $n(s(q_1) q(g_v))$ is trivial, but for $G=Q_8$, for example, it is nontrivial, and thus we kept it present in the equation.)
We find that the change of variable $g_v = \widetilde{g}_v s(q_1)^{-1}$ gives us
\beq
U_{(q_1)} |\Psi_5 \rangle
=  \sum_{\{\widetilde{ g}_v\}}
\Omega (\{\widetilde{g}_v \widetilde{g}^{-1}_{v'}\})
 |\{ \widetilde{g}_v \widetilde{g}^{-1}_{v'}\} \rangle_e 
\bigotimes_v 
\Big( Z^{k_v}_{(n)}
\sum_{r' \in N}| q(\widetilde{ g}_v) r' \rangle_v , 
\Big),
\eeq
and thus the state is invariant.

\section{Branch line operator $\mathcal{B}^x_{\partial\mathcal{R}}$}\label{sec: branch line operator appendix}

With a pre-gauge structure we have introduced the gauge transformation as
\beq 
G^x_v\equiv L^x_{-v} \prod_{e\supset v}L^{x}_{\pm e}.
\eeq
To write down the branch line operator on $\partial\mathcal{R}$, we impose some gauge transformation in region $\mathcal{R}$, the effect of which can be contained only on its boundary. In Sec.~\ref{sec:sbl}, we claim that when the edge configuration is trivial (i.e., $h_e=1$ for all edges), the operator $G^x_{\mathcal{R}}=\prod_{v\in \mathcal{R}}G^x_v$ has this property. Now we make some more detailed analysis.
Recall that
\beq
    \ket{\Psi_{\text{SPT-pre}}}=\ket{\Psi_{\text{SPT}}}\bigotimes_e\ket{1}_e.
\eeq
Now we insert a $G^x_{\mathcal{R}}$ on the state, we have the following,
\beq
    G^x_{\mathcal{R}}\ket{\Psi_{\text{SPT-pre}}}&=\prod_{v\in\mathcal{R}}L^x_{-v}\big(\prod_{e\supset v}L^x_{\pm e}\big)\ket{\Psi_{\text{SPT}}}\bigotimes_e\ket{1}_e\\
    &=\sum_{\{g_v\}}\prod_{\Delta}\omega(\{g_v\})^{s(\Delta)}\bigotimes_{v_i\in\mathcal{R}}\ket{g_{v_i} x^{-1}}\bigotimes_{v_o\notin\mathcal{R}}\ket{g_{v_o}}\bigotimes_{e}\big(\prod_{{e'
    \cap \partial\mathcal{R}}}L^x_{\pm e'}\big)\ket{1}\\
    &=\sum_{\{g_v\}}\prod_{\Delta}\omega(\{g_{v_i} x\},\{g_{v_o}\})^{s(\Delta)}\bigotimes_{v}\ket{g_v}\bigotimes_{e}\big(\prod_{{e'
    \cap \partial\mathcal{R}}}L^x_{\pm e'}\big)\ket{1}\\
    &=\sum_{\{g_v\}}Amp^{\mathcal{R}}(\{g_v\},x)\prod_{\Delta}\omega(\{g_v\})^{s(\Delta)}\bigotimes_{v}\ket{g_{v}}\bigotimes_{e}\big(\prod_{{e'
    \cap \partial\mathcal{R}}}L^x_{\pm e'}\big)\ket{1}.\label{eq:global gauge}
\eeq

\begin{figure}[h]
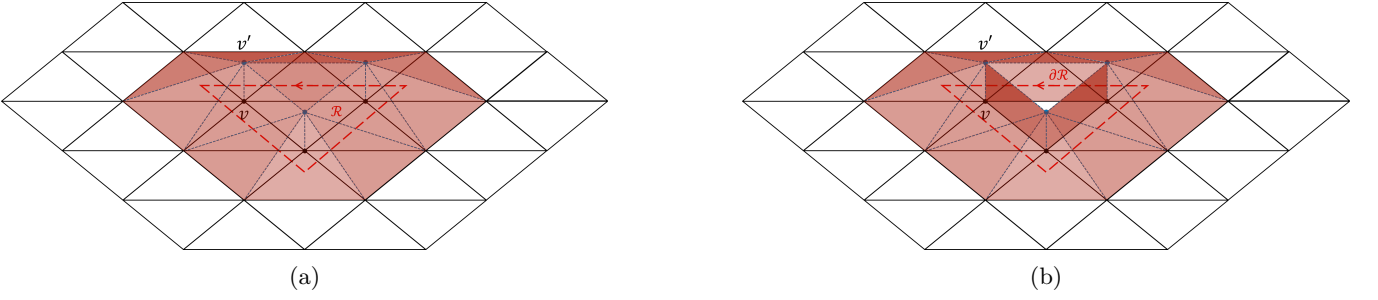

    \centering
    \begin{subfigure}[h]{0.45\linewidth}
    \centering
    \includegraphics[width=\linewidth]{closed_symmetry.png}
    \caption{}
    \label{fig:closed symmetry appendix}
    \end{subfigure}
    \hfill
    \begin{subfigure}[h]{0.45\linewidth}
    \centering
    \includegraphics[width=\linewidth]{closed_defect.png}
    \caption{}
    \label{fig:closed defect appendix}
    \end{subfigure}
    \caption{(a) Symmetry action inside of region $\mathcal{R}$ ``lifts'' $\mathcal{R}$ such that all the simplex in the region correspond to $\Tilde{\omega}$. (b) This symmetry action can 
 be equivalently  regarded as the insertion of symmetry branch line on $\partial\mathcal{R}$. }
\end{figure}

As shown above in Fig.~\ref{fig:closed symmetry appendix}, given the configuration $\{g_v\}$, the  product of cocycles $\prod_{\Delta}\omega(\{g_{v_i} x\},\{g_{v_o}\})^{s(\Delta)}$ corresponds to the upper surface, while $\prod_{\Delta}\omega(\{g_{v_i}\},\{g_{v_o}\})^{s(\Delta)}$ corresponds to the lower surface. According to the cocycle conditions, their ratio corresponds 
to the tetrahedrons that they enclose,
\beq  
    Amp^{\mathcal{R}}=\prod_{\text{tetra}\in\mathcal{R}} \omega(\text{tetra})^s.
\eeq

In Sec.~\ref{sec:sbl}, we claim that according to cocycle conditions, this phase factor only depends on the configurations on $\partial\mathcal{R}$. Now we take the simplest case when $\mathcal{R}$ only encloses one plaquette to illustrate. 
\begin{figure}[h!]
    \centering
    \includegraphics[scale=0.3]{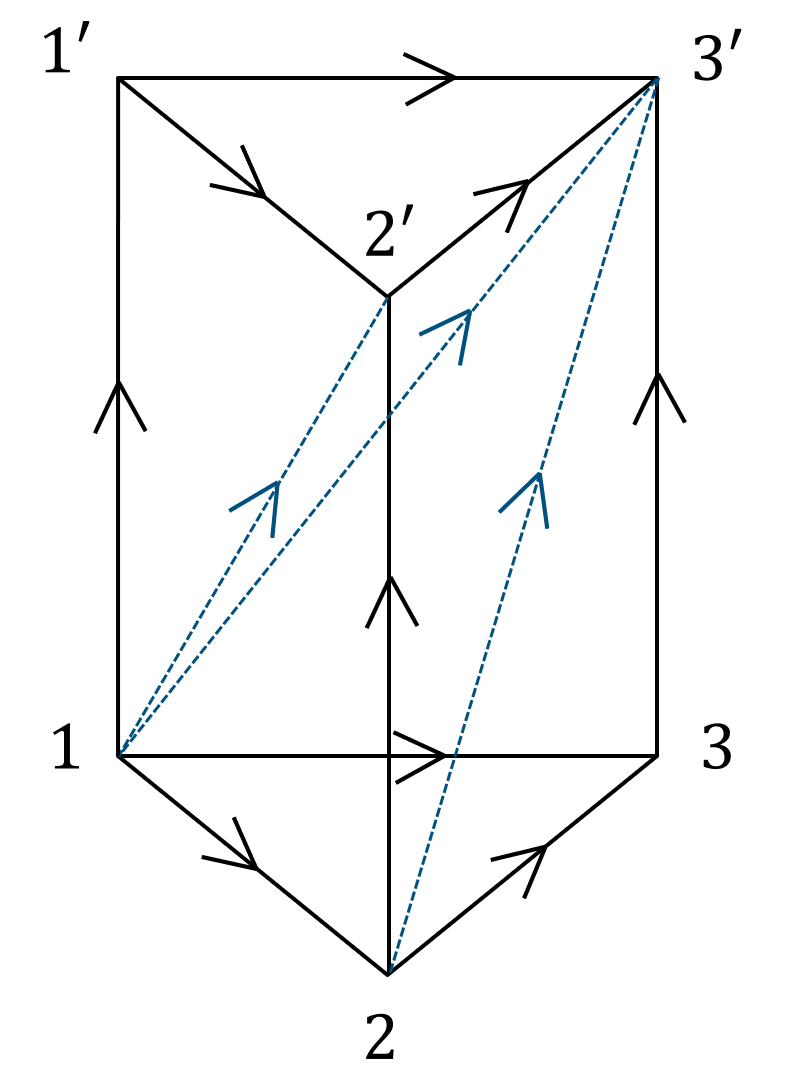}
    \caption{There are 3 tetrahedrons when 1 plaquette is lifted.}
    \label{fig:3tetra}
\end{figure}
As shown in Fig.~\ref{fig:3tetra}, when the plaquette $(123)$ is lifted, there are three tetrahedrons, where the vertices on the upper surface are associated with $g_{i'}=g_i x$. According to the rule introduced in Sec.~\ref{sec:GaugingMap}, the three tetrahedrons correspond to the expression,
\beq
    \frac{\omega(g_3 x g_3^{-1},g_3 g_2^{-1},g_2 g_1^{-1})\omega(g_3g^{-1}_2,g_2g^{-1}_1,g_1 x g_1^{-1})}{\omega(g_3g_2^{-1},g_2xg_2^{-1},g_2g_1^{-1})}=\theta_{g_3xg_3^{-1}}(g_3g_2^{-1},g_2g_1^{-1}),
\eeq
where $\theta_x(g,h)$ is the slant product introduced in Eq.~\eqref{eq:thetadef}. Therefore, in this simplest case, we have
\beq
    Amp^{\mathcal{R}}=\theta_{g_3xg_3^{-1}}(g_3g_2^{-1},g_2g_1^{-1})\prod_{\text{tetra}\in\partial\mathcal{R}} \omega(\text{tetra})^s.
\eeq

For simplicity and without loss of generality, we assume $\partial\mathcal{R}$ with branching structure $ 1\rightarrow 2\rightarrow 3\rightarrow \cdots \rightarrow n \leftarrow 1$, then by using cocycle conditions, we can show that the tetrahedrons inside a union of prisms (see Fig.~\ref{fig:closed defect appendix}), which is formed by lifted plaquettes, will give rise to
\beq
    \Tilde{\Theta}^{g_n x g_n^{-1}}_{\partial\mathcal{R}}=\theta_{g_n x g_n^{-1}}(g_n g_{n-1}^{-1},g_{n-1}g_1^{-1})\cdots\theta_{g_3 x g_3^{-1}}(g_3 g_{2}^{-1},g_{2}g_1^{-1}).
\eeq
Therefore,
\beq
    Amp^{\mathcal{R}}=\Tilde{\Theta}^{g_n x g_n^{-1}}_{\partial\mathcal{R}}\prod_{\text{tetra}\in\partial\mathcal{R}} \omega(\text{tetra})^s.
\eeq

The shift operator $\prod_{{e'\cap\partial\mathcal{R}}}L^x_{\pm e'}$ in Eq.~\eqref{eq:global gauge} is exactly the operator $L^x_{\partial\mathcal{R}}$ defined in Eq.~\eqref{eq:Lx1}. Therefore we arrive at
\beq
    \tilde{\mathcal{B}}^x_{\partial\mathcal{R}}\ket{\Psi_{\text{SPT-pre}}}=G^x_{\mathcal{R}}\ket{\Psi_{\text{SPT-pre}}},
\eeq
when
\begin{align}
    \tilde{\mathcal{B}}^x_{\partial\mathcal{R}}=\sum_{g_v}\mathcal{L}^{x}_{\partial \mathcal{R}} \Tilde{W}^{g_v x g_v^{-1}}_{\partial\mathcal{R}}\Tilde{\Theta}^{g_v x g_v^{-1}}_{\partial\mathcal{R}}\ket{g_v}_v\bra{g_v}.
\end{align}

For a state with nontrivial configuration $\{h_e\}$, there could be fluxes on some plaquette, i.e., for some plaquette $p$, $\prod_{e\in\partial p}h_e\neq1$. For the purpose of illustration, we now focus on states that have fluxes only on 1 plaquette, and only violates terms that are on this plaquette in $H_{\text{SPT-pre}}$ in Eq.~\eqref{eq:hamiltonian SPT-pre}. To write down the branch line operators for these states, we repeat a similar procedure: we impose some gauge transformation on region $\mathcal{R}$, the effect of which would be only on its boundary. It turns out that when all the plaquettes on $\partial\mathcal{R}$ are fluxless (i.e., $\prod_{e\in\partial p}h_e=1$), and the flux on each plaquette $p\in\mathcal{R}$ ($\prod_{e\in \partial p}h_e$) is in the centralizer group $\mathcal{Z}_x$ (since the flux on a plaquette is ambiguous up to a conjugation when choosing a different starting point, we assume that the whole conjugacy class of flux $[\prod_{e\in\partial p}h_e]\subset \mathcal{Z}_x$), we can indeed find such a gauge transformation as we now explain. We first choose a reference vertex $v$ in $\mathcal{R}$, then for any vertex $v'$, we can find a path $l$ that flows from $v$ to $v'$ and define $h_{v'v}=\prod_{e\in l}h_e$. Given that all the fluxes are assumed to be in $\mathcal{Z}_x$, i.e., they commute with $x$, therefore, different choices of path $l$ will give rise to the same gauge transformation (taking $h_{vv}=1$),
\beq
    G^{x}_{v,\mathcal{R}}\equiv\prod_{v'\in\mathcal{R}}G_{v'}^{h_{v'v}xh_{v'v}^{-1}},
\eeq
which will leave all edge DOFs invariant except for the ones on $\partial\mathcal{R}$, and the shift on those edges is exactly the operator $L^x_{\partial\mathcal{R}}$ defined in Eq.~\eqref{eq:Lx}. 
Now since there are plaquettes in $\mathcal{R}$ that have nontrivial fluxes, we cannot write the phase part of the gauge transformation as
\beq  
    Amp^{\mathcal{R}}=\prod_{\text{tetra}\in\mathcal{R}} \omega(\text{tetra})^s.
\eeq
However, since we assume the plaquettes on $\partial\mathcal{R}$ are all fluxless, we still have a well defined factor $\prod_{\text{tetra}\in\partial\mathcal{R}} \omega(\text{tetra})^s$, the holonomy along $\partial\mathcal{R}$ is $h\equiv\prod_{e\in\partial\mathcal{R}}h_e\neq 1$. We conjecture the branch line operator to be
\beq
    \mathcal{B}^{x}_{\partial\mathcal{R}}&=\sum_{g}\mathcal{B}^{x,g}_{\partial\mathcal{R}}\epsilon_{g_v x g_v^{-1}}(g),
\eeq
with  
\beq
\mathcal{B}^{x,g}_{\partial\mathcal{R}}&\equiv\sum_{g_v}\mathcal{L}^{x}_{\partial\mathcal{R}} W^{g_v x g_v^{-1}}_{\partial\mathcal{R}}\Theta^{g_v x g_v^{-1}}_{\partial\mathcal{R}}\delta_{g,g_v (\prod_e h_e )g_v^{-1}}\ket{g_v}_v\bra{g_v}.
\eeq

Again we suppose that $\partial\mathcal{R}$ has the branching structure $ 1\rightarrow 2\rightarrow 3\rightarrow \cdots \rightarrow n \leftarrow 1$. Then, we have
\beq
    \label{eq:Theta}
    \Theta^{g_n x g_n^{-1}}_{\partial\mathcal{R}}=\theta_{g_n x g_n^{-1}}(g_n h_{n,n-1} g_{n-1}^{-1},g_{n-1}h_{n,n-1}^{-1}hg_1^{-1})\cdots\theta_{g_3 x g_3^{-1}}(g_3 h_{3,2} g_{2}^{-1},g_{2}h_{2,1}g_1^{-1})\theta^{-1}_{g_n x g_n^{-1}}(g_n h g_n^{-1},g_n h_{n,1}g_1^{-1}),
\eeq
where $h\equiv h_{n,n-1}\cdots h_{2,1} h_{n,1}^{-1}$ is the holonomy along $\partial\mathcal{R}$. We can calculate the multiplication rule for $\mathcal{B}^x_{\partial\mathcal{R}}$. First off, notice that operator $\mathcal{L}^{x}_{\partial\mathcal{R}} W^{g_n x g_n^{-1}}_{\partial\mathcal{R}}$ and $\Theta^{g_nyg_n^{-1}}_{\partial\mathcal{R}}$ commute for any $x,y\in G$.
Using the cocycle condition, one can show that
\beq
    \mathcal{L}^{x}_{\partial\mathcal{R}} W^{g_n x g_n^{-1}}_{\partial\mathcal{R}}\mathcal{L}^{y}_{\partial\mathcal{R}} W^{g_n y g_n^{-1}}_{\partial\mathcal{R}}=\mathcal{L}^{x}_{\partial\mathcal{R}}W^{g_n xy g_n^{-1}}_{\partial\mathcal{R}}\gamma_{g_n h_{n,n-1}g_{n-1}^{-1}}(g_n x g_n^{-1},g_n y g_n^{-1}) \cdot\text{phase},
    \label{eq:lwlw}
\eeq
where the `phase' in the above equation corresponds to Fig.~\ref{fig:lwlw}. 
Using the correspondence between tetrahedrons and cocycles, one can write

\beq
    \mathcal{L}^{x}_{\partial\mathcal{R}} W^{g_n x g_n^{-1}}_{\partial\mathcal{R}}\mathcal{L}^{y}_{\partial\mathcal{R}} W^{g_n y g_n^{-1}}_{\partial\mathcal{R}}=&\mathcal{L}^{xy}_{\partial\mathcal{R}} W^{g_n xy g_n^{-1}}_{\partial\mathcal{R}}\gamma_{g_n h_{n,n-1}g_{n-1}^{-1}}(g_n x g_n^{-1},g_n y g_n^{-1})\cdots \\
    &\gamma_{g_2 h_{2,1}g_{1}^{-1}}(g_2 x g_2^{-1},g_2 y g_2^{-1})\gamma^{-1}_{g_n h_{n,1}g_{1}^{-1}}(g_n x g_n^{-1},g_n y g_n^{-1}).
\eeq
\begin{figure*}[t]
    \centering
    \includegraphics[scale=0.4]{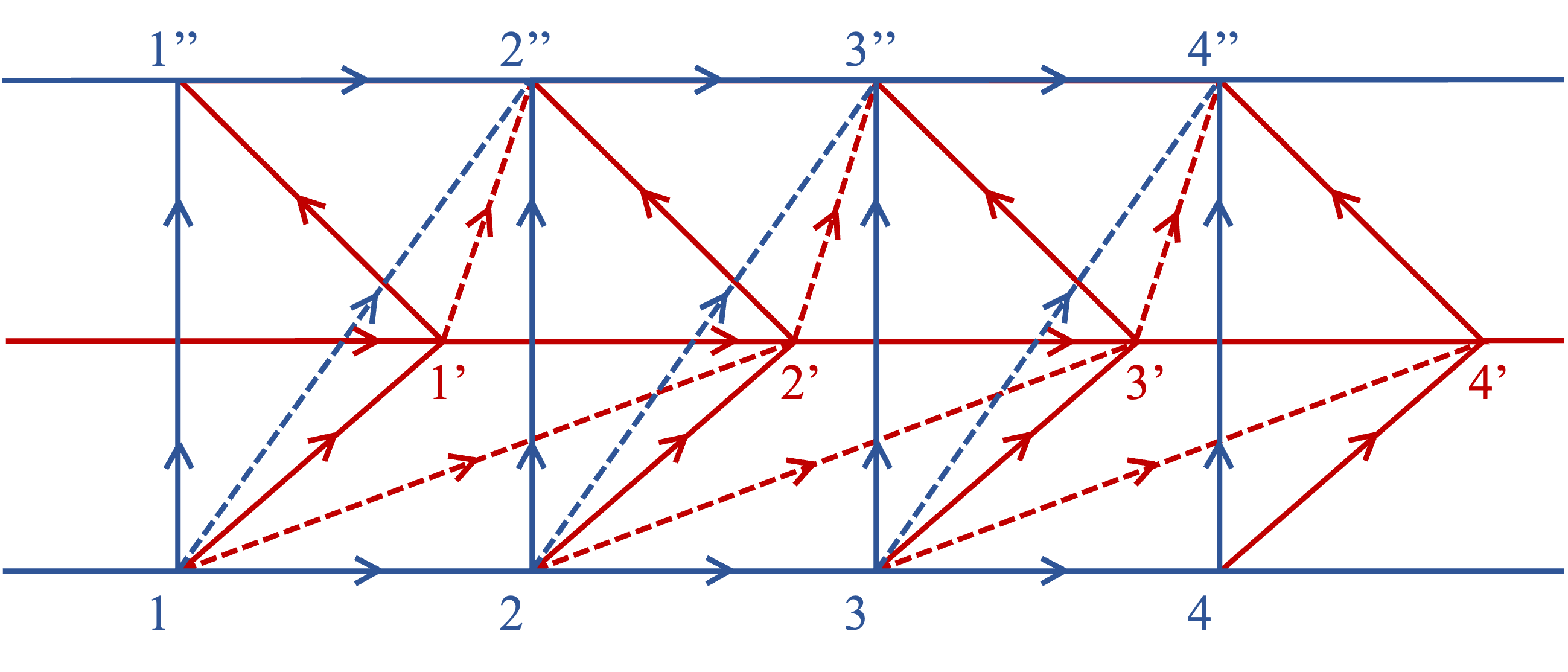}
    \caption{Geometric diagram illustrating the phase combination in Equation~\ref{eq:lwlw}. Here $g_{v'} g^{-1}_{v}= y$ and $g_{v''} g^{-1}_{v'}= x$ for $v \in V$ (which are enumerated by 1,2,3,4).}
    \label{fig:lwlw}
\end{figure*}
By using the definition, we can compute the following product of the two phases,
\beq
    \Theta^{g_n x g_n^{-1}}_{\partial\mathcal{R}}\Theta^{g_n y g_n^{-1}}_{\partial\mathcal{R}}=&\theta_{g_n x g_n^{-1}}(g_n h_{n,n-1} g_{n-1}^{-1},g_{n-1}h_{n,n-1}^{-1}hg_1^{-1})\theta_{g_n y g_n^{-1}}(g_n h_{n,n-1} g_{n-1}^{-1},g_{n-1}h_{n,n-1}^{-1}hg_1^{-1})\cdots\\
    &\theta_{g_3 x g_3^{-1}}(g_3 h_{3,2} g_{2}^{-1},g_{2}h_{2,1}g_1^{-1})\theta_{g_3 y g_3^{-1}}(g_3 h_{3,2} g_{2}^{-1},g_{2}h_{2,1}g_1^{-1})\\
    &\theta^{-1}_{g_n x g_n^{-1}}(g_n h g_n^{-1},g_n h_{n,1}g_1^{-1})\theta^{-1}_{g_n y g_n^{-1}}(g_n h g_n^{-1},g_n h_{n,1}g_1^{-1}).
\eeq
Using the identity,
\beq
    \theta_g(x,y)\theta_h(x,y)\gamma_x(g,h)\gamma_y(x^{-1}gx,x^{-1}hx)=\theta_{gh}(x,y)\gamma_{xy}(g,h),
\eeq
one can derive that
\beq
    \mathcal{L}^{x}_{\partial\mathcal{R}} W^{g_n x g_n^{-1}}_{\partial\mathcal{R}}\Theta^{g_n x g_n^{-1}}_{\partial\mathcal{R}}\mathcal{L}^{y}_{\partial\mathcal{R}} W^{g_n y g_n^{-1}}_{\partial\mathcal{R}}\Theta^{g_n y g_n^{-1}}_{\partial\mathcal{R}}=\mathcal{L}^{xy}_{\partial\mathcal{R}} W^{g_n xy g_n^{-1}}_{\partial\mathcal{R}}\Theta^{g_n xy g_n^{-1}}_{\partial\mathcal{R}}\gamma_{g_n h g_n^{-1}}(g_n x g_n^{-1},g_n y g_n^{-1}).
\eeq
When the pre-gauge structure is trivial (i.e. $h_e\equiv1$), the phases $W^{g_n x g_n^{-1}}_{\partial\mathcal{R}}$ and $\Theta^{g_v x g_v^{-1}}_{\partial\mathcal{R}}$ are reduced to $\Tilde{W}^{g_n x g_n^{-1}}_{\partial\mathcal{R}}$ and $\Tilde{\Theta}^{g_v x g_v^{-1}}_{\partial\mathcal{R}}$, resepctively, and also the holonomy is trivial, $h\equiv1$. 
Therefore, we arrive at the multiplication rule,
\beq
    \tilde{\mathcal{B}}^x_{\partial\mathcal{R}}\tilde{\mathcal{B}}^y_{\partial\mathcal{R}}=\tilde{\mathcal{B}}^{xy}_{\partial\mathcal{R}}.
\eeq
For a more general pre-gauge structure, where the fluxes are in $\mathcal{Z}_x$, one has
\begin{align}
    \mathcal{B}^{x,g}_{\partial\mathcal{R}}\,\mathcal{B}^{y,g'}_{\partial\mathcal{R}}=\mathcal{B}^{xy,g}_{\partial\mathcal{R}}\,\gamma_g(g_v x g_v^{-1},g_v y g_v^{-1})\,\delta_{g,g'}.
\end{align}

 \section{Finite-depth local unitary to map an SET state to a TQD state}
 \label{sec:local unitary}

For an SPT state, we define an operator
\beq
    \Hat{O}=U_{\omega}\Big(\prod_v \mathcal{V}^{q}_{v}\Big)U_{\omega}^{\dagger},
\eeq
where $U_{\omega}$ is the operator that brings a direct product state to a $G$-SPT state,
\beq
    U_{\omega}=\sum_{\{g_v\}}\prod_{\Delta}\omega(g_3g_2^{-1},g_2g_1^{-1},g_1)^{s(\Delta)}\bigotimes_{v} \ket{g_v}\bra{g_v},
\eeq
and $\mathcal{V}^q_v$ is the Fourier transform of the quotient part. Namely, suppose $|Q|=m$, we label the embedding of $Q$ in $G$ as $s(Q)=\{q_0,\cdots,q_{m-1}\}$, where $q_0=1$.
Then we write
\beq
    \mathcal{V}^q_v=\frac{1}{\sqrt{m}}\sum_{k,l=0}^{m-1} e^{\frac{2\pi i kl}{m}}L^{q_k q_l^{-1}}_{+v}\delta(q(g_v),q_l) ,
\eeq
whose hermitian conjugate is calculated to be
\beq 
(\mathcal{V}^q_v)^{\dagger}=\frac{1}{\sqrt{m}}\sum_{k,l=0}^{m-1} e^{- \frac{2\pi i kl}{m}}L^{q_k q_l^{-1}}_{+v}\delta(q(g_v),q_l) .
\eeq
We note that this Fourier transform is indeed unitary: $(\mathcal{V}^q_v)^{\dagger} \mathcal{V}^q_v = 1$. Applying $\Hat{O}$ on the $G$-SPT state brings a $G$-SPT state to an $N$-SPT state:
\beq
    \Hat{O}\ket{\Psi_{G\text{-SPT}}}&=U_{\omega}\Big(\prod_v \mathcal{V}^{q}_{v}\Big)U_{\omega}^{\dagger}\ket{\Psi_{G\text{-SPT}}}\\
    &=U_{\omega}\Big(\prod_v \mathcal{V}^{q}_{v}\Big)\bigotimes_v\Big(\sum_{g\in G}\ket{g}_v\Big)\\
    &=U_{\omega}\bigotimes_v\Big(\sum_{n\in N}\ket{n}_v\Big)\\
    &=\sum_{\{n_v\}}\prod_{\Delta}\omega(n_3n_2^{-1},n_2n_1^{-1},n_1)^{s(\Delta)}\bigotimes_{v} \ket{n_v}\\
    &=\sum_{\{n_v\}}\prod_{\Delta}\nu(n_3n_2^{-1},n_2n_1^{-1},n_1)^{s(\Delta)}\bigotimes_{v} \ket{n_v}\\
    &=\ket{\Psi_{N\text{-SPT}}},
\eeq
where $\nu$ is the restriction of $\omega$ on $N$ and naturally a cocycle in $H^3(N,U(1))$. The operator $\mathcal{V}_v^q$ on different vertices commute,
\beq
    \mathcal{V}_v^q \mathcal{V}_{v'}^q - \mathcal{V}_{v'}^q \mathcal{V}_v^q =0.
\eeq
Therefore we can write an operator on vertex $v$
\beq
    \Hat{O}_v\equiv U_{\omega}\mathcal{V}^{q}_{v} U_{\omega}^{\dagger} =\frac{1}{\sqrt{m}}\sum_{k,l=0}^{m-1} e^{\frac{2\pi i kl}{m}}L^{q_k q_l^{-1}}_{+v}W^{q_k q_l^{-1}}_v\delta(q(g_v),q_l),
\eeq
such that $\Hat{O}_v$ on different vertices commute, and their product over all the vertices is exactly $\Hat{O}$,
\beq
    \Hat{O}\equiv\prod_v\Hat{O}_v.
\eeq 
After gauging the normal subgroup $N$, the operator $\Hat{O}$ is mapped to
\beq
    \Hat{O}_{\text{SET}}=\prod_v \Hat{O}_{v,\text{SET}},
\eeq
where
\beq
    \Hat{O}_{v,\text{SET}}=\frac{1}{\sqrt{m}}\sum_{k,l=0}^{m-1} e^{\frac{2\pi i kl}{m}}W^{q_k q_l^{-1}}_v\ket{q_k}_v\bra{q_l}.
\eeq
The operators $\Hat{O}_{v,\text{SET}}$ on different vertices commute. 
Also note that since $\Hat{O}_{v,\text{SET}}$ is supported on $v$, adjacent vertices, and edges, and thus it can be implemented locally in the state after gauging. 

Note that although all $\Hat{O}_{v,\text{SET}}$ commute, the operators $W^{q_k q_l^{-1}}_{v}$ generally depends on $q_{v'}$ and $n_e$ configuration on adjacent vertices and edges. 
To implement the transformation, we specify an ordering of $\Hat{O}_{v,\text{SET}}$ in $\Hat{O}_{\text{SET}}$ which can be implemented in finite depth as follows. 
We divide the spatial lattice into sublattices such that, the vertices in each sublattice are not adjacent in the spatial lattice. Then the operators $L^{q_{k}q_l^{-1}}_{+v}$ and $W^{q_k q_l^{-1}}_{v'}$ always commute when $v$ and $v'$ are different vertices in one sublattice. As a result, we can implement $\Hat{O}_{v,\text{SET}}$ within one sublattice simultaneously, and implement one sublattice in each step. As long as there is only a finite number of such sublattices in the spatial lattice, the circuit we described above is a finite-depth local unitary,
\beq
    \Hat{O}_{\text{SET}}=\prod_{\text{sublattices}}\prod_{v\in \text{sublattice}}\Hat{O}_{v,\text{SET}}.
\eeq

Applying $\Hat{O}_{\text{SET}}$ on an SET ground state will take all the vertex DOFs to $\ket{q_v}=\ket{1}$, i.e. disentangle all the vertex DOFs. Therefore, under the action of this operator, we get a TQD ground state with gauge group $N$. We emphasize that here we aim to probe the underlying anyons, and the circuit does not respect the global $Q$-symmetry, which is why we can take an SET state to a pure TQD state.

\section{Fusion rule in SET via gauging $Z_4$ group from $D_4$ SPT}
\label{sec:fusionruled4 appendix}
We write the element in $D_4$ as $\Tilde{g}=(G,g)\equiv x^Ga^g$. We construct a representative of 3-cocycle in $H^3(D_4,U(1))$ as follows, 
\beq
    \omega(\Tilde{g},\Tilde{h},\Tilde{l})  =\exp{\frac{2\pi i p_1}{16}g(-1)^{H+L}(h(-1)^L+l-[h(-1)^L+l]_4)+\pi i p_2 GHL+ \pi i p_3 gHL},
\eeq
where $p_1=0,1,2,3$, and $p_2,p_3=0$ or $1$. Gauging the normal subgroup $Z_4$ of a $D_4$-SPT results in a state in an SET that has the same anyon set as $D^{\nu}(Z_4)$, where
\beq
    \nu(g,h,l)=\exp{\frac{2\pi i p_1}{16}g(h+l-[h+l]_4)}
\eeq
is the restriction of $\omega$ on $Z_4$. 
Different values of $p_1$ correspond to different $Z_4$-TQD models.
The symmetry action is nontrivial, and it takes an anyon to its inverse.
In the sector $\mathcal{C}_x$, there are 4 objects of quantum dimension $2$. 
If we pick one of them and name it as $0_x$, by dimension counting, we can write a fusion rule of the form,
\beq
    0_x \times 0_x = a + b + c + d,
\eeq
where $a,b,c,d\in \mathcal{C}$ are abelian anyons. 
Let $b_1=1$ and $b_2=a$.
Then one can write a matrix-valued operator on an open ribbon as
\beq
    (H^x_l)_{ii'}=\sum_{n\in\{1,a^2\}}H_l^{b_i x b_i^{-1},b_i n b_{i'}^{-1}}\epsilon_{b_i x b_i^{-1}}(b_i n b_{i'}^{-1}),
\eeq
where the matrix indices $i,i'=1,2$, and the operator $H_l^{x,g}$ satisfies the same multiplication rule as in Eq.~\eqref{eq:Hoperatormultiply},
\beq
    \label{eq:H-open-multiplication}
H_l^{x,g}H_l^{y,g'}=H_l^{xy,g}\gamma_g(x,y)\delta_{g,g'}.
\eeq

We conjecture that the operator $H^x_l$ creates an object in the sector $\mathcal{C}_x$ on the end point of $l$, and we name it $0_x$. Then the object $0_x\times 0_x$ should be created on the endpoint of $l$ by operator $(H^x_l)^{\otimes2}$. Let $\xi\equiv \exp{\frac{2\pi i }{16}}$. 
We calculate the tensor product of the two open ribbon operators, and by diagonalizing it, we find an expression as follows:
\beq
    (H^x_l)^{\otimes2}=&\begin{pmatrix}
    H_l^{x,1}+\epsilon_{x}(a^2)H_l^{x,a^2} & \epsilon_{x}(a)H_l^{x,a}+\epsilon_{x}(a^3)H_l^{x,a^3}\\
    \epsilon_{xa^2}(a)H_l^{xa^2,a}+\epsilon_{xa^2}(a^3)H_l^{xa^2,a^3} & H_l^{xa^2,1}+\epsilon_{xa^2}(a^2)H_l^{xa^2,a^2}
    \end{pmatrix}^{\otimes2}\\
    =&\begin{pmatrix}
    H_l^{x,1}+\xi^{-4p_1}H_l^{x,a^2} & \xi^{p_1}H_l^{x,a}+\xi^{-3p_1}H_l^{x,a^3}\\
    \xi^{-p_1}H_l^{xa^2,a}+\xi^{-9p_1}H_l^{xa^2,a^3} & H_l^{xa^2,1}+H_l^{xa^2,a^2}
    \end{pmatrix}^{\otimes2}\\
    =&\begin{pmatrix}
        H_l^{1,1}+\xi^{-8p_1}H_l^{1,a^2} & \xi^{2p_1}(-1)^{p_3}H^{1,a}+\xi^{-6p_1}(-1)^{p_3}H^{1,a^3}\\
        \xi^{6p_1}(-1)^{p_3}H^{1,a}+\xi^{-2p_1}(-1)^{p_3}H^{1,a^3} & H_l^{1,1}+\xi^{-8p_1}H_l^{1,a^2}
    \end{pmatrix}\\
    &\oplus \begin{pmatrix}
        H_l^{a^2,1}+\xi^{-4p_1}H_l^{a^2,a^2} & \xi^{-4p_1}(-1)^{p_3}H_l^{a^2,a}+\xi^{8p_1}(-1)^{p_3}H_l^{a^2,a^3} \\
        (-1)^{p_3}H_l^{a^2,a}+\xi^{-4p_1}(-1)^{p_3}H_l^{a^2,a^3} & H_l^{a^2,1}+\xi^{-4p_1}H_l^{a^2,a^2}
    \end{pmatrix} \\
    =&\begin{pmatrix}
        H_l^{1,1}+\xi^{-8p_1}H_l^{1,a^2} & \xi^{2p_1}(-1)^{p_3}H^{1,a}+\xi^{-6p_1}(-1)^{p_3}H^{1,a^3}\\
        \xi^{6p_1}(-1)^{p_3}H^{1,a}+\xi^{-2p_1}(-1)^{p_3}H^{1,a^3} & H_l^{1,1}+\xi^{-8p_1}H_l^{1,a^2}
    \end{pmatrix}\\
    &\oplus \begin{pmatrix}
        H_l^{a^2,1}+\xi^{-8p_1}\epsilon_{a^2}(a^2)H_l^{a^2,a^2} & \xi^{-6p_1}(-1)^{p_3}\epsilon_{a^2}(a)H_l^{a^2,a}+\xi^{2p_1}(-1)^{p_3}\epsilon_{a^2}(a^3)H_l^{a^2,a^3} \\
        \xi^{-2p_1}(-1)^{p_3}\epsilon_{a^2}(a)H_l^{a^2,a}+\xi^{6p_1}(-1)^{p_3}\epsilon_{a^2}(a^3)H_l^{a^2,a^3} & H_l^{a^2,1}+\xi^{-8p_1}\epsilon_{a^2}(a^2)H_l^{a^2,a^2}
    \end{pmatrix} \\
    \simeq& (H_l^{1,1}+\xi^{4p_1}(-1)^{p_3}H_l^{1,a}+\xi^{8p_1}H_l^{1,a^2}+\xi^{12p_1}(-1)^{p_3}H_l^{1,a^3})\\
    &\oplus (H_l^{1,1}-\xi^{4p_1}(-1)^{p_3}H_l^{1,a}+\xi^{8p_1}H_l^{1,a^2}-\xi^{12p_1}(-1)^{p_3}H_l^{1,a^3})\\
    &\oplus (H_l^{a^2,1}+\xi^{4p_1}(-1)^{p_3}\epsilon_{a^2}(a)H_l^{a^2,a}+\xi^{8p_1}\epsilon_{a^2}(a^2)H_l^{a^2,a^2}+\xi^{12p_1}(-1)^{p_3}\epsilon_{a^2}(a^3)H_l^{a^2,a^3})\\
    &\oplus (H_l^{a^2,1}-\xi^{4p_1}(-1)^{p_3}\epsilon_{a^2}(a)H_l^{a^2,a}+\xi^{8p_1}\epsilon_{a^2}(a^2)H_l^{a^2,a^2}-\xi^{12p_1}(-1)^{p_3}\epsilon_{a^2}(a^3)H_l^{a^2,a^3}).
    \label{eq:D4decomposition}
\eeq
In the third equality, we used the multiplication rule Eq.~\eqref{eq:H-open-multiplication}, and the $p_3$ dependence arose from $\gamma_g(x,y)$.

Let us first consider the case with $p_3=0$. When $p_1=0,2$, the four parts in the last line --- each part is a sum of four $H^{\bullet,\bullet}_l$ operators --- are exactly the four ribbon operators in the TQD $D^{\nu}(Z_4)$ that create anyons, $1, e^2, m^2,$ and $e^2m^2$, respectively. 
Therefore, we obtain the fusion rule,
\beq
    0_x\times 0_x=1 + e^2 + m^2 + e^2m^2.
    \label{eq:d4fusion1}
\eeq
When $p_1=1,3$, the four parts in the last line after the decomposition are four ribbon operators in the TQD $D^{\nu}(Z_4)$ that create anyons, $e, e^3, em^2$, and $e^3m^2$. 
Therefore, we obtain the fusion rule,
\beq
    0_x\times 0_x=e + e^3 + em^2 + e^3m^2.
    \label{eq:d4fusion2}
\eeq

According to Eq.~\eqref{eq:D4decomposition}, when $p_3$ takes value $1$ instead of $0$, the ribbon operators creating anyon $e/e^3/em^2/e^3m^2$ after the decomposition become ribbon operators creating respectively $e^3/e/e^3m^2/em^2$ instead. According to Eq.~\eqref{eq:SFCgeneral}, we conclude that different values of $p_3$ indeed give rise to different SFCs such that the fusion rules are shifted by the anyon $[e^2]\in H^2_{\rho}(Z_2,\mathcal{A})$. 
We note that after the shift by $e^2$, the fusion rules in Eq.~\eqref{eq:d4fusion1} and Eq.~\eqref{eq:d4fusion2} are actually invariant. 
However, it does not mean that we are in the same SET order: further analysis, such as the $F$-symbol, the $S$-matrix of the category $\mathcal{C}_{Z_2}^{\times}$~\cite{Barkeshli_2019}, is necessary. 
Indeed in this case, it is ensured that SET orders corresponding to different $p_3$ values are distinct, since further gauging the $Z_2$ symmetry it should result in different TQD orders $D^{\omega}(D_4)$.

\section{Fusion rule in SET via gauging $Z_2\times Z_2$ group from $D_4$ SPT}
\label{sec:fusionruled4 appendix z2z2}
Again, we start from the representative 3-cocycle in $H^3(D_4,U(1))$ as follows: 
\beq
    \omega(\Tilde{g},\Tilde{h},\Tilde{l})  =\exp{\frac{2\pi i p_1}{16}g(-1)^{H+L}(h(-1)^L+l-[h(-1)^L+l]_4)+\pi i p_2 GHL+ \pi i p_3 gHL},
\eeq
where $p_1=0,1,2,3$, and $p_2,p_3=0$ or $1$. 
Gauging the subgroup $Z_2\times Z_2=\{1,x,t,xt\}$ in $D_4$ results in a state within $D^{\nu'}(Z_2\times Z_2)$. 
Let us write $t\equiv a^2$ and $g=x^{g^{(1)}}t^{g^{(2)}}=x^{g^{(1)}}a^{2g^{(2)}}$. The 3-cocycle for this group is
\beq    
    \nu'(g,h,l)=&\exp\Big\{\frac{2\pi i p_1}{4}g^{(2)}(-1)^{h^{(1)}+l^{(1)}}\big(h^{(2)}(-1)^{l^{(1)}}+l^{(2)}\\
    &-[h^{(2)}(-1)^{l^{(1)}}+l^{(2)}]_2\big)+\pi i p_2 g^{(1)}h^{(1)}l^{(1)}\Big\},\\
    =&(-1)^{p_1 (g^{(2)}h^{(2)}l^{(2)}+g^{(2)}h^{(2)}l^{(1)})+p_2 g^{(1)}h^{(1)}l^{(1)}}.
\eeq

We conjecture that the operator $H^a_l$ creates an object in sector $\mathcal{C}_a$ on the end point of $l$ and we name it $0_a$. Then the object $0_a\times 0_a$ should be created on the endpoint of $l$ by operator $(H^a_l)^{\otimes2}$. Let $\xi\equiv \exp{\frac{2\pi i }{16}}$, then a calculation similar to that in Sec.~\ref{sec:fusionruled4 appendix} gives us
\beq
    (H^a_l)^{\otimes2}=&\begin{pmatrix}
    H_l^{a,1}+\epsilon_{a}(a^2)H_l^{a,a^2} & \epsilon_{a}(x)H_l^{a,x}+\epsilon_{a}(xa^2)H_l^{a,xa^2}\\
    \epsilon_{a^3}(x)H_l^{a^3,x}+\epsilon_{a^3}(xa^2)H_l^{a^3,xa^2} & H_l^{a^3,1}+\epsilon_{a^3}(a^2)H_l^{a^3,a^2}
    \end{pmatrix}^{\otimes2}\\
    =&\begin{pmatrix}
    H_l^{a,1}+\xi^{2p_1}H_l^{a,a^2} & \xi^{4p_3}H_l^{a,x}+\xi^{6p_1+4p_3}H_l^{a,xa^2}\\
    \xi^{4p_3}H_l^{a^3,x}+\xi^{2p_1+4p_3}H_l^{a^3,xa^2} & H_l^{a^3,1}+\xi^{6p_1}H_l^{a^3,a^2}
    \end{pmatrix}^{\otimes2}\\
    =&\begin{pmatrix}
        H_l^{a^2,1}+\xi^{4p_1}H_l^{a^2,a^2} & \xi^{4p_1+8p_3}H^{a^2,x}+\xi^{8p_1+8p_3}H^{a^2,xa^2}\\
        \xi^{-4p_1+8p_3}H^{a^2,x}+\xi^{8p_3}H^{a^2,xa^2} & H_l^{a^2,1}+\xi^{4p_1}H_l^{a^2,a^2}
    \end{pmatrix}\\
    &\oplus \begin{pmatrix}
        H_l^{1,1}+H_l^{1,a^2} & \xi^{4p_1+8p_3}H_l^{1,x}+\xi^{4p_1+8p_3}H_l^{1,xa^2} \\
        \xi^{-4p_1+8p_3}H_l^{1,x}+\xi^{-4p_1+8p_3}H_l^{1,xa^2} & H_l^{1,1}+H_l^{1,a^2}
    \end{pmatrix} \\
    =&\begin{pmatrix}
        H_l^{a^2,1}+\epsilon_{a^2}(a^2)H_l^{a^2,a^2} & \xi^{4p_1+8p_3}\epsilon_{a^2}(x)H^{a^2,x}+\xi^{4p_1+8p_3}\epsilon_{a^2}(xa^2)H^{a^2,xa^2}\\
        \xi^{-4p_1+8p_3}\epsilon_{a^2}(x)H^{a^2,x}+\xi^{-4p_1+8p_3}\epsilon_{a^2}(xa^2)H^{a^2,xa^2} & H_l^{a^2,1}+\epsilon_{a^2}(a^2)H_l^{a^2,a^2}
    \end{pmatrix}\\
    &\oplus \begin{pmatrix}
        H_l^{1,1}+H_l^{1,a^2} & \xi^{4p_1+8p_3}H_l^{1,x}+\xi^{4p_1+8p_3}H_l^{1,xa^2} \\
        \xi^{-4p_1+8p_3}H_l^{1,x}+\xi^{-4p_1+8p_3}H_l^{1,xa^2} & H_l^{1,1}+H_l^{1,a^2}
    \end{pmatrix} \\
    \simeq&(H_l^{a^2,1}+(-1)^{p_3}\epsilon_{a^2}(x)H_l^{a^2,x}+\epsilon_{a^2}(a^2)H_l^{a^2,a^2}+(-1)^{p_3}\epsilon_{a^2}(xa^2)H_l^{a^2,xa^2})\\
    &\oplus (H_l^{a^2,1}-(-1)^{p_3}\epsilon_{a^2}(x)H_l^{a^2,x}+\epsilon_{a^2}(a^2)H_l^{a^2,a^2}-(-1)^{p_3}\epsilon_{a^2}(xa^2)H_l^{a^2,xa^2})\\
    &\oplus (H_l^{1,1}+(-1)^{p_3}H_l^{1,x}+H_l^{1,a^2}+(-1)^{p_3}H_l^{1,xa^2})\oplus
    (H_l^{1,1}-(-1)^{p_3}H_l^{1,x}+H_l^{1,a^2}-(-1)^{p_3}H_l^{1,xa^2}).
    \label{eq:D4decomposition_Z2Z2}
\eeq

For different values of $p_1$ and $p_2$, the anyon theory would be different after gauging. In Appendix~\ref{braidingphase}, we give a complete classification of the anyons in all cases. When $p_3=0$ here, one can then find the fusion rule from the above calculation as
\beq
    0_a\times 0_a=1 + e^{(1)} + m^{(2)} + e^{(1)}m^{(2)}.
\eeq

When $p_3$ takes value $1$ instead of $0$, the ribbon operators that create anyon $1/e^{(1)}/ m^{(2)}/e^{(1)}m^{(2)}$ after the decomposition become ribbon operators that create $e^{(1)}/1/e^{(1)}m^{(2)}/ m^{(2)}$, respectively. According to Eq.~\eqref{eq:SFCgeneral}, we can conclude that different values of $p_3$ indeed give rise to different SFCs such that the fusion rules are shifted by the anyon $[e^{(1)}]\in H^2_{\rho}(Z_2,\mathcal{A})$.

\section{Computation of braiding phases of anyons in TQD}\label{braidingphase}

In this appendix, we compute the braiding phases of anyons in $Z_2\times Z_2$ TQD obtained from gauging $Z_2\times Z_2\subset D_4$. First we define the slant product
\begin{align}
    H^n(G,U(1))\xrightarrow[]{i_g^n}H^{n-1}(G,U(1))
\end{align}
as follows:
\begin{align}
    i_g^n\omega(g_1,...,g_{n-1})=\omega(g,g_1,...,g_{n-1})^{(-1)^{n-1}}\prod_{i=1}^{n-1}
    \omega(   g_1,...,g_i,g,g_{i+1},...,g_{n-1} )^{(-1)^{n-1+i}}. 
\end{align}
Now let us consider the cocycle given in Eq.~\eqref{z2z2nu'}:
\begin{align}
    \nu'(g,h,l)=(-1)^{p_1(g^{(2)}h^{(2)}l^{(2)}+g^{(2)}h^{(2)}l^{(1)})+p_2g^{(1)}h^{(1)}l^{(1)}}.
\end{align}
One can calculate the slant product with $g=x,t$ and $xt$:
\begin{align}
    i_{x}\nu'(h,l)&=(-1)^{p_1h^{(2)}l^{(2)}+p_2h^{(1)}l^{(1)}},\\
    i_{t}\nu'(h,l)&=(-1)^{p_1h^{(2)}l^{(2)}},\\
    i_{xt}\nu'(h,l)&=(-1)^{p_2h^{(1)}l^{(1)}}.
    \label{eq:slantz2z2result}
\end{align}
These slant products give the projective phases in the projective representations $\mu_x$, $\mu_t$, and $\mu_{xt}$, respectively, as follows:
\begin{subequations}
    \begin{align}
    \mu_x(h)\mu_x(l)&=i_x\nu'(h,l)\mu_x(hl),\\
    \mu_t(h)\mu_t(l)&=i_t\nu'(h,l)\mu_t(hl),\\
    \mu_{xt}(h)\mu_{xt}(l)&=i_{xt}\nu'(h,l)\mu_{xt}(hl).
\end{align}
\end{subequations}
From Eq.~\eqref{eq:slantz2z2result}, we see that the projective representations are given respectively  by
\begin{subequations}
    \begin{align}
    \mu_x(h)=i^{p_1h^{(2)}+p_2h^{(1)}},\qquad
    \mu_t(h)=i^{p_1h^{(2)}},\qquad
    \mu_{xt}(h)=i^{p_2h^{(1)}}.
\end{align}
\end{subequations}
In addition to these projective representations of $Z_2\times Z_2$, we have the respective ordinary representations 
\begin{align}
    \mu_1(h)=(-1)^{h^{(1)}},\qquad\mu_2(h)=(-1)^{h^{(2)}},\qquad\mu_{12}(h)=(-1)^{h^{(1)}+h^{(2)}}.
\end{align}
If we label the anyons in $Z_2\times Z_2$ TQD by $e^{(1)}$, $e^{(2)}$, $m^{(1)}$, and $m^{(2)}$, where $e^{(i)}$ denote the elementary charges (chargeons) and $m^{(i)}$ denote the elementary fluxes, charges are given by the ordinary representation of $Z_2\times Z_2$ and the fluxes are given by the projective representations of $Z_2\times Z_2$.

Then the general formula for calculating the braiding phase between anyons $a$ and $b$ is 
\begin{align}
    B(a,b)=\mu_a(\text{flux}(b))\mu_b(\text{flux}(a)).
\end{align}
We list the braiding phases between the various elementary charges and fluxes as follows,
\begin{align}
    \begin{split}
        B(e^{(1)},m^{(1)})&=\mu_1(x)\mu_x(1)=-1,\qquad B(e^{(1)},m^{(2)})=\mu_1(t)\mu_t(1)=1,\\
    B(e^{(2)},m^{(1)})&=\mu_2(x)\mu_x(1)=1,\qquad B(e^{(2)},m^{(2)})=\mu_2(t)\mu_{t}(1)=-1,\\
    B(m^{(1)},m^{(1)})&=\mu_x(x)^2=i^{2p_2}=(-1)^{p_2},\qquad B(m^{(1)},m^{(2)})=\mu_x(t)\mu_t(x)=i^{p_1},\qquad \\
    &B(m^{(2)},m^{(2)})=\mu_t(t)^2=i^{2p_1}=(-1)^{p_1}.
    \end{split}
\end{align}

\section{Fusion rule in SET from gauging $S_3$ SPT}
\label{sec:fusionruleS3 appendix}
We write the element in $S_3$ as $\Tilde{g}=(G,g)\equiv x^Ga^g$. We construct a representative of 3-cocycle in $H^3(S_3,U(1))$ as follows:
\beq
    \omega(\Tilde{g},\Tilde{h},\Tilde{l})  =\exp{\frac{2\pi i p_1}{9}g(-1)^{H+L}(h(-1)^L+l-[h(-1)^L+l]_3)+\pi i p_2 GHL},
\eeq
where $p_1=0,1,2$, and $p_2,=0,1$. Gauging the normal subgroup $Z_3$ of a $S_3$-SPT state results in a state in an SET state that has the same anyon set as $D^{\nu}(Z_3)$, where
\beq
    \nu(g,h,l)=\exp{\frac{2\pi i p_1}{9}g(h+l-[h+l]_3)}
\eeq
is the restriction of $\omega$ on $Z_3$. 
Different values of $p_1$ correspond to different $Z_3$-TQD models. 
The symmetry action takes an anyon to its inverse, which is nontrivial. 
In sector $\mathcal{C}_x$, there is only one object of quantum dimension $3$. 
We name it $0_x$. 
By dimension counting, we can write a fusion rule of the form,
\beq
    0_x \times 0_x = \sum_{i=1}^9 a_i,
\eeq
where $a_i\in \mathcal{C}$ are abelian anyons. Let $b_1=1$, $b_2=a$, and $b_3=a^2$. One can write a matrix-valued operator on an open ribbon as
\beq
(H^x_l)_{ii'}=H_l^{b_i x b_i^{-1},b_i b_{i'}^{-1}}\epsilon_{b_i x b_i^{-1}}(b_i b_{i'}^{-1}),
\eeq
where the matrix indices are $i,i'=1,2,3$.
As we conjectured, the object $0_x\times 0_x$ should be created by the operator $(H^x_l)^{\otimes2}$ on the endpoint of $l$. 
Let $\chi\equiv \exp{\frac{2\pi i }{9}}$, then a calculation similar to that in Sec.~\ref{sec:fusionruled4 appendix} gives us
\beq
    (H^x_l)^{\otimes2}=&\begin{pmatrix}
    H_l^{x,1} & \epsilon_{x}(a^2)H_l^{x,a^2} & \epsilon_{x}(a)H_l^{x,a}\\
    \epsilon_{xa}(a)H_l^{xa,a} & H_l^{xa,1} & \epsilon_{xa}(a^2)H_l^{xa,a^2}\\
    \epsilon_{xa^2}(a^2)H_l^{xa^2,a^2} & \epsilon_{xa^2}(a)H_l^{xa^2,a} & H_l^{xa^2,1}
    \end{pmatrix}^{\otimes2}\\
    =&\begin{pmatrix}
    H_l^{x,1} & \chi^{-2p_1}H_l^{x,a^2} & \chi^{p_1}H_l^{x,a}\\
    \chi^{-p_1}H_l^{xa,a} & H_l^{xa,1} & H_l^{xa,a^2}\\
    \chi^{-4p_1}H_l^{xa^2,a^2} & H_l^{xa^2,a} & H_l^{xa^2,1}
    \end{pmatrix}^{\otimes2}\\
    =&\begin{pmatrix}
    H_l^{1,1} & \chi^{-4p_1}H_l^{1,a^2} & \chi^{2p_1}H_l^{1,a}\\
    \chi^{-5p_1}H_l^{1,a} & H_l^{1,1} & \chi^{-3p_1}H_l^{1,a^2}\\
    \chi^{-2p_1}H_l^{1,a^2} & \chi^{-6p_1}H_l^{1,a} & H_l^{1,1}
    \end{pmatrix}
    \oplus\begin{pmatrix}
    H_l^{a,1} & \chi^{-2p_1}H_l^{a,a^2} & \chi^{-3p_1}H_l^{a,a}\\
    \chi^{-4p_1}H_l^{a,a} & H_l^{a,1} & \chi^{2p_1}H_l^{a,a^2}\\
    \chi^{-3p_1}H_l^{a,a^2} & \chi^{p_1}H_l^{a,a} & H_l^{a,1}
    \end{pmatrix}\\
    &\oplus\begin{pmatrix}
    H_l^{a^2,1} & \chi^{-3p_1}H_l^{a^2,a^2} & \chi^{-2p_1}H_l^{a^2,a}\\
    H_l^{a^2,a} & H_l^{a^2,1} & \chi^{-2p_1}H_l^{a^2,a^2}\\
    \chi^{-p_1}H_l^{a^2,a^2} & \chi^{-p_1}H_l^{a^2,a} & H_l^{a^2,1}
    \end{pmatrix}\\
    =&\begin{pmatrix}
    H_l^{1,1} & \chi^{-4p_1}H_l^{1,a^2} & \chi^{2p_1}H_l^{1,a}\\
    \chi^{-5p_1}H_l^{1,a} & H_l^{1,1} & \chi^{-3p_1}H_l^{1,a^2}\\
    \chi^{-2p_1}H_l^{1,a^2} & \chi^{-6p_1}H_l^{1,a} & H_l^{1,1}
    \end{pmatrix}
    \oplus\begin{pmatrix}
    H_l^{a,1} & \chi^{-4p_1}\epsilon_{a}(a^2)H_l^{a,a^2} & \chi^{-4p_1}\epsilon_{a}(a)H_l^{a,a}\\
    \chi^{-5p_1}\epsilon_{a}(a)H_l^{a,a} & H_l^{a,1} & \epsilon_{a}(a^2)H_l^{a,a^2}\\
    \chi^{-5p_1}\epsilon_{a}(a^2)H_l^{a,a^2} & \epsilon_{a}(a)H_l^{a,a} & H_l^{a,1}
    \end{pmatrix}\\
    &\oplus\begin{pmatrix}
    H_l^{a^2,1} & \chi^{-7p_1}\epsilon_{a^2}(a^2)H_l^{a^2,a^2} & \chi^{-4p_1}\epsilon_{a^2}(a)H_l^{a^2,a}\\
    \chi^{-2p_1}\epsilon_{a^2}(a)H_l^{a^2,a} & H_l^{a^2,1} & \chi^{-6p_1}\epsilon_{a^2}(a^2)H_l^{a^2,a^2}\\
    \chi^{-5p_1}\epsilon_{a^2}(a^2)H_l^{a^2,a^2} & \chi^{-3p_1}\epsilon_{a^2}(a)H_l^{a^2,a} & H_l^{a^2,1}
    \end{pmatrix}\\
    \simeq&(H_l^{1,1}+H_l^{1,a}+H_l^{1,a^2})\oplus(H_l^{1,1}+\chi^3 H_l^{1,a}+\chi^{6}H_l^{1,a^2})\oplus(H_l^{1,1}+\chi^{-3}H_l^{1,a}+\chi^{-6}H_l^{1,a^2})\\
    &\oplus(H_l^{a,1}+\epsilon_a(a)H_l^{a,a}+\epsilon_a(a^2)H_l^{a,a^2})\oplus(H_l^{a,1}+\chi^3\epsilon_a(a)H_l^{a,a}+\chi^{6}\epsilon_a(a^2)H_l^{a,a^2})\\
    &\oplus(H_l^{a,1}+\chi^{-3}\epsilon_a(a)H_l^{a,a}+\chi^{-6}\epsilon_a(a^2)H_l^{a,a^2})\oplus(H_l^{a^2,1}+\epsilon_{a^2}(a)H_l^{a^2,a}+\epsilon_{a^2}(a^2)H_l^{a^2,a^2})\\
    &\oplus(H_l^{a^2,1}+\chi^3\epsilon_{a^2}(a)H_l^{a^2,a}+\chi^{6}\epsilon_{a^2}(a^2)H_l^{a^2,a^2})\oplus(H_l^{a^2,1}+\chi^{-3}\epsilon_{a^2}(a)H_l^{a^2,a}+\chi^{-6}\epsilon_{a^2}(a^2)H_l^{a^2,a^2}).
\eeq
The nine parts in the last line after the decomposition are the nine ribbon operators in $D^{\nu}(Z_3)$ that create the anyons $1, e, e^2, m, em, e^2m, m^2, em^2,$ and $e^2m^2$. Therefore, one can conclude the fusion rule from the above calculation:
\beq
    0_x\times 0_x=1 + e + e^2 + m + em + e^2 m + m^2 + em^2 + e^2m^2.
\eeq

\section{An alternative $N$-step gauging via measurement}\label{N-step}

In this section, we give an alternative N-step gauging procedure. Following the 2-step gauging, we generalize the procedure to $N$-step gauging (a similar method was proposed by \cite{Hierarchy} and \cite{Bravyi2022a} for solvable groups) for a group G that satisfies a criterion. The $N$-steps correspond to the $N$ factors of abelian groups of $G$. \par Let us consider a group $G$ with the following property: there exist a sequence of groups $N_0$,$N_1$, ...$N_n$ abelian and another sequence $M_0$,$M_1$, ... $M_n$ such that 
 \begin{align}
 \begin{split}
     N_0&\equiv \{e\},\qquad  M_0=G,\\
     N_1&\triangleleft G\qquad,\hspace{0.2cm} M_1=\frac{G}{N_1},\\
     N_2&\triangleleft M_1, \qquad M_2=\frac{M_1}{N_2},\\
     &\vdots\\
     N_n&\triangleleft M_{n-1},\quad M_n=\frac{M_{n-1}}{N_n}={e}.
 \end{split}
 \end{align}
If the group $G$ satisfies this property we say it admits a sequential normal subgroups. We will prove in appendix \ref{proof:solvale=seqnormal} that admitting sequential normal subgroups is equivalent to the group being solvable. Given $G$ admits a sequential normal subgroup, we get a sequence of short exact sequences
\begin{subequations}
    \begin{align}
    1\longrightarrow N_1\xrightarrow{\mathcal{i}_1} &G\xrightarrow{\pi_1}M_1\longrightarrow 1,\\
    1\longrightarrow N_2\xrightarrow{\mathcal{i}_2}&M_1\xrightarrow{\pi_2} M_2\longrightarrow 1,\\
    \vdots\\
    1\longrightarrow N_{n-1}\xrightarrow{\mathcal{i}_{n-1}} &M_{n-2}\xrightarrow{\pi_n} M_{n-1}\longrightarrow 1,\\
    1\longrightarrow N_{n}&\xrightarrow{\mathcal{i}_{n}} M_{n-1}\longrightarrow 1.
\end{align}
\end{subequations}
where $\mathcal{i}_k$ is an inclusion map and $\pi_k$ is a projection map. Now we choose a sequence of lifts 
\begin{subequations}
    \begin{align}
    M_1&\xrightarrow{s_1}G,\\
    M_2&\xrightarrow{s_2}M_1,\\
    &\vdots\\
    M_{n-1}&\xrightarrow{s_{n-2}}M_{n},
\end{align}
\end{subequations}
which will be fixed throughout this section. With these lifts we can embed each of the normal groups $N_k$ as a set in $G$. If $n\in N_k$, $s_1(s_2(...s_{k-1}(\mathcal{i}_k(n))))\in G$ is an embedding of $n\in N_k$ in $G$. To simplify notation, we denote $s_1\circ s_2\circ ...\circ s_k$ by $\Tilde{s}_k$. Then $\Tilde{s}_k:M_k\longrightarrow G$.
Using this notation, a general $g\in G$ can be written as 
 \begin{align}
 g=\Tilde{s}_{n-1}(\mathcal{i}_n(a_n^{i_n}))...\Tilde{s}_1(\mathcal{i}_2(a_2^{i_2}))\mathcal{i}_1(a_1^{i_1}),
     \label{repg}
 \end{align}
 where $a_j\in N_j$ is a generator for each abelian normal subgroup. The notation $a_j^{i_j}$ is a shorthand for the product of generators for each cyclic subgroup of the abelian group, i.e, $a_j^{i_j}=\prod_{k=1}^l (a^k_j)^{i^k_j}$  where $a^k_j$ for $k=1,...,l$ are the generators for the $l$ cyclic factors. 
 \begin{claim}
     The representation of $g\in G$ given in Eq.~\ref{repg} is unique. If $g=\Tilde{s}_{n-1}(\mathcal{i}_n(a_n^{i_n}))...\mathcal{i}_1(a_1^{i_1})=\Tilde{s}_{n-1}(\mathcal{i}_n(a_n^{i'_n}))...\mathcal{i}_1(a_1^{i'_1})$, then $i_n=i'_n$,..., $i_1=i'_1$.
 \end{claim}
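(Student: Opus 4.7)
The plan is to prove uniqueness by induction on $n$, the length of the sequential normal subgroup chain, using the projection homomorphisms $\pi_k$ to peel off one factor at a time.

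The base case $n=1$ is immediate: we have $G = N_1$ and $g = \mathcal{i}_1(a_1^{i_1}) = \mathcal{i}_1(a_1^{i'_1})$, and since $\mathcal{i}_1$ is the inclusion (injective) and $N_1$ is abelian with a fixed generating set, the exponents must agree.

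For the inductive step, I would apply $\pi_1: G \to M_1$ to both sides of the hypothesized equation. Since $\pi_1$ is a group homomorphism and $\mathcal{i}_1(a_1^{i_1}) \in N_1 = \ker \pi_1$, the rightmost factor on each side maps to the identity. Moreover, because $\pi_1 \circ s_1 = \mathrm{id}_{M_1}$ by construction of the lift, for each $k \geq 2$ one has $\pi_1(\tilde{s}_{k-1}(\mathcal{i}_k(a_k^{i_k}))) = (s_2 \circ \cdots \circ s_{k-1})(\mathcal{i}_k(a_k^{i_k}))$. Writing $\tilde{s}'_{j} := s_2 \circ \cdots \circ s_{j+1}$, this shows that the image in $M_1$ has the form
\begin{equation}
\pi_1(g) = \tilde{s}'_{n-2}(\mathcal{i}_n(a_n^{i_n})) \cdots \tilde{s}'_{1}(\mathcal{i}_3(a_3^{i_3})) \, \mathcal{i}_2(a_2^{i_2}),
\end{equation}
which is exactly the analogous decomposition in $M_1$ relative to its sequential normal subgroup chain $N_2 \triangleleft M_1$, $N_3 \triangleleft M_2$, $\ldots$, $N_n \triangleleft M_{n-1}$ (a chain of length $n-1$). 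By the induction hypothesis applied to $M_1$, the decomposition in $M_1$ is unique, so $i_k = i'_k$ for $k=2,\ldots,n$.

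Having matched all higher exponents, the equality of the two original expressions collapses (by left-multiplying with the inverses of the common prefix) to $\mathcal{i}_1(a_1^{i_1}) = \mathcal{i}_1(a_1^{i'_1})$, whence $i_1 = i'_1$ by injectivity of $\mathcal{i}_1$ together with the fact that $N_1$ is abelian with the chosen generators. The main (and really only) subtlety is the bookkeeping for the lifts $\tilde{s}_k$ under $\pi_1$, i.e., checking that restricting the sequential structure to $M_1$ yields precisely the analogous decomposition there; once this is set up, the induction is essentially automatic.
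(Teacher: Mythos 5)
Your proof is correct and uses essentially the same mechanism as the paper: exploit the projection homomorphisms together with $\pi_k\circ s_k=\mathrm{id}$ to kill the normal-subgroup factors and identify the surviving exponents one level at a time. The only difference is organizational --- the paper peels from the top by directly applying the composed projections $\Tilde{\pi}_{n-1},\Tilde{\pi}_{n-2},\dots$ to fix $i_n$, then $i_{n-1}$, and so on, whereas you peel from the bottom by applying the single projection $\pi_1$ and invoking induction on the length of the chain --- and both versions are sound.
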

 \begin{proof}
     The proof follows by applying the projections $\Tilde{\pi}_k:=\pi_1\circ \pi_2\circ...\circ\pi_k$ for $k=1,...,n-1$. First apply $\Tilde{\pi}_{n-1}$ to $g$. This gives $i_n=i'_n$. Then apply $\Tilde{\pi}_{n-2}$ which gives $i_{n-1}=i'_{n-1}$. Proceeding similarly, at $k^{th}$ step apply $\Tilde{\pi}_{n-k}$ to get $i_{n-k+1}=i'_{n-k+1}$. At $(n-1)^{th}$ step we get $i_2=i'_2$. This automatically fixes $i_1=i'_1$ proving the unique representation of $g$.
 \end{proof}
 To simplify the notation in the remaining part of this section, we omit writing the lifts explicitly and write $a_k^{i_k}\equiv \Tilde{s}_{k-1}(\mathcal{i}_k(a_k^{i_k}))$. Hence
 \begin{align}
     g=a_n^{i_n}...a_1^{i_1}.
 \end{align}
 Let $h\in G$. Similarly, we can write $h=a_n^{\bar{i}_n}... a_1^{\bar{i}_1}$. Using this notation, we can write down the group multiplication as
 \begin{align}
      gh^{-1}=a_n^{i_n}...a_1^{i_1}a_1^{-\bar{i}_1}...a_n^{-\bar{i}_n}.
      \label{ghinverse}
 \end{align}

 We will use Eq.~(\ref{ghinverse}) to implement the $N$-step gauging procedure.  
 We will gauge the $G$ DOFs on the vertices of the lattice sequentially in $N$-steps. The complete procedure for $N$-step gauging is as follows
 \begin{enumerate}
	\item[(1)] 
	\emph{Include ancillas.} Add ancillas in the state $\ket{e}$, where $e\in G$ is the identity element, on the edges between the vertices. 
	\item[(2)]
	\emph{Entangle gauge and matter DOFs.} Apply the following 2 controlled-shift operators with controls $c_1, c_2$ on neighboring vertices (oriented as $c_2\rightarrow c_1$) and target $t$ on the in-between ancilla,
	\begin{align}
	   \begin{split}
	      U_{N_1}=\sum_{g_1,g_2\in G}\sum_{g_3\in G }&\ket{g_1,g_2}_{c_1,c_2}\bra{g_1,g_2}\otimes \ket{N_1(g_1)g_3 N_1(g_2)^{-1}}_e\bra{g_3}. 
	   \end{split} 
	\end{align}
	Here $N_1(g)$ is the part of the decomposition $g$ which lies in $N_1$; when $g=a_n^{i_n}...a_1^{i_1}$ with $a_j\in N_j$, then $N_1(g)=a_1^{i_1}$.
    	\item[(3)]
	\emph{Measure $X_{N_1}$ on matter DOFs and correct the $z_{N_1}$ factors.} Define: $X_{N_1}=\{ X_1,...,X_{n_{m_1}}\}$ for $N_1=\prod_{k=1}^ {m_1}\mathbb{Z}_{n_k}$ where $X_{j}$ denote $j\times j$ Pauli $X$ matrix. Following the same define: $z_{N_1}=\prod_{k=1}^{m_1} z_{n_k}$ where $z_{n_k}$ is the phase factor coming from measuring $X_{n_k}$ on the vertex. The value of $z_{n_k}$ is same as acting Pauli $Z_{n_k}$ operator on the vertex before measurement. After measurement, with the outcome being $X_{N_1}=\{\omega_1^{-p_1},...,\omega_1^{-p_{m_1}}\}$ ($\omega_k$ being $n_k$th root of unity), there is a corresponding phase factor $\prod_{k=1}^{m_1}  z_{n_k}^{-p_k}$. Using the transmutation rule for each of the phase terms in the product $z_{N_1}$,
 \begin{align}
     Z_{n_k}(N_1(g_2))Z_{n_k}(N_1(g_1)N_1(g_2)^{-1})=Z_{n_k}(N_1(g_1)).
 \end{align}
 one can correct all those factors by moving them to a single vertex, resulting in an $M_1$ SET ground state.

    \item[(4)]
    \emph{Repeat the procedure of entangling gauge and matter DOFs for $M_1$ DOFs on the vertices.} Apply the following unitary as before,
	\begin{align}
	   \begin{split}
	     U_{N_2}=\sum_{g_1,g_2\in M_1}\sum_{g_3\in G}&\ket{g_1,g_2}_{c_1,c_2}\bra{g_1,g_2}\otimes\ket{N_2(g_1)g_3N_2(g_2)^{-1}}_e\bra{g_3}.  
	   \end{split} 
	\end{align}
	\item[(5)]
	\emph{Measure $X_{N_2}$ on matter DOFs and correct the $z_{N_2}$ factors.} Define: $X_{N_2}=\{ X_1,...,X_{n_{m_2}}\}$ for $N_2=\prod_{k=1}^ {m_2}\mathbb{Z}_{n_k}$ where $X_{j}$ denote $j\times j$ Pauli $X$ matrix. Following the same define: $z_{N_2}=\prod_{k=1}^{m_2} z_{n_k}$ where $z_{n_k}$ is the phase factor coming from measuring $X_{n_k}$ on the vertex. The value of $z_{n_k}$ is same as acting Pauli $Z_{n_k}$ operator on the vertex before measurement. After measurement, with the outcome being $X_{N_2}=\{\omega_1^{-p_1},...,\omega_1^{-p_{m_2}}\}$ ($\omega_k$ being $n_k$th root of unity), there is a corresponding phase factor $\prod_{k=1}^{m_2}  z_{n_k}^{-p_k}$. Using the transmutation rule for each of the phase terms in the product  $z_{N_2}$,
 \begin{align}
     Z_{n_k}(N_2(g_2))Z_{n_k}(N_2(g_1)N_2(g_2)^{-1}\sigma^{N_2(g_2)}(g_e))=Z_{n_k}(N_2(g_1)).
 \end{align}
 one can correct all those factors by moving them to a single vertex, resulting in an $M_2$ SET ground state. (Note that $Z_{n_k}(\sigma^{N_2(g_2)}(g_3))=1$. This is because $\sigma^{N_2(g_2)}(g_3)\in N_1$ for $g_3\in N_1$ and hence has no component in $N_2$.)
    \item[(6)]
    \emph{At $l^{th}$ step of gauging process, again repeat the procedure of entangling gauge and matter DOFs for $M_{l-1}$ DOFs on the vertices.}Apply the following unitary
    \begin{align}
	   \begin{split}
	     U_{N_l}=\sum_{g_1,g_2\in M_{l-1}}\sum_{g_3\in G}&\ket{g_1,g_2}_{c_1,c_2}\bra{g_1,g_2}\otimes\ket{N_l(g_1)g_3N_l(g_2)^{-1}}_e\bra{g_3}.  
	   \end{split} 
    \end{align}
    \item[(7)]
    \emph{Measure $X_{N_l}$ on matter DOFs and correct the $z_{N_l}$ factors.}  Define: $X_{N_l}=\{ X_1,...,X_{n_{m_l}}\}$ for $N_l=\prod_{k=1}^ {m_l}\mathbb{Z}_{n_k}$ where $X_{j}$ denote $j\times j$ Pauli $X$ matrix. Following the same define: $z_{N_l}=\prod_{k=1}^{m_l} z_{n_k}$ where $z_{n_k}$ is the phase factor coming from measuring $X_{n_k}$ on the vertex. The value of $z_{n_k}$ is same as acting Pauli $Z_{n_k}$ operator on the vertex before measurement. After measurement, with the outcome being $X_{N_l}=\{\omega_1^{-p_1},...,\omega_1^{-p_{m_l}}\}$ ($\omega_k$ being $n_k$th root of unity), there is a corresponding phase factor $\prod_{k=1}^{m_l}  z_{n_k}^{-p_k}$. Using the transmutation rule for each of the phase terms in the product  $z_{N_l}$,
 \begin{align}
     Z_{n_k}(N_l(g_2))Z_{n_k}(N_l(g_1)N_l(g_2)^{-1}\sigma^{N_l(g_2)}(g_e))=Z_{n_k}(N_l(g_1)),
 \end{align}
 one can correct all those factors by moving them to a single vertex, resulting in an $M_l$ SET ground state. (Note that $Z_{n_k}(\sigma^{N_l(g_2)}(g_3))=1$. This is because $\sigma^{N_l(g_2)}(g_3)\in (...(N_1\otimes N_2)\otimes ....\otimes N_{l-1})$ for $g_3\in (...(N_1\otimes N_2)\otimes...\otimes N_{l-1})$ and hence has no component in $N_l$. The notation $G\otimes H$ is a short hand for group extension of $G$ by $H$.)
    \item[(8)]
    \emph{Repeat this process for all till the last normal subgroup $N_n$.}
 \end{enumerate}
 The groups formed from extensions are
     \begin{align}
     G_l\equiv((N_1\otimes N_2)\otimes N_3)...\otimes N_l&\equiv\{\Tilde{s}_{l-1}(a_l^{i_l})...\Tilde{s}_1(a_2^{i_2})i(a_1^{i_1})|a_l^{i_l}\in N_l, ..., a_1^{i_1}\in N_1\}.
 \end{align}
 \begin{claim}
     $G_l$ \text{ is a subgroup of } $G$.
 \end{claim}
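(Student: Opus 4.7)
The cleanest strategy is to identify $G_l$ as the preimage of a subgroup under a group homomorphism, bypassing any direct verification of closure under the (non-trivially cocycle-twisted) multiplication in $G$. Let me write $\tilde{\pi}_k := \pi_k \circ \pi_{k-1} \circ \cdots \circ \pi_1 : G \to M_k$ for $k=1,\dots,n-1$, which is a composition of genuine group homomorphisms and hence a surjective group homomorphism. Since $N_l \triangleleft M_{l-1}$, the subset $N_l \subseteq M_{l-1}$ is in particular a subgroup, so $\tilde{\pi}_{l-1}^{-1}(N_l) \leq G$ is automatically a subgroup of $G$ (even a normal one). The entire proof therefore reduces to the set-theoretic identity
\begin{equation}
    G_l = \tilde{\pi}_{l-1}^{-1}(N_l).
\end{equation}

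To establish this identity, I would exploit the unique decomposition $g = a_n^{i_n}\cdots a_1^{i_1}$ from Eq.~\eqref{repg} together with the defining relations $\pi_k \circ s_k = \mathrm{id}_{M_k}$ and $\pi_k \circ \iota_k = e$. Applying $\tilde{\pi}_{l-1}$ to a general $g \in G$ kills every factor $\iota_1(a_1^{i_1}), \tilde{s}_1(a_2^{i_2}),\dots,\tilde{s}_{l-2}(a_{l-1}^{i_{l-1}})$ in turn, while each factor $\tilde{s}_{k-1}(a_k^{i_k})$ with $k \geq l$ survives and projects onto $(s_l \circ \cdots \circ s_{k-1})(a_k^{i_k}) \in M_{l-1}$; the surviving product gives the analogous unique decomposition of $\tilde{\pi}_{l-1}(g)$ inside $M_{l-1}$. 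Then $\tilde{\pi}_{l-1}(g) \in N_l$ is equivalent, by uniqueness of the decomposition within $M_{l-1}$, to $a_k^{i_k}=e$ for all $k>l$, which is exactly the condition $g \in G_l$. Both inclusions follow immediately.

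As a double check (and an alternative that bypasses the surjection argument), one can run induction on $l$: the base case $G_1 = \iota_1(N_1) \triangleleft G$ is immediate, and for the inductive step one verifies $G_l = \tilde{s}_{l-1}(N_l)\,G_{l-1}$, using normality of $N_l$ in $M_{l-1}$ to absorb conjugations that arise when moving an element of $G_{l-1}$ past a lifted element of $N_l$. The same argument also yields the stronger statement $G_l \triangleleft G$, which may be useful elsewhere in the paper (for instance, in justifying that the intermediate gauged states are ground states of well-defined Hamiltonians with the remaining quotient symmetry $G/G_l$).

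The main (minor) obstacle is purely notational: one must be careful that the shorthand $a_k^{i_k} \equiv \tilde{s}_{k-1}(\iota_k(a_k^{i_k}))$ is used consistently, and that the ``multiplication'' in $G_l$ is the ambient multiplication in $G$ rather than a naive direct product, since successive extensions are genuinely twisted by cocycles $\omega_k \in H^2(M_k, N_k)$ in general. Framing the argument via the homomorphism $\tilde{\pi}_{l-1}$ sidesteps this twist entirely, which is why I would favor that route over a hands-on closure computation.
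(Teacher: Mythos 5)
Your proposal is correct, but it takes a genuinely different route from the paper. The paper proves the claim by brute force: it first checks closure for $G_2$ using the extension identity $s(q_1)\mathcal{i}(n_1)s(q_2)\mathcal{i}(n_2)=s(q_1q_2)\,\omega(q_1,q_2)\,\sigma^{q_2^{-1}}(\mathcal{i}(n_1))\mathcal{i}(n_2)$, and then runs an induction in which the product $g_1g_2$ is expanded explicitly, the two ``head'' factors are recombined via repeated use of $s(a)s(b)=s(ab)\,\omega(a,b)$ into $\tilde{s}_{l-1}(a_l^{\bar{i}_l+i_l})$ times a tail in $G_{l-1}$, and the conjugated lower factors are absorbed into $G_{l-1}$ by the inductive hypothesis; inverses are handled by applying the projections. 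Your argument instead exhibits $G_l$ as $\tilde{\pi}_{l-1}^{-1}\bigl(\mathcal{i}_l(N_l)\bigr)$ for the composite surjection $\tilde{\pi}_{l-1}=\pi_{l-1}\circ\cdots\circ\pi_1:G\to M_{l-1}$, so that the subgroup property (and, for free, normality in $G$, which the paper proves as a separate claim by another direct conjugation computation) is automatic from $N_l\triangleleft M_{l-1}$. The identity $G_l=\tilde{\pi}_{l-1}^{-1}(\mathcal{i}_l(N_l))$ does require one ingredient the paper states only for $G$, namely uniqueness of the decomposition, applied to the quotient $M_{l-1}$ (or, equivalently, a continuation of the successive-projection argument: $\tilde{\pi}_{l-1}(g)\in\ker\pi_l$ forces $\tilde{\pi}_k(g)$ to be trivial for all $k\geq l$, hence $i_n=\cdots=i_{l+1}=0$); since each $M_k$ inherits the sequential-normal-subgroup structure $N_{k+1},\dots,N_n$, this is the identical argument and is a routine point to fill in. The trade-off is that the paper's hands-on computation, while longer and heavier on cocycle bookkeeping, explicitly exhibits how a product $g_1g_2$ decomposes back into the canonical form (which is the structural fact used in the gauging algorithm of Appendix~\ref{N-step}), whereas your preimage argument is shorter, sidesteps the twisted multiplication entirely, and packages the subgroup and normality claims into a single statement.
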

 \begin{proof}
     We prove this by induction. First we prove $G_2$ is subgroup of $G$. $G_2=N_1\otimes N_2$. Suppose $g_1=s(q_1)i(n_1)\in G_2$ and $g_2=s(q_2)i(n_2)\in G_2$, then $g_1g_2=s(q_1)i(n_1)s(q_2)i(n_2)=s(q_1)s(q_2)s(q_2)^{-1}(i(n_1))s(q_2)i(n_2)=s(q_1q_2) \omega(q_1,q_2)\sigma^{q_2^{-1}}(i(n_1))i(n_2)$. Hence $g_1g_2\in G_2$. Assume $G_{l-1}$ is a subgroup of $G$. We prove $G_l$ is a subgroup of $G$. Suppose $g_1=\Tilde{s}_{l-1}(a_l^{\Bar{i}_l})...\Tilde{s}_1(a_2^{\Bar{i}_2})i(a_1^{\Bar{i}_1})\in G_l$, $g_2=\Tilde{s}_{l-1}(a_l^{i_l})...\Tilde{s}_1(a_2^{i_2})i(a_1^{i_1})\in G_l$. 
     Let us define
     \begin{align}
     \begin{split}
     b_1&=\Tilde{s}_{l-1}(a_l^{i_l})...\Tilde{s}_1(a_2^{i_2})\\
     b_2&=\Tilde{s}_{l-1}(a_l^{i_l})...\Tilde{s}_2(a_3^{i_3})\\
     &\vdots\\
     b_{l}&=\Tilde{s}_{l-1}(a_l^{i_l})
     \end{split}
     \end{align}
     Then 
     \begin{align}
         g_1g_2=\Tilde{s}_{l-1}(a_l^{\Bar{i}_l})\Tilde{s}_{l-1}(a_l^{i_l})\sigma^{b_l^{-1}}(\Tilde{s}_{l-2}(a_{l-1}^{\bar{i}_{l-1}}))\Tilde{s}_{l-2}(a_{l-1}^{i_l})...\sigma^{b_1^{-1}}(i(a_1^{\Bar{i}_1}))i(a_1^{i_1}).
         \end{align}
 Only the first two terms in the above equation lies in $N_l$. The remaining terms lie in $G_{l-1}$. Hence they can be expressed as
 \begin{align}
     \sigma^{b_l^{-1}}(\Tilde{s}_{l-2}(a_{l-1}^{\bar{i}_{l-1}}))\Tilde{s}_{l-2}(a_{l-1}^{i_l})...\sigma^{b_1^{-1}}(i(a_1^{\Bar{i}_1}))i(a_1^{i_1})=\Tilde{s}_{l-2}(a_{l-1}^{i'_{l-1}})....i(a_1^{i'_1}).
     \label{termsinG_{l-1}}
 \end{align}
 Now we prove that $\Tilde{s}_{l-1}(a_l^{\Bar{i}_l})\Tilde{s}_{l-1}(a_l^{i_l})=\Tilde{s}_{l-1}(a_l^{\Bar{i}_l+i_l})h$ where $h\in G_{l-1}$. Then by induction hypothesis, 
 \begin{align}
     g_1g_2=\Tilde{s}_{l-1}(a_l^{\Bar{i}_l+i_l})\Tilde{s}_{l-2}(a_{l-1}^{i''_{l-1}})...\Tilde{s}_1(a_2^{i''_2})i(a_1^{i''_1}).
     \label{g_1g_2}
 \end{align}
 which shows that $g_1g_2\in G_l$. First note that applying the relation $s(a)s(b)=s(ab)\omega(a,b)$, we get
 \begin{align}
 \begin{split}
     \Tilde{s}_{l-1}(a_l^{\Bar{i}_l})\Tilde{s}_{l-1}(a_l^{i_l})&=s\left(\Tilde{s}_{l-2}(a_l^{\Bar{i}_l})\Tilde{s}_{l-2}(a_l^{i_l})\right)\omega(\Tilde{s}_{l-2}(a_l^{\Bar{i}_l}),\Tilde{s}_{l-2}(a_l^{i_l})),\\
     &=s\left(s\left(\Tilde{s}_{l-3}(a_l^{\Bar{i}_l})\Tilde{s}_{l-3}(a_l^{i_l})\right)\omega(\Tilde{s}_{l-3}(a_l^{\Bar{i}_l}),\Tilde{s}_{l-3}(a_l^{i_l}))\right)\omega(\Tilde{s}_{l-2}(a_l^{\Bar{i}_l}),\Tilde{s}_{l-2}(a_l^{i_l})),\\
     &\qquad\vdots
     \label{alproduct}
 \end{split}
 \end{align}
 One can write this equation in short hand as
 \begin{align}
 \begin{split}
     \Tilde{s}_{l-1}(a_l^{\Bar{i}_l})\Tilde{s}_{l-1}(a_l^{i_l})&=s(M_1)N_1,\\
     &=s(s(M_2)N_2)N_1,\\
     &=s(s(s(M_3)N_3)N_2)N_1,\\
     &\qquad\vdots\\
     &=s(...(s(M_{l-1})N_{l-1})...)N_1,
 \end{split}
 \end{align}
 where $N_l$ denote the terms coming from $\omega$ factors in Eq.~\ref{alproduct} and $M_r$ denote $\Tilde{s}_{l-1-r}(a_l^{\Bar{i}_l})\Tilde{s}_{l-1-r}(a_l^{i_l})$, $M_{l-1}$ denote $a_l^{\Bar{i}_l+i_l}$. Now applying the relation $s(a)s(b)\omega(a,b)^{-1}=s(ab)$, we get
 \begin{align}
     \Tilde{s}_{l-1}(a_l^{\Bar{i}_l+i_l})\times \text{ (terms in $G_{l-1}$)}.
 \end{align}
 By induction hypothesis, terms in $G_{l-1}$ can be written as $\Tilde{s}_{l-2}(a_{l-1}^{\Tilde{i}_{l-1}})\Tilde{s}_{l-3}(a_{l-2}^{\Tilde{i}_{l-2}})...\Tilde{s}_{1}(a_{1}^{\Tilde{i}_{1}})i(a_1^{\Tilde{i}_1})$. Combining the terms in Eq.~\ref{termsinG_{l-1}}, we get the desired decomposition of $g_1g_2$ as in Eq.~\ref{g_1g_2} which prove $g_1g_2\in G_l$. One can show if $g\in G_l$, $g^{-1}\in G_l$ by applying $\Tilde{\pi}_{l-1}$, $\Tilde{\pi}_{l-2}$,... upto $\pi$ on $g^{-1}$. This will give the explicit decomposition of $g^{-1}$ using each of the normal subgroups $N_i$ for $i=1,...,l$.
\end{proof}
 \begin{claim}
     $G_l$ \text{ is normal in } $G$.
 \end{claim}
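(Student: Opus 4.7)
The plan is to identify $G_l$ with the kernel of the composed projection $\tilde{\pi}_l: G \to M_l$. Once this is done, normality is automatic since the kernel of any group homomorphism is normal. So the claim reduces to proving $G_l = \ker(\tilde{\pi}_l)$.

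For the inclusion $G_l \subseteq \ker(\tilde{\pi}_l)$, I would use the fact that $\ker(\tilde{\pi}_l)$ is closed under products and so it suffices to check that each individual factor $\tilde{s}_{k-1}(\mathcal{i}_k(a_k^{i_k}))$ appearing in a generic element of $G_l$ (with $k \leq l$) lies in the kernel. The key observation is that the lifts satisfy $\pi_k \circ s_k = \mathrm{id}_{M_k}$, so by telescoping, $\tilde{\pi}_{k-1} \circ \tilde{s}_{k-1} = \mathrm{id}_{M_{k-1}}$. Applied to $\mathcal{i}_k(a_k^{i_k})$, this gives $\tilde{\pi}_{k-1}\bigl(\tilde{s}_{k-1}(\mathcal{i}_k(a_k^{i_k}))\bigr) = \mathcal{i}_k(a_k^{i_k})$; one further projection $\pi_k$ sends this into the identity in $M_k$ because $\mathcal{i}_k(N_k) = \ker(\pi_k)$, and all subsequent projections preserve the identity.

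For the reverse inclusion, I would apply $\tilde{\pi}_l$ directly to the unique decomposition
\begin{equation*}
    g = \tilde{s}_{n-1}(\mathcal{i}_n(a_n^{i_n})) \cdots \tilde{s}_1(\mathcal{i}_2(a_2^{i_2}))\,\mathcal{i}_1(a_1^{i_1})
\end{equation*}
established in the preceding claim. The factors with $k \leq l$ die by the argument above, while the factors with $k > l$ get mapped to their analogues in $M_l$ through the residual lifts. Explicitly, $\tilde{\pi}_l(g)$ admits a decomposition in $M_l$ involving precisely $a_{l+1}^{i_{l+1}}, \ldots, a_n^{i_n}$. If $\tilde{\pi}_l(g) = e_{M_l}$, then applying the uniqueness claim inside $M_l$ (which itself admits the sequential normal subgroups $N_{l+1}, N_{l+2}, \ldots, N_n$) forces $a_{l+1}^{i_{l+1}} = \cdots = a_n^{i_n} = e$. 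Hence $g \in G_l$, completing the identification $G_l = \ker(\tilde{\pi}_l)$ and proving normality in $G$.

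The main obstacle I anticipate is essentially notational rather than conceptual: one must carefully unwind the alternating compositions of lifts $\tilde{s}_k$ and projections $\tilde{\pi}_l$, and ensure that the uniqueness of decomposition — proved in the preceding claim only for $G$ — transfers to each intermediate quotient $M_l$. The latter is immediate because the short exact sequences for $l+1, \ldots, n$ form an identical sequential-normal-subgroup structure for $M_l$, so the previous claim applies verbatim with $G$ replaced by $M_l$. Once these identifications are made, the argument is a standard kernel-of-composition observation and no deeper input is required.
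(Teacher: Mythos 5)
Your proof is correct, and it rests on the same underlying mechanism as the paper's: using the composite projections $\tilde{\pi}_i$ to detect membership in $G_l$. The packaging differs. The paper writes out $gkg^{-1}$ explicitly for $g\in G$ and $k\in G_l$, applies $\tilde{\pi}_i$ for $i=n-1,\dots,l$ to see that every projection of the conjugate is trivial, and concludes $gkg^{-1}\in G_l$ --- a step that implicitly uses exactly the inclusion $\ker(\tilde{\pi}_l)\subseteq G_l$ you prove. You instead make the identification $G_l=\ker(\tilde{\pi}_l)$ explicit and invoke the general fact that kernels are normal, which spares you the conjugation computation entirely and records a structural fact the paper leaves tacit. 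Both directions of your identification check out: the telescoping identity $\tilde{\pi}_{k-1}\circ\tilde{s}_{k-1}=\mathrm{id}_{M_{k-1}}$ (from $\pi_j\circ s_j=\mathrm{id}_{M_j}$) kills each factor with $k\le l$ and hence gives $G_l\subseteq\ker(\tilde{\pi}_l)$, while the reverse inclusion follows from uniqueness of the decomposition applied inside $M_l$. The one point worth flagging is that the uniqueness claim is stated in the paper only for $G$, so your argument does lean on the (easy, and correctly identified) observation that $M_l$ inherits the sequential-normal-subgroup structure $N_{l+1},\dots,N_n$ and the uniqueness proof transfers verbatim; with that noted, the proof is complete and arguably cleaner than the original.
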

\begin{proof}
    Let $g\in G$. $g=\Tilde{s}_{n-1}(a_n^{i_n})... \Tilde{s}_1(a_2^{i_2})i(a_1^{i_1})$. If $k\in G_l$, let $k=\Tilde{s}_{l-1}(a_l^{\Bar{i}_l})...\Tilde{s}_1(a_2^{\Bar{i}_2})i(a_1^{\Bar{i}_1})$. Then $gkg^{-1}=\Tilde{s}_{n-1}(a_n^{i_n})... \Tilde{s}_1(a_2^{i_2})i(a_1^{i_1})\Tilde{s}_{l-1}(a_l^{\Bar{i}_l})...\Tilde{s}_1(a_2^{\Bar{i}_2})i(a_1^{\Bar{i}_1})i(a_1^{i_1})^{-1}\Tilde{s}_1(a_2^{i_2})^{-1}...\Tilde{s}_{n-1}(a_n^{i_n})^{-1}$. Applying $\Tilde{\pi}_i$ for $i=n-1,...,l$, we see that $\Tilde{\pi}_i(gkg^{-1})=1$. This shows that $gkg^{-1}\in G_l$. Hence $G_l$ is normal in $G$.
\end{proof}

 \section{Proof of solvable equivalent to admitting sequential normal abelian subgroups}\label{proof:solvale=seqnormal}
 In this section, we prove that the assumption about the group $G$ we used in section \ref{N-step} is equivalent to the assumption that $G$ is a solvable group.
 \begin{definition}
A derived series of a finite group $G$ is a sequence of normal subgroups normal inside the previous one
\begin{align}
    G\vartriangleright G^{(1)}\vartriangleright G^{(2)}\vartriangleright G^{(3)}....\vartriangleright G^{(n)}\vartriangleright e,
\end{align}
for some $n$ such that the quotient groups $G/G^{(1)}$, $G^{(1)}/G^{(2)}$,...,$G^{(n)}/e$ are all abelian.
\end{definition}
\begin{definition}
A finite group $G$ is solvable if it admits a derived series.
\end{definition}
\begin{proposition}
If $G$ is solvable then it admits the following sequence
\begin{align}
    G\vartriangleright G^1\vartriangleright G^2\vartriangleright...\vartriangleright G^{m}\vartriangleright e,
\end{align}
  for some $m$ which is called the derived length of the group $G$. Here $G^{i+1}=[G^i,G^i]$ is the commutator subgroup of $G^i$.
\label{prop}
\end{proposition}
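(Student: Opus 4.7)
The plan is to show that the derived series obtained by iteratively taking commutator subgroups must terminate at the identity, by comparing it term-by-term with an arbitrary abelian-quotient normal series, whose existence is guaranteed by the definition of solvability.

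First, I would set up the two sequences in parallel. By hypothesis, since $G$ is solvable, there exists some normal series
\begin{equation}
G = G_0 \vartriangleright G_1 \vartriangleright G_2 \vartriangleright \cdots \vartriangleright G_n = \{e\}
\end{equation}
with each quotient $G_i / G_{i+1}$ abelian. On the other side, define the candidate derived series recursively by $G^0 := G$ and $G^{i+1} := [G^i, G^i]$. It is standard (and I would briefly note) that the commutator subgroup is a characteristic subgroup, hence in particular $G^{i+1}$ is normal in $G^i$, and the quotient $G^i / G^{i+1}$ is abelian by construction. So the only nontrivial content of the proposition is that this chain actually reaches $\{e\}$ in finitely many steps.

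Next, I would prove by induction on $i$ that $G^i \subseteq G_i$ for every $i$. The base case $i=0$ is trivial equality. For the inductive step, assume $G^i \subseteq G_i$. Because $G_i / G_{i+1}$ is abelian, every commutator $[a,b]$ with $a,b \in G_i$ lies in $G_{i+1}$, i.e. $[G_i, G_i] \subseteq G_{i+1}$. Combined with the inductive hypothesis,
\begin{equation}
G^{i+1} = [G^i, G^i] \subseteq [G_i, G_i] \subseteq G_{i+1}.
\end{equation}
Applying this at $i=n$ yields $G^n \subseteq G_n = \{e\}$, so $G^n = \{e\}$. Taking $m$ to be the least such index gives the required derived series.

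I do not expect a serious obstacle here: the argument is entirely elementary once the inductive inclusion is identified, and the key observation is simply the characterization of $[H,H]$ as the smallest normal subgroup of $H$ with abelian quotient. The only mild subtlety worth mentioning in passing is that while the given $G_i$ need only be normal in $G_{i-1}$, the derived subgroups $G^i$ are in fact characteristic in $G$, so the derived series is automatically a chain of normal subgroups of $G$ itself — a slightly stronger conclusion than merely normal-in-the-previous-term, and one that is useful when this proposition is invoked in Appendix~\ref{N-step}.
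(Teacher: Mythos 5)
Your proof is correct and follows essentially the same route as the paper's: both hinge on the inductive inclusion $G^{i}\subseteq G_{i}$ (the paper's $G^k \subset G^{(k)}$), derived from the fact that $[H,H]$ is the smallest normal subgroup of $H$ with abelian quotient. The only difference is cosmetic — you conclude directly by evaluating at $i=n$, whereas the paper phrases the final step as a contradiction about the derived series stabilizing at a nontrivial perfect subgroup inside an abelian one.
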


\begin{proof}
Note that the commutator subgroup of a group $G$ is the smallest normal subgroup in $G$ such that $G/[G,G]$ is abelian. Hence we have $G^1\subset G^{(1)}$. Now we have $G^2=[G^1,G^1]\subset [G^{(1)},G^{(1)}]\subset G^{(2)}$. Inductively we can assume that $G^k\subset G^{(k)}$. Then $G^{k+1}=[G^k,G^k]\subset [G^{(k)},G^{(k)}]\subset G^{(k+1)}$. Hence $G^k\subset G^{(k)}$ $\forall k\in \{1,2,...,m\}$. This clearly says that the sequence of commutator groups doesn't terminate. If it would have terminated at $G^{k+1}$, then $G^{k+2}=[G^{k+1},G^{k+1}]=G^{k+1}$. One can repeat this to argue $G^{k+1}\subset G^{(k+1)}$, $G^{k+1}\subset G^{(k+2)}$, ... $G^{k+1}\subset G^{(n)}$. But $G^{k+1}$ is nonabelian group and $G^{(n)}$ is abelian. Hence we can't have $G^{k+1}\subset G^{(n)}$, contradiction. So the sequence of commutator subgroups doesn't terminate and we have the sequence.
\end{proof}
\begin{definition}
We say a finite group $G$ admits a sequential normal subgroups if it satisfies the following property:
\begin{align}
\begin{split}
    N_1\vartriangleleft G, \text{$N_1$ abelian}&\quad M_1=G/N_1,\\
    N_2\vartriangleleft M_1, \text{$N_2$ abelian}&\quad M_2=M_1/N_2,\\
    &\vdots\\
    N_n\vartriangleleft M_{n-1}, \text{$N_n$ abelian}&\quad M_n=M_{n-1}/N_n=e.
\end{split}
\end{align}
\end{definition}
\begin{claim}
Suppose the finite group $G$ is solvable then it admits sequential normal subgroups.
\end{claim}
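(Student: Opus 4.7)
The plan is to argue by induction on the derived length $m$ of the solvable group $G$. Proposition~\ref{prop} above guarantees that the derived series built from iterated commutator subgroups
\begin{equation*}
    G = G^0 \vartriangleright G^1 \vartriangleright G^2 \vartriangleright \cdots \vartriangleright G^{m} = e
\end{equation*}
terminates, with $G^{i+1}=[G^i,G^i]$ and each quotient $G^i/G^{i+1}$ abelian. The key structural fact I will use is that every $G^i$ is a \emph{characteristic} subgroup of $G$ (commutator subgroups are preserved by all automorphisms, and this property is preserved under iteration), hence in particular normal in $G$.

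For the base case $m=1$, the group $G$ is abelian, so I simply take $N_1=G$ and $M_1=e$; the defining chain of sequential normal abelian subgroups has a single step.

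For the inductive step, suppose the claim holds for every solvable group of derived length strictly less than $m$, and let $G$ have derived length $m\ge 2$. I would set $N_1 := G^{m-1}$. Two things need checking: first, that $N_1$ is abelian, which is immediate since $[N_1,N_1]=[G^{m-1},G^{m-1}]=G^m=e$; second, that $N_1$ is normal in $G$, which follows because $G^{m-1}$ is characteristic in $G$. Therefore $M_1 := G/N_1$ is a well-defined group. Under the quotient map $\pi_1\colon G\to M_1$, the image of the derived series of $G$ is a derived series for $M_1$ of length $m-1$: more precisely, $(M_1)^i = \pi_1(G^i)$, and in particular $(M_1)^{m-1}=\pi_1(G^{m-1})=e$. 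So $M_1$ is solvable with strictly smaller derived length. By the inductive hypothesis, $M_1$ admits a sequential chain of normal abelian subgroups $N_2\vartriangleleft M_1$, $N_3\vartriangleleft M_1/N_2$, and so on, terminating at the trivial group. Concatenating $N_1$ with this chain produces the required sequential normal abelian filtration for $G$.

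The only potentially delicate point is ensuring that the derived length genuinely drops by one in the quotient (otherwise the induction would not close), and this is what the identity $(M_1)^i=\pi_1(G^i)$ secures: it is a standard fact that $\pi_1([H,H])=[\pi_1(H),\pi_1(H)]$ for any subgroup $H\le G$, which iterates to give $(G/N_1)^i = G^i N_1/N_1$; with $N_1=G^{m-1}$ this yields $(M_1)^{m-1}=e$ as required. I expect this to be the main place requiring care in writing up, though it is really a routine homomorphism property rather than a substantive obstacle. Nothing else in the argument uses anything beyond normality of iterated commutators and the standard behavior of derived subgroups under quotients.
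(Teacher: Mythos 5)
Your proof is correct and follows essentially the same route as the paper: both build the sequential chain from the bottom of the derived series of iterated commutator subgroups, using that each $G^i$ is normal in $G$ and that successive derived quotients are abelian, so that unwinding your induction reproduces exactly the chain $N_1=G^{m-1}$, $N_2=G^{m-2}/G^{m-1}$, \dots\ that the paper writes out explicitly. The only differences are presentational: you organize the argument as an induction on derived length and justify normality via the characteristic-subgroup property, whereas the paper lists the whole chain at once and proves $G^k\vartriangleleft G$ by a separate induction.
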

\begin{proof}
From proposition \ref{prop} we see that $G$ admits the sequence
\begin{align}
    G\vartriangleright G^1\vartriangleright G^2\vartriangleright...\vartriangleright G^{m}\vartriangleright e.
\end{align}
where $G^{i+1}=[G^i,G^i]$ is the commutator subgroup. First we prove that $G^k\vartriangleleft G$ $\forall k\in \{1,2,...,m\}$. This we prove by induction on $k$. Clearly, $G^1\vartriangleleft G$. Assuming $G^{k-1}\vartriangleleft G$, we need to prove $G^k\vartriangleleft G$. $G^k=[G^{k-1},G^{k-1}]$. Hence, $G^k$ is generated by elements of the form $ghg^{-1}h^{-1}$ where $g,h\in G^{k-1}$. Now one can write $kghg^{-1}h^{-1}k^{-1}=kgk^{-1}khk^{-1}kg^{-1}k^{-1}kh^{-1}k^{-1}$. Since $G^{k-1}\vartriangleleft G$, $kgk^{-1}\in G^k$ $\forall g\in G^k$ and $k\in G$. So $kgk^{-1}khk^{-1}kg^{-1}k^{-1}kh^{-1}k^{-1}\in [G^{k-1},G^{k-1}]=G^k$. So we see that $G^k\vartriangleleft G$.
\par Now we choose
\begin{align}
\begin{split}
     N_1=G^m\vartriangleleft G, \text{$N_1$ abelian}&\quad M_1=G/N_1,\\
    N_2=G^{m-1}/G^m\vartriangleleft G/G^m, \text{$N_2$ abelian}&\quad M_2=M_1/N_2,\\
    &\vdots\\
    N_{m+1}=G/G^1,  \text{$N_{m+1}$ abelian}&\quad M_{m+1}=M_m/N_{m+1}=e.
\end{split}
\end{align}
\end{proof}
\begin{claim}
If the finite group $G$ admits a sequential normal subgroups then $G$ is solvable.
\end{claim}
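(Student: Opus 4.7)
The plan is to lift the sequential normal subgroup data, which lives in a tower of quotients, back into $G$ itself via the correspondence (fourth isomorphism) theorem, producing a subnormal series of $G$ with abelian factors. From such a series, the derived series of $G$ is easily bounded step by step, and must terminate in $\{e\}$.

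Concretely, let $\rho_k : M_{k-1} \to M_k = M_{k-1}/N_k$ be the canonical projection at step $k$, and set $\pi_k := \rho_k \circ \rho_{k-1} \circ \cdots \circ \rho_1 : G \to M_k$ (with $\pi_0 = \mathrm{id}_G$). Define $K_k := \ker(\pi_k)$, so that $K_0 = \{e\}$ and $K_n = G$. Each $K_k$ is normal in $G$ as the kernel of a homomorphism, hence in particular $K_k \triangleleft K_{k+1}$. Moreover $K_{k+1} = \pi_k^{-1}(N_{k+1})$, and the homomorphism $\pi_k$ restricted to $K_{k+1}$ induces an isomorphism
\begin{equation}
    K_{k+1}/K_k \;\cong\; N_{k+1},
\end{equation}
which is abelian by hypothesis. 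Thus
\begin{equation}
    \{e\} = K_0 \triangleleft K_1 \triangleleft K_2 \triangleleft \cdots \triangleleft K_n = G
\end{equation}
is a subnormal series of $G$ with abelian successive quotients.

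To conclude solvability, I will show by induction on $i$ that the derived subgroups satisfy $G^{(i)} \subseteq K_{n-i}$. The base case $G^{(0)} = G = K_n$ is trivial. For the inductive step, assume $G^{(i)} \subseteq K_{n-i}$. Since $K_{n-i}/K_{n-i-1}$ is abelian, its commutator subgroup is trivial, which means $[K_{n-i}, K_{n-i}] \subseteq K_{n-i-1}$. Hence
\begin{equation}
    G^{(i+1)} \;=\; [G^{(i)}, G^{(i)}] \;\subseteq\; [K_{n-i}, K_{n-i}] \;\subseteq\; K_{n-i-1}.
\end{equation}
Taking $i = n$ gives $G^{(n)} \subseteq K_0 = \{e\}$, so the derived series of $G$ terminates, and $G$ is solvable by definition.

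There is essentially no hard step here: the only thing to be careful about is that the $N_k$ live in successive quotients rather than in $G$, so one must invoke the correspondence theorem to identify $K_{k+1}/K_k$ with $N_{k+1}$ and guarantee normality of $K_k$ inside $G$ (not merely inside $K_{k+1}$). Given the previous claim already established the converse direction, the two together yield the desired equivalence between solvability and admitting sequential normal abelian subgroups, justifying the alternative $N$-step gauging procedure introduced in Appendix~\ref{N-step}.
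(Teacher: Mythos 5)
Your proof is correct, and it takes a genuinely different route from the paper's. The paper argues by contradiction: assuming $G$ is not solvable, its commutator series stabilizes at a nontrivial perfect subgroup $G^k$, and the authors then push this terminating series through each successive quotient $M^l$ (computing $[G/N_1,G/N_1]\cong G^1/(N_1\cap G^1)$ and iterating) until the final abelian quotient $M^{n-1}=N_n$ yields a contradiction. You instead work directly: you pull the $N_{k+1}$ back along the composite projections $\pi_k$ to kernels $K_k=\ker(\pi_k)$, obtaining a series $\{e\}=K_0\triangleleft K_1\triangleleft\cdots\triangleleft K_n=G$ of subgroups normal in $G$ with $K_{k+1}/K_k\cong N_{k+1}$ abelian, and then bound the derived series by the standard induction $G^{(i)}\subseteq K_{n-i}$. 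Your pullback construction is essentially the inverse of the paper's Claim I.1 (which builds the extensions $G_l$ inside $G$ by hand and verifies they are normal subgroups), but obtained for free from the correspondence theorem. What your approach buys is economy and rigor: it avoids the contradiction framework entirely and sidesteps the point the paper leaves implicit, namely why the image of the perfect core $G^k$ remains nontrivial in each successive quotient; indeed, once you have the series $\{K_k\}$ with abelian factors, you could even stop there, since it already satisfies the paper's definition of a derived series. The paper's approach, by contrast, stays closer to the commutator-subgroup machinery it set up in Proposition G.1, at the cost of a longer and less transparent argument.
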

\begin{proof}
Suppose $G$ is not solvable, assuming it admits sequential normal subgroups. Then the derived series of commutator subgroup terminate
\begin{align}
    G\vartriangleright G^1\vartriangleright G^2\vartriangleright ...\vartriangleright G^k
\end{align}
for some $k$, where $G^{i+1}=[G^i,G^i]$.
Now one could consider the series
\begin{align}
\begin{split}
    M^1\vartriangleright &M^{1(1)}\vartriangleright M^{1(2)}\vartriangleright ... \\
    M^2\vartriangleright &M^{2(1)}\vartriangleright M^{2(2)}\vartriangleright ... \\
    &\vdots\\
    M^l\vartriangleright &M^{l(1)}\vartriangleright M^{l(2)}\vartriangleright ... \\
    M^{n-1}\vartriangleright &M^{(n-1)(1)}\vartriangleright M^{(n-1)(2)}\vartriangleright ... 
\end{split}
\end{align}
where $M^{j(l+1)}=[M^{j(l)},M^{j(l)}]$ is the commutator subgroup and $M^{j(0)}\equiv M^j$. Now let us look at the following proposition.
\begin{proposition}
If the derived series of commutator subgroups of $G$ terminates then so does for $M^1$, $M^2$,..., $M^{n-1}$.
\end{proposition}
\begin{proof}
Consider $[M^1,M^1]=[G/N_1,G/N_1]$. It is generated by $gN_1 g'N_1g^{-1}N_1g'^{-1}N_1=gg'g^{-1}g'^{-1}N_1$. We know that $gg'g^{-1}g'^{-1}\in G^1$. However, $gN_1=N_1$ if and only if $g\in N_1\cap G^1$. Hence, we find $[G/N_1,G/N_1]\cong G^1/(N_1\cap G^1)$. Repeating this we find $\left[G^1/(N_1\cap G^1),G^1/(N_1\cap G^1)\right]=G^2/(N_1\cap G^2)$ and so on. Hence, the derived series for commutator subgroups for $M^1$ is given by
\begin{align}
    G/N_1\vartriangleright G^1/(N_1\cap G^1)\vartriangleright G^2/(N_1\cap G^2)\vartriangleright ... \vartriangleright G^k/(N_1\cap G^k).
\end{align}
This terminates since $[G^k,G^k]=G^k$ and hence $\left[G^k/(N_1\cap G^k),G^k/(N_1\cap G^k)\right]=G^k/(N_1\cap G^k)$. A similar argument shows that all other derived series terminates.
\end{proof}
Now we have the following terminating derived series
\begin{align}
\begin{split}
    G\vartriangleright G^1\vartriangleright ...\vartriangleright G^k,\\
    G/N_1\vartriangleright G^1/(N_1\cap G^1)\vartriangleright &... \vartriangleright G^k/(N_1\cap G^k),\\
    M^1/N_2\vartriangleright M^{1(1)}/(N_2\cap M^{1(1)})\vartriangleright &... \vartriangleright M^{1(k)}/(N_2\cap M^{1(k)}),\\
    &\vdots\\
    M^l/N_{l+1}\vartriangleright M^{l(1)}/(N_{l+1}\cap M^{l(1)})\vartriangleright &... \vartriangleright M^{l(k)}/(N_{l+1}\cap M^{l(k)}),\\
    &\vdots\\
    M^{n-2}/N_{n-1}\vartriangleright M^{(n-2)(1)}/(N_{n-1}\cap M^{(n-2)(1)})\vartriangleright &... \vartriangleright M^{(n-2)(k)}/(N_{n-1}\cap M^{(n-2)(k)}).
\end{split}
\end{align}
Since the last series is $M^{n-1}=N^n\vartriangleright e$, it terminates in length 1. This is a contradiction. Hence $G$ is solvable.

\end{proof}

\end{widetext}
\end{document}